\newcommand{\highlight}[1]{%
  \colorbox{gray!50}{$\displaystyle#1$}}
\lstdefinelanguage{Scala}%
{morekeywords={abstract,case,catch,char,class,%
    def,else,extends,final,%
    if,import,%
    match,module,new,null,object,override,package,private,protected,%
    public,return,super,this,throw,trait,try,type,val,var,with,implicit,%
    macro,sealed,%
  },%
  sensitive,%
  morecomment=[l]//,%
  morecomment=[s]{/*}{*/},%
  morestring=[b]",%
  morestring=[b]',%
  showstringspaces=false%
}[keywords,comments,strings]%
\newcommand{\comment}[1]{}
\newcommand{\ie}{{\em i.e.,~}}
\newcommand{\eg}{{\em e.g.,~}}
\newcommand{\ok}{{~\textbf{ok}}}
\newcommand{\this}{{\texttt{this}}}
\newcommand{\ocap}{{\texttt{ocap}}}
\newcommand*\stars{\includegraphics[height=0.8em,keepaspectratio]{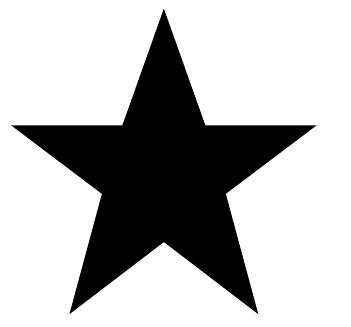}}
\newcommand*\contribs{\includegraphics[height=0.8em,keepaspectratio]{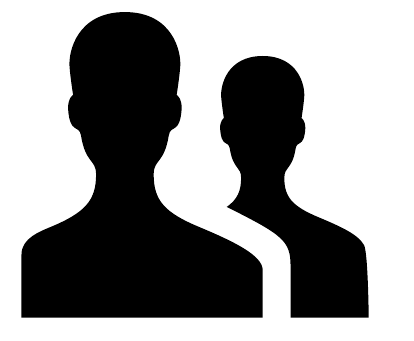}}
\newcommand{\gap}{\quad\quad}
\newcommand{\ba}{\begin{array}}
\newcommand{\ea}{\end{array}}
\newcommand{\ei}{\end{array}}
\newcommand{\bcases}{\left\{\begin{array}{ll}}
\newcommand{\ecases}{\end{array}\right.}
\newcommand{\sub}{<:}
\newcommand{\typ}{:}
\newcommand{\tfun}{\rightarrow}
\newcommand{\seq}[1]{\overline{#1}}
\newcommand{\aframe}[3]{\langle #1, #2 \rangle ^{#3}}
\newcommand{\pframe}[4]{\langle #1, #2, #3 \rangle ^{#4}}
\newcommand{\obj}[2]{\langle #1, #2 \rangle}
\newcommand{\proc}[3]{\langle #1, #2, #3 \rangle}
\newcommand{\fsreduce}[4]{#1, #2 \;\twoheadrightarrow\; #3, #4}
\newcommand{\fsreducebreak}[4]{#1, #2 \\ \;\twoheadrightarrow\; #3, #4}
\newcommand{\freduce}[4]{#1, #2 \;\longrightarrow\; #3, #4}
\newcommand{\freducebreak}[4]{#1, #2 \\ \;\longrightarrow\; #3, #4}
\newcommand{\preduce}[4]{#1, #2 \;\leadsto\; #3, #4}
\newcommand{\lacasa}{\textsc{LaCasa}}
\newcommand{\CLCONE}{\textsc{CLC}$^{1}$}
\newcommand{\CLCTWO}{\textsc{CLC}$^{2}$}
\newcommand{\CLCTHR}{\textsc{CLC}$^{3}$}
\newcommand{\tri}{\triangleright}
\newcommand{\proves}{\vdash}
\newcommand{\BoxT}[1]{\texttt{Box}[#1]}
\newcommand{\BoxV}[1]{\texttt{box}[#1]}
\newcommand{\GuaT}[2]{#1 \tri \texttt{Box}[#2]}
\newcommand{\Perm}[1]{\texttt{Perm}[#1]}
\newcommand{\ProcT}[1]{\texttt{Proc}[#1]}
\theoremstyle{plain}
\newtheorem{theorem}{Theorem}
\newtheorem{lemma}{Lemma}
\newtheorem{corollary}{Corollary}
\newtheorem*{theoremnonum}{Theorem}
\theoremstyle{definition}
\newtheorem{definition}{Definition}
\theoremstyle{remark}
\def\@copyrightspace{\relax}
\begin{document}

\copyrightdata{978-1-nnnn-nnnn-n/yy/mm} 




\title{Object Capabilities and Lightweight Affinity in Scala}
\subtitle{Implementation, Formalization, and Soundness}

\authorinfo{Philipp Haller}
           {KTH Royal Institute of Technology, Sweden}
           {phaller@kth.se}
\authorinfo{Alexandre Loiko}
           {Google Stockholm, Sweden\titlenote{Work done while at KTH.}}
           {aleloi@google.com}

\maketitle

\begin{abstract}
Aliasing is a known source of challenges in the context of imperative
object-oriented languages, which have led to important advances in
type systems for aliasing control. However, their large-scale adoption
has turned out to be a surprisingly difficult challenge. While new
language designs show promise, they do not address the need of
aliasing control in existing languages.

This paper presents a new approach to isolation and uniqueness in an
existing, widely-used language, Scala. The approach is unique in the
way it addresses some of the most important obstacles to the adoption
of type system extensions for aliasing control. First, adaptation of
existing code requires only a minimal set of annotations. Only a
single bit of information is required per class. Surprisingly, the
paper shows that this information can be provided by the
object-capability discipline, widely-used in program security. We
formalize our approach as a type system and prove key soundness
theorems. The type system is implemented for the full Scala language,
providing, for the first time, a sound integration with Scala's local
type inference. Finally, we empirically evaluate the conformity of
existing Scala open-source code on a corpus of over 75,000 LOC.
\end{abstract}

\category{D.3.3}{Programming Languages}{Language Constructs and Features}
\category{F.3.3}{Logics and Meanings of Programs}{Studies of Program Constructs}


\keywords
Object capability model, uniqueness, implicits, Scala

\section{Introduction}

Uncontrolled aliasing in imperative object-oriented languages
introduces a variety of challenges in large-scale software
development. Among others, aliasing can increase the difficulty of
reasoning about program behavior and software
architecture~\cite{AldrichKC02}, and it can introduce data races in
concurrent programs. These observations have informed the development
of a number of type disciplines aimed at providing static aliasing
properties, such as linear
types~\cite{wadler90,Odersky92,FahndrichD02}, region
inference~\cite{TofteT94,TofteT97}, unique
references~\cite{Hogg91,Minsky96,Boyland01,Clarke03}, and ownership
types~\cite{NobleVP98,ClarkePN98}.

While there have been important advances in the flexibility and
expressiveness of type systems for aliasing control, large-scale
adoption has been shown to be a much greater challenge than
anticipated. Recent efforts in the context of new language designs
like Rust~\cite{Anderson15} are promising, but they do not address the
increasing need for aliasing control in existing, widely-used
languages.

One of the most important obstacles to the adoption of a type system
extension in a widely-used language with a large ecosystem is the
adaptation of existing code, including third-party
libraries. Typically, adaptation consists of adding (type) annotations
required by the type system extension. With a large ecosystem of
existing libraries, this may be prohibitively expensive even for
simple annotations. A second, and almost equally critical obstacle is
robust support for the entirety of an existing language's type system
in a way that satisfies requirements for backward compatibility.

This paper presents a new approach to integrating a flexible type
system for isolation and uniqueness into an existing, full-featured
language, Scala. Our approach minimizes the annotations necessary for
reusing existing code in a context where isolation and uniqueness is
required. In the presented system, a {\em single bit of information}
is enough to decide whether an existing class supports isolation and
uniqueness. A key insight of our approach is that this single bit of
information is provided by the {\em object-capability
  discipline}~\cite{Dennis66,Mill06a}. The object capability model is
an established methodology in the context of program security, and has
been proven in large-scale industrial use for secure sandboxing of
JavaScript applications~\cite{miller2008safe,adsafe,Politz15}.

\comment{
Object
capabilities have also been used in operating systems to enforce
ownership and isolation of resources~\cite{elkaduwe2008verified}.}

This paper makes the following contributions:

\begin{itemize}

\item We present a new approach to separation and uniqueness which
  aims to minimize the annotations necessary to reuse existing code
  (Section~\ref{sec:overview}).  In our system, reusability is based on
  the {\em object capability model}. Thus, when annotating existing
  code bases, only a single bit of information is required per class.

\item We formalize our approach in the context of two object-oriented
  core languages (Section~\ref{sec:formalization}). The first core
  language formalizes a type-based notion of object capabilities. The
  second core language additionally provides external uniqueness via
  flow-insensitive permissions.

\item We provide complete soundness proofs, formally establishing heap
  separation and uniqueness invariants for our two core languages
  (Section~\ref{sec:soundness}). We have also mechanized the operational
  semantics and type system of the first core language in Coq (Section~\ref{sec:coq}).

\item We implement our approach for the full Scala language as a
  compiler plugin (Section~\ref{sec:impl}). To our knowledge, our
  implementation of (external) uniqueness is the first to integrate
  soundly with local type inference in Scala.  Moreover, the
  implementation leverages a unique combination of previous proposals
  for (a) implicit parameters~\cite{Oliveira10,Oliveira12}, and (b)
  closures with capture control~\cite{Epstein11,Miller14}.

\item We empirically evaluate the conformity of existing Scala classes
  to the object capability model on a corpus of over 75,000 LOC of
  popular open-source projects
  (Section~\ref{sec:empirical-eval}). Results show that between 21\% and
  79\% of the classes of a project adhere to a strict object
  capability discipline.
\end{itemize}

\comment{
The complete formal semantics and soundness proof can be found in the
appendix. (To enable double-blind reviewing, a technical report with
all details is forthcoming.)}

In the following we discuss the most closely related work, and defer a
discussion of other related work to Section~\ref{sec:other-related}. In
Section~\ref{sec:conclusion} we conclude.

\paragraph{Selected Related Work.}

Most closely related to our system are approaches based on permissions
or capabilities. Of particular relevance are Haller and Odersky's
capabilities for uniqueness~\cite{Haller10} in Scala
(``Cap4S''). While their work shares our high-level goal of
lightweight unique references in Scala, the two approaches are
significantly different, with important consequences concerning
soundness, robustness, and compatibility.  First, Cap4S is based on
flow-sensitive capabilities which are modeled using Scala's
annotations, similar to the use of extended type annotations in Java 8
for pluggable type systems~\cite{Dietl11}.  However, the interaction
between Scala's local type inference~\cite{Odersky01} and annotation
propagation has been shown to be a source of unsoundness and
implementation complexities for such pluggable type
systems~\cite{Rytz13}; these challenges are exacerbated in
flow-sensitive type systems. In contrast, \lacasa~models capabilities
using Scala's implicits~\cite{Oliveira10}, an intrinsic part of type
inference in Scala. In addition, foundations of implicits have been
studied~\cite{Oliveira12}, whereas Scala's annotations remain poorly
understood. Second, \lacasa~fundamentally simplifies type checking: as
long as a class conforms to the object-capability model, \lacasa's
constructs enable isolation and uniqueness for instances of the class.
This has two important consequences: (a) a minimal set of additional
annotations (a single bit of information per class) enables reusing
existing code, and (b) type checking reusable class declarations is
simple and well-understood, following the object-capability
discipline, which we adapt for Scala.

\comment{
the annotation of existing code becomes trivial:
for each class a single bit of information

problematic, leading to unsoundness of the extended
type system and

extended type annotations of Java
8 used for pluggable type systems in the Checker
Framework~\cite{Dietl11}.

prior work by Haller and
Odersky~\cite{Haller10} on an annotation-based type system extension
for uniqueness in Scala. While the high-level goal is similar, namely,
lightweight unique references in Scala, there are major differences:
first, the interaction between local type inference~\cite{Odersky01}
and annotation propagation in Scala have been shown to be problematic,
leading to soundness issues, implementation complexities, or
both~\cite{Rytz13}. Second, while the proposed annotation system is
fairly lightweight, \lacasa~provides a simpler way to integrate
existing code: if a class follows the object-capability discipline
\lacasa~provides isolation and uniqueness for its instances.
}

\comment{
annotation-based approaches in Scala

es in Scala
suffer from 

an annotation-based approach }

\section{A Brief Overview of \lacasa}\label{sec:overview}

We proceed with an informal overview of \lacasa: its programming model
in Scala, and its type system.

\paragraph{A First Example.}

\begin{figure}
\begin{lstlisting}
class ActorA extends Actor[Any] {
  override def receive(msg: Any): Unit = msg match {
    case s: Start =>
      val newMsg = new Message
      newMsg.arr = Array(1, 2, 3, 4)
      s.next.send(newMsg)
      newMsg.arr(2) = 33
      // ...
    case other => // ...
  }
}
class ActorB extends Actor[Message] {
  override def receive(msg: Message): Unit = {
    println(msg.arr.mkString(","))
  }
}
class Message {
  var arr: Array[Int] = _
  def leak(): Unit = {
    SomeObject.fld = arr
  }
}
object SomeObject {
  var fld: Array[Int] = _
}
class Start {
  var next: ActorRef[Message] = _
}
\end{lstlisting}
\caption{Two communicating actors in Scala.}
\label{fig:actors}
\end{figure}

Consider the case of asynchronous communication between two concurrent
processes. This style of concurrency is well-supported by the actor
model~\cite{Agha86,hewitt77} for which multiple implementations exist
for Scala~\cite{Haller09,Akka,Haller12}. Figure~\ref{fig:actors} shows
the definition of two actor classes.\footnote{In favor of clarity of explanation, Figure~\ref{fig:actors} shows hypothetical Scala code which requires slight changes for compilation with the Akka~\cite{Akka} library.} The behavior of each actor is
implemented by overriding the \verb|receive| method inherited from
superclass \verb|Actor|. The \verb|receive| method is invoked by the
actor runtime system whenever an actor is ready to process an incoming
message. In the example, whenever \verb|ActorA| has received an instance
of class \texttt{Start}, it creates an instance of class \verb|Message|,
initializes the instance with an integer array, and sends the instance
to the \verb|next| actor.

Note that field \texttt{next} of class \texttt{Start} has type
\\ \texttt{ActorRef[Message]} (line 27) instead of
\texttt{Actor[Message]}. An \texttt{ActorRef} serves as an immutable
and serializable handle to an actor. The public interface of
\texttt{ActorRef} is minimal; its only purpose is to provide methods
for \\ asynchronously sending messages to the \texttt{ActorRef}'s
\\ underlying actor (an instance of a subclass of \texttt{Actor}).  The
purpose of \texttt{ActorRef} as a type separate from \texttt{Actor} is
to provide a fault handling model similar to
Erlang~\cite{Arms96a}.\footnote{Scala's original actor
  implementation~\cite{Haller09} only provided an \texttt{Actor} type;
  the distinction between \texttt{Actor} and \texttt{ActorRef} was
  introduced with the adoption of Akka as Scala's standard
  actor implementation.}  In this model, a faulty actor may be
\emph{restarted} in a way where its underlying \texttt{Actor} instance
is replaced with a new instance of the same
class. Importantly, any \texttt{ActorRef} referring to the actor that
is being restarted switches to using the new \texttt{Actor} instance
in a way that is transparent to clients (which only depend on
\texttt{ActorRef}s). This enables introducing fault-handling logic in a
modular way (cf. Erlang's OTP library~\cite{ErlangOTP}).

The shown program suffers from multiple
safety hazards: first, within the \verb|leak| method (line 19), the
array of the current \verb|Message| instance is stored in the global
singleton object \verb|SomeObject| (line 20); thus, subsequently,
multiple actors could access the array through \verb|SomeObject|
concurrently. Second, after sending \verb|newMsg| to \verb|ActorB|
(line 6), \verb|ActorA| mutates the array contained in \verb|newMsg|
(line 7); this could lead to a data race, since \verb|ActorB| may be
accessing \verb|newMsg.arr| at this point.

\begin{figure}
\begin{lstlisting}
class ActorA extends Actor[Any] {
  override def receive(box: Box[Any])
      (implicit acc: CanAccess { type C = box.C }) {
    box.open({
      case s: Start =>
        mkBox[Message] { packed =>
          val access = packed.access
          packed.box.open({ msg =>
            msg.arr = Array(1, 2, 3, 4)
          })$\highlight{(access)}$
          s.next.send(packed.box)({
            // ...
          })$\highlight{(access)}$
        }
      case other => // ...
    })$\highlight{(acc)}$
  }
}
class ActorB extends Actor[Message] {
  override def receive(box: Box[Message])
      (implicit acc: CanAccess { type C = box.C }) {
    box.open({ msg =>
      println(msg.arr.mkString(","))
    })$\highlight{(acc)}$
  }
}
\end{lstlisting}
\caption{Two communicating actors in \lacasa.}
\label{fig:actors-lacasa}
\end{figure}

\lacasa~prevents the two safety hazards of the example using two
complementary mechanisms: {\em object capabilities} and {\em affine
  access permissions}. Figure~\ref{fig:actors-lacasa} shows the same
example written in \lacasa.\footnote{This example is also included in
  the \lacasa~open-source project available at:
  \url{https://github.com/phaller/lacasa/}} \lacasa~introduces two
main changes to the \texttt{Actor} and \texttt{ActorRef} library
classes:
\begin{enumerate}
  \item Actors send and receive \emph{boxes} of type \texttt{Box[T]},
    rather than direct object references. As explained in the
    following, \lacasa's type system enforces strong encapsulation
    properties for boxes.
  \item The type of the \texttt{receive} method is changed to
    additionally include an implicit permission parameter. (We explain
    implicit permissions in detail below.)
\end{enumerate}
\noindent
Due to these changes, \lacasa~provides its own versions of the
\texttt{Actor} and \texttt{ActorRef} library classes;
Figure~\ref{fig:lacasa-actor-classes} shows the main
declarations.\footnote{In ongoing work we are developing adapter
  classes to conveniently integrate \lacasa~and the Akka actor
  library.}

\begin{figure}
\begin{lstlisting}
package lacasa

abstract class Actor[T] {
  def receive(msg: Box[T])
        (implicit acc: CanAccess { type C = msg.C })
         : Unit
  final val self: ActorRef[T] = ...
  ...
}
abstract class ActorRef[T] {
  def send(msg: Box[T])
        (cont: NullarySpore[Unit] {
                 type Excluded = msg.C
               })
        (implicit acc: CanAccess { type C = msg.C })
         : Nothing
  ...
}
\end{lstlisting}
\caption{\lacasa's \texttt{Actor} and \texttt{ActorRef} classes.}
\label{fig:lacasa-actor-classes}
\end{figure}

\paragraph{Boxes.} A box of type \verb|Box[T]| encapsulates a
reference to an object of type \verb|T|. However, this reference is
only accessible using an \texttt{open} method:
\verb|box.open({ x => ... })|; here, \verb|x| is an alias of the
encapsulated reference. For example, on line 22 \verb|ActorB| opens
the received \verb|box| in order to print the array of the
\verb|Message| instance. Note that the use of \texttt{open} on lines
4--16 relies on Scala's syntax for \emph{partial functions:} a block
of \texttt{case} clauses
\begin{lstlisting}
{
  case $pat_1$ => $e_1$
    ...
  case $pat_n$ => $e_n$
}
\end{lstlisting}
\noindent
creates a partial function with the same run-time semantics as the
function
\begin{lstlisting}
x => x match {
  case $pat_1$ => $e_1$
    ...
  case $pat_n$ => $e_n$
}
\end{lstlisting}

In combination with \lacasa's type system, boxes enforce constraints
that directly prevent the first safety hazard in the previous
example. Boxes may only encapsulate instances whose classes follow the
{\em object-capability discipline}. Roughly speaking, the
object-capability discipline prevents an object \verb|obj| from
obtaining references that were not explicitly passed to \verb|obj| via
constructor or method calls; in particular, it is illegal for
\verb|obj| to access shared, global singleton objects like
\verb|SomeObject|. As a result, the problematic leak (line 20 in
Figure~\ref{fig:actors}) causes a compilation error.

\paragraph{Capture Control.}

In general, the requirement of boxes to encapsulate object-capability
safe classes is not sufficient to ensure isolation, as the following
example illustrates:

\begin{lstlisting}
  // box: Box[Message]
  var a: Array[Int] = null
  box.open({ msg =>
    a = msg.arr
    SomeObject.fld = msg.arr
  })(acc)
  next.send(box)({
    a(2) = 33
  })(acc)
\end{lstlisting}
\noindent
In this case, by capturing variable \verb|a| in the body of
\verb|open|, and by making \verb|a| an alias of the array in
\verb|msg|, it would be possible to access the array even after
sending it (inside \verb|msg| inside \verb|box|) to \verb|next|.  To
prevent such problematic leaks, the body of \verb|open| is not allowed
to capture {\em anything} (\ie it must not have free variables).
Furthermore, the body of \verb|open| is not allowed to access global
singleton objects.  Thus, both the access to \verb|a| on line 4 and
the access to \verb|SomeObject| on line 5 cause compilation errors.
Finally, the body of \verb|open| may only create instances of
object-capability safe classes to prevent indirect leaks such as on
line 20 in Figure~\ref{fig:actors}.

The second safety hazard illustrated in Figure~\ref{fig:actors},
namely accessing a box that has been transferred, is prevented using a
combination of boxes, capture control, and access permissions, which
we discuss next.

\paragraph{Access Permissions.}

A box can only be accessed (\eg using \verb|open|) at points in the
program where its corresponding {\em access permission} is in scope.
Box operations take an extra argument which is the permission required
for accessing the corresponding box.  For example, the \texttt{open}
invocation on lines 22--24 in Figure~\ref{fig:actors-lacasa} takes the
\texttt{acc} permission as an argument (highlighted) in addition to
the closure.  Note that \texttt{acc} is passed within a separate
argument list.  The main reason for using an additional \emph{argument
  list} instead of just an additional \emph{argument} is the use of
implicits to reduce the syntactic overhead (see below).

The static types of access permissions are essential for alias
tracking. Importantly, the static types ensure that an access
permission is only compatible with a single \verb|Box[T]| instance.
For example, the \verb|acc| parameter in line 3 has type
\verb|CanAccess { type C = box.C }| where \verb|box| is a parameter of
type \verb|Box[Any]|. Thus, the {\em type member} \verb|C| of the
permission's type is equal to the type member \verb|C| of \verb|box|.

In Scala, \verb|box.C| is a {\em path-dependent
  type}~\cite{Amin14,Amin16}; \verb|box.C| is equivalent to the type
\verb|box.type#C| which selects type \verb|C| from the {\em singleton
  type} \verb|box.type|.  The type \verb|box.type| is only compatible
with singleton types \verb|x.type| where the type checker can prove
that \verb|x| and \verb|box| are always aliases.  (Thus, in a type
\verb|box.type|, \verb|box| may not be re-assignable.) Access
permissions in \lacasa~leverage this aliasing property of singleton
types: since it is impossible to create a box \verb|b| such that
\verb|b.C| is equal to the type member \verb|C| of an existing box, it
follows that an access permission is only compatible with at most one
instance of \verb|Box[T]|.

The only way to create an access permission is by creating a box using
an \verb|mkBox| expression. For example, the \verb|mkBox| expression
on line 6 in Figure~\ref{fig:actors-lacasa} creates a box of type
\verb|Box[Message]| as well as an access permission. Besides a type
argument, \verb|mkBox| also receives a {\em closure} of the form
\verb|{ packed => ... }|. The closure's \verb|packed| parameter
encapsulates both the box and the access permission, since both need
to be available in the scope of the closure.

Certain operations {\em consume} access permissions, causing
associated boxes to become unavailable. For example, the message send
on line 11 consumes the access permission of \verb|packed.box| to
prevent concurrent accesses from the sender and the receiver. As a
result, \verb|packed.box| is no longer accessible in the continuation
of \verb|send|.

Note that permissions in \lacasa~are {\em flow-insensitive} by
design. Therefore, the only way to change the set of available
permissions is by entering scopes that prevent access to consumed
permissions. In \lacasa, this is realized using \emph{continuation
  closures:} each operation that changes the set of available
permissions also takes a closure that is the continuation of the
current computation; the changed set of permissions is only visible in
the continuation closure.  Furthermore, by \emph{discarding the call
  stack} following the execution of a continuation closure,
\lacasa~enforces that scopes where consumed permissions are visible
(and therefore ``accessible'') are never re-entered. The following
\lacasa~operations discard the call stack: \texttt{mkBox},
\texttt{send}, \texttt{swap} (see below). In contrast, \texttt{open}
does not discard the call stack, since it does not change the set of
permissions.

In the example, the \verb|send| operation takes a {\em continuation closure}
(line 11--13) which prevents access to the permission of
\verb|packed.box|; furthermore, the call stack is discarded, making
the code from line 14 unreachable.

\paragraph{Implicit Permissions.}

To make sure access permissions do not have to be explicitly threaded
through the program, they are modeled using {\em
  implicits}~\cite{Oliveira10,Oliveira12}. For example, consider the
\verb|receive| method in line 2--3. In addition to its regular
\verb|box| parameter, the method has an \verb|acc| parameter which is
marked as \verb|implicit|. This means at invocation sites of
\verb|receive| the argument passed to the implicit parameter is {\em
  inferred} (or resolved) by the type checker. Importantly, implicit
resolution fails if no type-compatible implicit value is in scope, or
if multiple ambiguous type-compatible implicit values are in
scope.\footnote{See the Scala language specification~\cite{Odersky14}
  for details of implicit resolution.} The benefit of marking
\verb|acc| as \verb|implicit| is that within the body of
\verb|receive|, \verb|acc| does not have to be passed explicitly to
methods requiring access permissions, including \lacasa~expressions
like \verb|open|. Figure~\ref{fig:actors-lacasa} makes all uses of
implicits explicit (shaded). This explicit style also requires making
parameter lists explicit; for example, consider lines 22--24:
\verb|box.open({ x => ... })(acc)|. In contrast, passing the access
permission \texttt{acc} implicitly enables the more lightweight Scala
syntax \verb|box open { x => ... }|.

\paragraph{Stack Locality.}

It is important to note that the above safety measures with respect to
object capabilities, capture control, and access permissions could be
circumvented by creating heap aliases of boxes and permissions.
Therefore, boxes and permissions are confined to the stack by
default. This means, without additional annotations they cannot be
stored in fields of heap objects or passed as arguments to
constructors.

\paragraph{Unique Fields.}

Strict stack confinement of boxes would be too restrictive in
practice. For example, an actor might have to store a box in the heap
to maintain access across several invocations of its message handler
while enabling a subsequent ownership transfer. To support such
patterns \lacasa~enables actors and boxes to have {\em unique fields}
which store boxes. Access is restricted to maintain {\em external
  uniqueness}~\cite{Clarke03} of unique fields (see
Section~\ref{sec:formalization} for a formalization of the uniqueness
and aliasing guarantees of unique fields).

\begin{figure}
\begin{lstlisting}
class ActorA(next: ActorRef[C])
    extends Actor[Container] {
  def receive(msg: Box[Container])
      (implicit acc: CanAccess { type C = msg.C }) {
    mkBox[C] { packed =>
      // ...
      msg.swap(_.part1)(_.part1 = _, b)(
        spore { pack =>
          implicit val acc = pack.access

          pack.box.open({ part1Obj =>
            println(part1Obj.arr.mkString(","))
            part1Obj.arr(0) = 1000
          })(acc)

          next.send(pack.box)({
            // ...
          })(acc)
        }
      )
    }
  }
}
class Container {
  var part1: Box[C] = _
  var part2: Box[C] = _
}
class C {
  var arr: Array[Int] = _
}
\end{lstlisting}
\caption{An actor accessing a unique field via \texttt{swap}.}
\label{fig:actors-unique}
\end{figure}

Figure~\ref{fig:actors-unique} shows an actor receiving a box of type
\\ \verb|Box[Container]| where class \verb|Container| declares two
unique fields, \verb|part1| and \verb|part2| (line 23--24).  In
\lacasa, these fields, which are identified as unique fields through
their box types, have to be accessed using a \verb|swap| expression.
\verb|swap| ``removes'' the box of a unique field and replaces it with
another box. For example, in line 7, \verb|swap| extracts the
\verb|part1| field of the \verb|msg| box and replaces it with some
other box \verb|b|. The extracted box is accessible via the
\verb|pack| parameter of the subsequent ``spore.''

\comment{
As explained in
Section~\ref{sec:back}, a spore is a closure which allows
``excluding'' specific types from occurring in its body.)
}

\comment{
In the current version of \lacasa, \verb|swap| requires specifying the
target field using two well-formed accessor functions (\ie it is
illegal to get one field and set another, say). A more ``invasive''
approach which adds direct support for \verb|swap| has been proposed
before~\cite{Haller10}, but it has the disadvantage of breaking
compatibility with existing Scala tools (\eg IDEs) which do not
support this extension.}

\comment{ In order

The example in Figure~\ref{fig:actors-lacasa} shows how 

Within the body of the \verb|open| expression
access to \verb|x| is unrestricted.}

\comment{
To make
both the box and the access permission available to the argument
closure of \verb|mkBox|.

of the form \verb|mkBox[T] { p => ... }|.

Both
the box and the access permission}

\comment{
At certain program points, access permissions are introduced, \eg, when creating 
a new box; at other program points, access permissions go out of scope.
To avoid

Scala's type checker infers 

implicitly passes a compatible
value in scope, or fails if no compatible implicit value is in scope.

not only takes a regular parameter of type \verb|Box[Any]|, 
}

\section{Formalization}\label{sec:formalization}

In this section we formalize the main concepts of \lacasa~in the
context of typed object-oriented core languages. Our approach,
however, extends to the whole of Scala (see Section~\ref{sec:impl}).

Our technical development proceeds in two steps. In the first step, we
formalize simple object capabilities. In combination with \lacasa's
boxes and its \verb|open| construct object capabilities enforce an
essential heap separation invariant (Section~\ref{sec:core1-heap-sep}).

In the second step, we extend our first core language with lightweight
affinity based on flow-insensitive permissions. The extended core
language combines permissions and continuation terms to enable
expressing (external) uniqueness, ownership transfer, and unique
fields.

Soundness, isolation, and uniqueness invariants are established based
on small-step operational semantics and syntax-directed type rules.

\subsection{Object Capabilities}

This section introduces \textsc{CoreLaCasa}$^{1}$ (\CLCONE), a
typed, object-oriented core language with object capabilities.

\paragraph{Syntax}

Figure~\ref{fig:core1-syntax} and Figure~\ref{fig:core1-syntax2} show
the syntax of \CLCONE. A program consists of a sequence of class
definitions, $\seq{cd}$, a sequence of global variable declarations,
$\seq{vd}$, and a ``main'' term $t$. Global variables model top-level,
stateful singleton objects of our realization in Scala (see
Section~\ref{sec:impl}). A class $C$ has exactly one superclass
$D$ and a (possibly empty) sequence of fields, $\seq{vd}$, and
methods, $\seq{md}$. The superclass may be \verb|AnyRef|, the
superclass of all classes. To simplify the presentation, methods have
exactly one parameter $x$; their body is a term $t$. There are three
kinds of types: class types $C$, box types $\texttt{Box}[C]$, and the
\verb|Null| type. \verb|Null| is a subtype of all class
types; it is used to assign a type to \verb|null|.


\begin{figure}[t]
  \centering
$\ba[t]{l@{\hspace{2mm}}l}
p    ::=  \seq{cd}~\seq{vd}~t                                       & \mbox{program}  \\
cd   ::=  \texttt{class}~C~\texttt{extends}~D~\{\seq{vd}~\seq{md}\} & \mbox{class}    \\
vd   ::=  \texttt{var}~f \typ C                                     & \mbox{variable} \\
md   ::=  \texttt{def}~m(x \typ \sigma) \typ \tau = t               & \mbox{method}   \\
\sigma,\tau ::=                                                     & \mbox{type}     \\
\gap C, D                                                           & \gap\mbox{class type} \\
\gap ~|~  \texttt{Box}[C]                                           & \gap\mbox{box type}   \\
\gap ~|~  \texttt{Null}                                             & \gap\mbox{null type}  \\
\ea$
\caption{\CLCONE~syntax. $C$, $D$ range over class names, $f$, $m$,
  $x$ range over term names.}
  \label{fig:core1-syntax}
\end{figure}

\begin{figure}[t]
  \centering

$\ba[t]{l@{\hspace{2mm}}l}
t    ::=                                                   & \mbox{terms}               \\
\gap     x                                                 & \gap\mbox{variable}        \\
\gap ~|~ \texttt{let}~x = e~\texttt{in}~t                  & \gap\mbox{let binding}     \\
~ & ~ \\
e    ::=                                                 & \mbox{expressions}           \\
\gap     \texttt{null}                                   & \gap\mbox{null reference}    \\
\gap ~|~ x                                               & \gap\mbox{variable}          \\
\gap ~|~ x.f                                             & \gap\mbox{selection}         \\
\gap ~|~ x.f = y                                         & \gap\mbox{assignment}        \\
\gap ~|~ \texttt{new}~C                                  & \gap\mbox{instance creation} \\
\gap ~|~ x.m(y)                                          & \gap\mbox{invocation}        \\
\gap ~|~ \texttt{box}[C]                                 & \gap\mbox{box creation}      \\
\gap ~|~ x.\texttt{open}~\{ y \Rightarrow t \}           & \gap\mbox{open box}          \\
\ea$
  \caption{\CLCONE~terms and expressions.}
  \label{fig:core1-syntax2}
\end{figure}

In order to simplify the presentation of the operational semantics,
programs are written in {\em A-normal form}~\cite{FlanaganSDF93} (ANF)
which requires all subexpressions to be named. We enforce ANF by
introducing two separate syntactic categories for {\em terms} and {\em
  expressions}, shown in Figure~\ref{fig:core1-syntax2}. Terms are
either variables or let bindings. Let bindings introduce names for
intermediate results. Most expressions are standard, except that the
usual object-based expressions, namely field selections, field
assignments, and method invocations, have only variables as trivial
subexpressions. The instance creation expression (\verb|new|) does not
take arguments: all fields of newly-created objects are initialized to
\verb|null|.

Two kinds of expressions are unique to our core language:
$\texttt{box}[C]$ creates a box containing a new instance of class
$C$. The expression $\texttt{box}[C]$ has type $\texttt{Box}[C]$. The
expression $x.\texttt{open}~\{ y \Rightarrow t \}$ provides temporary
access to box $x$.

\subsubsection{Dynamic Semantics}

We formalize the dynamic semantics as a small-step operational
semantics based on two reduction relations, $\freduce H F {H'} {F'}$,
and $\fsreduce H {FS} {H'} {FS'}$. The first relation reduces single
(stack) frames $F$ in heap $H$, whereas the second relation reduces
entire frame stacks $FS$ in heap $H$.

A heap $H$ maps references $o \in dom(H)$ to run-time objects $\obj C
{FM}$ where $C$ is a class type and $FM$ is a field map that maps
field names to values in $dom(H) \cup \{\texttt{null}\}$. $FS$ is a
sequence of stack frames $F$; we use the notation $FS = F \circ FS'$
to indicate that in stack $FS$ frame $F$ is the top-most frame which
is followed by frame stack $FS'$.

A single frame $F = \aframe L t l$ consists of a variable environment
$L = env(F)$, a term $t$, and an annotation $l$. The variable environment $L$
maps variable names $x$ to values $v \in dom(H) \cup \{\texttt{null}\}
\cup \{ b(o) ~|~ o \in dom(H) \}$. A value $b(o)$ is a {\em box
  reference} created using \CLCONE's $\texttt{box}[C]$ expression. A
box reference $b(o)$ prevents accessing the members of $o$ using
regular selection, assignment, and invocation expressions; instead,
accessing $o$'s members requires the use of an \verb|open| expression
to temporarily ``borrow'' the encapsulated reference. As is common,
$L' = L[x \mapsto v]$ denotes the updated mapping where $L'(y) = L(y)$
if $y \neq x$ and $L'(y) = v$ if $y = x$. A frame annotation $l$ is
either empty (or non-existant), expressed as $l = \epsilon$, or equal
to a variable name $x$. In the latter case, $x$ is the name of a
variable in the next frame which is to be assigned the return value of
the current frame.

As is common~\cite{Igarashi01}, $fields(C)$ denotes the fields of
class $C$, and $mbody(C, f) = x \rightarrow t$ denotes the body of a
method $\texttt{def}~m(x \typ \sigma) \typ \tau = t$.

Reduction of a program $p = \seq{cd}~\seq{vd}~t$ begins in an initial
environment $H_0, F_0 \circ \epsilon$ such that \\ $H_0 = \{ o_g \mapsto \obj{C_g}{FM_g} \}$ (initial heap), $F_0 = \aframe{L_0}{t}{\epsilon}$ (initial frame), $L_0 = \{ \texttt{global} \mapsto o_g \}$, $FM_g = \{ x \mapsto \texttt{null} ~|~ \texttt{var}~x \typ C \in \seq{vd} \}$, and $o_g$ a fresh object identifier; $C_g$ is a synthetic class defined as: \\ $\texttt{class}~C_g~\texttt{extends}~\texttt{AnyRef}~\{\seq{vd}\}$. Thus, a global variable $\texttt{var}~x \typ C$ is accessed using $\texttt{global}.x$; we treat \texttt{global} as a reserved variable name.

\paragraph{Single Frame Reduction.}

\begin{figure}
  \centering

  \infax[\textsc{E-Null}] {
    \freducebreak H {\aframe L {\texttt{let}~x = \texttt{null}~\texttt{in}~t} l} H {\aframe {L[x \mapsto \texttt{null}]} t l}
  }

  \vspace{0.3cm}

  \infax[\textsc{E-Var}] {
    \freducebreak H {\aframe L {\texttt{let}~x = y~\texttt{in}~t} l} H {\aframe {L[x \mapsto L(y)]} t l}
  }

  \vspace{0.3cm}

  \infrule[\textsc{E-Select}] {
    H(L(y)) = \obj C {FM} \andalso f \in dom(FM)
  } {
    \freducebreak H {\aframe L {\texttt{let}~x = y.f~\texttt{in}~t} l} H {\aframe {L[x \mapsto FM(f)]} t l}
  }

  \vspace{0.3cm}

  \infrule[\textsc{E-Assign}] {
    L(y) = o \andalso H(o) = \obj C {FM} \\
    H' = H[o \mapsto \obj C {FM[f \mapsto L(z)]}] \\
  } {
    \freducebreak H {\aframe L {\texttt{let}~x = y.f = z~\texttt{in}~t} l} {H'} {\aframe L {\texttt{let}~x = z~\texttt{in}~t} l}
  }

  \vspace{0.3cm}

  \infrule[\textsc{E-New}] {
    o \notin dom(H) \andalso fields(C) = \seq{f} \\
    H' = H[o \mapsto \obj C {\seq{f \mapsto \texttt{null}}}] \\
  } {
    \freducebreak H {\aframe L {\texttt{let}~x = \texttt{new}~C~\texttt{in}~t} l} {H'} {\aframe {L[x \mapsto o]} t l}
  }

  \vspace{0.3cm}

  \infrule[\textsc{E-Box}] {
    o \notin dom(H) \andalso fields(C) = \seq{f} \\
    H' = H[o \mapsto \obj C {\seq{f \mapsto \texttt{null}}}] \\
  } {
    \freducebreak H {\aframe L {\texttt{let}~x = \texttt{box}[C]~\texttt{in}~t} l} {H'} {\aframe {L[x \mapsto b(o)]} t l}
  }

  \caption{\CLCONE~frame transition rules.}
  \label{fig:core1-frame-rules}
\end{figure}

Figure~\ref{fig:core1-frame-rules} shows single frame transition
rules. Thanks to the fact that terms are in ANF in our core language,
the reduced term is a let binding in each case. This means reduction
results can be stored immediately in the variable environment,
avoiding the introduction of locations or references in the core
language syntax. Rule \textsc{E-Box} is analogous to rule
\textsc{E-New}, except that variable $x$ is bound to a box reference
$b(o)$. As a result, fields of the encapsulated object $o$ are not
accessible using regular field selection and assignment, since rules
\textsc{E-Select} and \textsc{E-Assign} would not be applicable. Apart
from \textsc{E-Box} the transition rules are similar to previous
stack-based formalizations of class-based core languages with
objects~\cite{UCAM-CL-TR-563,OstlundW10,FormalizingAsync}.

\paragraph{Frame Stack Reduction.}

\begin{figure}
  \centering

  \infrule[\textsc{E-Invoke}] {
    H(L(y)) = \obj C {FM} \\
    mbody(C, m) = x \rightarrow t' \\
    L' = L_0[\texttt{this} \mapsto L(y), x \mapsto L(z)]
  } {
    \fsreducebreak H {\aframe L {\texttt{let}~x = y.m(z)~\texttt{in}~t} l \circ FS} H {\aframe {L'} {t'} x \circ \aframe L t l \circ FS}
  }

  \vspace{0.3cm}

  \infax[\textsc{E-Return1}] {
    \fsreducebreak H {\aframe L x y \circ \aframe {L'} {t'} l \circ FS} H {\aframe {L'[y \mapsto L(x)]} {t'} l \circ FS}
  }

  \vspace{0.3cm}

  \infax[\textsc{E-Return2}] {
    \fsreducebreak H {\aframe L x {\epsilon} \circ \aframe {L'} {t'} l \circ FS} H {\aframe {L'} {t'} l \circ FS}
  }

  \vspace{0.3cm}

  \infrule[\textsc{E-Open}] {
    L(y) = b(o) \andalso L' = [z \mapsto o]
  } {
    \fsreducebreak H {\aframe L {\texttt{let}~x = y.\texttt{open}~\{ z \Rightarrow t' \}~\texttt{in}~t} l \circ FS} H {\aframe {L'} {t'} {\epsilon} \circ \aframe {L[x \mapsto L(y)]} t l \circ FS}
  }

\comment{
\vspace{0.3cm}

\infrule[\textsc{E-Frame}]
{ \freduce H F {H'} {F'}
}
{ \fsreduce H {F \circ FS} {H'} {F' \circ FS}
}}

  \caption{\CLCONE~frame stack transition rules.}
  \label{fig:core1-frame-stack-rules}
\end{figure}

Figure~\ref{fig:core1-frame-stack-rules} shows the frame stack
transition rules. Rule \textsc{E-Invoke} creates a new frame,
annotated with $x$, that evaluates the body of the called method.
Rule \textsc{E-Return1} uses the annotation $y$ of the top-most frame
to return the value of $x$ to its caller's frame. Rule
\textsc{E-Return2} enables returning from an $\epsilon$-annotated
frame. Rule \textsc{E-Open} creates such an $\epsilon$-annotated
frame. In the new frame, the object encapsulated by box $y$ is
accessible under alias $z$. In contrast to \textsc{E-Invoke}, the new
frame does not include the global environment $L_0$; instead, $z$ is
the only variable in the (domain of the) new environment.

\subsubsection{Static Semantics}

\begin{figure}
  \centering

  \infrule[\textsc{WF-Program}] {
    p \vdash \seq{cd} \andalso p \vdash \Gamma_0 \andalso \Gamma_0~;~\epsilon \vdash t \typ \sigma
  } {
    p \vdash \seq{cd}~\seq{vd}~t
  }

  \vspace{0.3cm}

  \infrule[\textsc{WF-Class}] {
    C \vdash \seq{md} \andalso D = \texttt{AnyRef} \lor p \vdash \texttt{class}~D~\ldots \\
    \forall~(\texttt{def}~m~\ldots) \in \seq{md}.~override(m, C, D) \\
    \forall~\texttt{var}~f \typ \sigma \in \seq{fd}.~f \notin fields(D)
  } {
    p \vdash \texttt{class}~C~\texttt{extends}~D~\{\seq{fd}~\seq{md}\}
  }

  \vspace{0.3cm}

  \infrule[\textsc{WF-Override}] {
    mtype(m, D)~\text{not defined} \lor mtype(m, D) = mtype(m, C)
  } {
    override(m, C, D)
  }

  \vspace{0.3cm}

  \infrule[\textsc{WF-Method}] {
    \Gamma_0 , \texttt{this} \typ C, x \typ \sigma ~;~ \epsilon \vdash t \typ \tau' \\
    \tau' \sub \tau
  } {
    C \vdash \texttt{def}~m(x \typ \sigma) \typ \tau = t
  }

  \caption{Well-formed \CLCONE~programs.}
  \label{fig:wf-programs}
\end{figure}

\paragraph{Type Assignment.}

A judgement of the form $\Gamma ~;~ a \vdash t \typ \sigma$ assigns
type $\sigma$ to term $t$ in type environment $\Gamma$ under effect
$a$. When assigning a type to the top-level term of a program the
effect $a$ is $\epsilon$ which is the unrestricted effect. In
contrast, the body of an \verb|open| expression must be well-typed
under effect \verb|ocap| which requires instantiated classes to be
$ocap$.

\paragraph{Well-Formed Programs.}

Figure~\ref{fig:wf-programs} shows the rules for well-formed
programs. (We write $\ldots$ to omit unimportant parts of a program.)
A program is well-formed if all its class definitions are well-formed
and its top-level term is well-typed in type environment $\Gamma_0 =
\{ \texttt{global} \typ C_g \}$
(\textsc{WF-Program}); class $C_g$ is defined as: $\texttt{class}~C_g~\texttt{extends}~\texttt{AnyRef}~\{\seq{vd}\}$.  Rule \textsc{WF-Class} defines well-formed
class definitions. In a well-formed class definition (a) all methods
are well-formed, (b) the superclass is either \verb|AnyRef| or a
well-formed class in the same program, and (c) method overriding (if
any) is well-formed; fields may not be overridden. We use a standard
function $fields(D)$~\cite{Igarashi01} to obtain the fields in class
$D$ and superclasses of $D$. Rule \textsc{WF-Override} defines
well-formed method overriding: overriding of a method $m$ in class $C$
with superclass $D$ is well-formed if $D$ (transitively) does not
define a method $m$ or the type of $m$ in $D$ is the same as the type
of $m$ in $C$. A method $m$ is well-typed in class $C$ if its body is
well-typed with type $\tau'$ under effect $\epsilon$ in environment
$\Gamma_0, \texttt{this} \typ C, x \typ \sigma$, where $\sigma$ is the
type of m's parameter $x$, such that $\tau'$ is a subtype of m's
declared result type $\tau$ (\textsc{WF-Method}).

\paragraph{Object Capabilities.}

\begin{figure}
  \centering

  \infax[\textsc{Ocap-AnyRef}] { ocap(\texttt{AnyRef}) }

  \infrule[\textsc{Ocap-Class}] {
    p \vdash \texttt{class}~C~\texttt{extends}~D~\{\seq{fd}~\seq{md}\} \\
    C \vdash_{ocap} \seq{md} \andalso ocap(D) \\
    \forall~\texttt{var}~f \typ E \in \seq{fd}.~ocap(E)
  } {
    ocap(C)
  }
  \vspace{0.3cm}
  \infrule[\textsc{Ocap-Method}] {
    \texttt{this} \typ C, x \typ \sigma ~;~ \texttt{ocap} \vdash t \typ \tau' \\
    \tau' \sub \tau
  } {
    C \vdash_{ocap} \texttt{def}~m(x \typ \sigma) \typ \tau = t
  }

  \caption{Object capability rules.}
  \label{fig:ocap-rules}
\end{figure}

For a class $C$ to satisfy the constraints of the object-capability
discipline, written $ocap(C)$, it must be well-formed according to the
rules shown in Figure~\ref{fig:ocap-rules}. Essentially, for a class
$C$ we have $ocap(C)$ if its superclass is ocap, the types of its
fields are ocap, and its methods are well-formed according to
$\vdash_{ocap}$. Rule \textsc{Ocap-Method} looks a lot like rule
\textsc{WF-Method}, but there are two essential differences: first,
the method body must be well-typed in a type environment that does not
contain the global environment $\Gamma_0$; thus, global variables are
inaccessible. Second, the method body must be well-typed under effect
\verb|ocap|; this means that within the method body only ocap classes
may be instantiated.

\paragraph{Subclassing and Subtypes.}

In \CLCONE, the subtyping relation $\sub$, defined by the class table,
is identical to that of FJ~\cite{Igarashi01} except for two additional
rules:

\setlength{\columnsep}{0pt}
\begin{multicols}{2}
  \infrule[\textsc{$\sub$-Box}]
          { C \sub D }
          { \texttt{Box}[C] \sub \texttt{Box}[D] }
  \infax[\textsc{$\sub$-Null}]
        { \texttt{Null} \sub \sigma }
\end{multicols}

\begin{figure}[t]
  \centering

  \begin{multicols}{2}
  \infax[\textsc{T-Null}] {
    \Gamma ~;~ a \vdash \texttt{null} \typ \texttt{Null}
  }

  \infrule[\textsc{T-Var}] {
    x \in dom(\Gamma)
  } {
    \Gamma ~;~ a \vdash x \typ \Gamma(x)
  }
  \end{multicols}

  \vspace{0.3cm}

  \infrule[\textsc{T-Let}] {
    \Gamma ~;~ a \vdash e \typ \tau \andalso \Gamma , x \typ \tau ~;~ a \vdash t \typ \sigma
  } {
    \Gamma ~;~ a \vdash \texttt{let}~x = e~\texttt{in}~t \typ \sigma
  }

  \vspace{0.3cm}

  \infrule[\textsc{T-Select}] {
    \Gamma ~;~ a \vdash x \typ C \andalso ftype(C, f) = D
  } {
    \Gamma ~;~ a \vdash x.f \typ D
  }

  \vspace{0.3cm}

  \infrule[\textsc{T-Assign}] {
    \Gamma ~;~ a \vdash x \typ C \andalso ftype(C, f) = D \\
    \Gamma ~;~ a \vdash y \typ D' \andalso D' \sub D
  } {
    \Gamma ~;~ a \vdash x.f = y \typ D
  }

  \vspace{0.3cm}

  \infrule[\textsc{T-New}] {
    a = \texttt{ocap} \implies ocap(C)
  } {
    \Gamma ~;~ a \vdash \texttt{new}~C \typ C
  }

  \vspace{0.3cm}

  \infrule[\textsc{T-Invoke}] {
    \Gamma ~;~ a \vdash x \typ C \andalso mtype(C, m) = \sigma \tfun \tau \\
    \Gamma ~;~ a \vdash y \typ \sigma' \andalso \sigma' \sub \sigma
  } {
    \Gamma ~;~ a \vdash x.m(y) \typ \tau
  }

  \vspace{0.3cm}

  \infrule[\textsc{T-Box}] {
    ocap(C)
  } {
    \Gamma ~;~ a \vdash \BoxV{C} \typ \BoxT{C}
  }

  \vspace{0.3cm}

  \infrule[\textsc{T-Open}] {
    \Gamma ~;~ a \vdash x \typ \BoxT{C} \andalso y \typ C ~;~ \texttt{ocap} \vdash t \typ \sigma
  } {
    \Gamma ~;~ a \vdash x.\texttt{open}~\{ y \Rightarrow t \} \typ \BoxT{C}
  }

  \caption{\CLCONE~term and expression typing.}
  \label{fig:core1-type-rules}
\end{figure}

\paragraph{Term and Expression Typing.}

Figure~\ref{fig:core1-type-rules} shows the inference rules for typing
terms and expressions. The type rules are standard except for
\textsc{T-New}, \textsc{T-Box}, and \textsc{T-Open}. Under effect
\ocap~\textsc{T-New} requires the instantiated class to be ocap. An
expression $\texttt{box}[C]$ has type $\texttt{Box}[C]$ provided
$ocap(C)$ holds (\textsc{T-Box}). Finally, \textsc{T-Open} requires
the body $t$ of \verb|open| to be well-typed under effect \ocap~and in
a type environment consisting only of $y \typ C$. The type of the
\verb|open| expression itself is $\BoxT{C}$ (it simply returns box
$x$).

\begin{figure}[t]
  \centering
  \infrule[\textsc{WF-Var}] {
    L(x) = \texttt{null}~ \lor \\
    typeof(H, L(x)) \sub \Gamma(x)
  } {
    H \vdash \Gamma ; L ; x
  }

  \vspace{0.3cm}

  \infrule[\textsc{WF-Env}] {
    dom(\Gamma) \subseteq dom(L) \\
    \forall x \in dom(\Gamma).~H \vdash \Gamma ; L ; x
  } {
    H \vdash \Gamma ; L
  }

  \vspace{0.3cm}

  \begin{multicols}{2}
  \infrule[\textsc{T-Frame1}] {
    \Gamma ~;~ a \vdash t \typ \sigma \\
    H \vdash \Gamma ; L
  } {
    H \vdash \aframe L t l \typ \sigma
  }
  \infax[\textsc{T-EmpFS}] {
    H \vdash \epsilon
  }
  \end{multicols}

  \vspace{0.3cm}

  \infrule[\textsc{T-Frame2}] {
    \Gamma, x \typ \tau ~;~ a \vdash t \typ \sigma \andalso H \vdash \Gamma ; L
  } {
    H \vdash^{\tau}_x \aframe L t l \typ \sigma
  }

  \vspace{0.3cm}

  \begin{multicols}{2}
  \infrule[\textsc{T-FS-NA}] {
    H \vdash F^{\epsilon} \typ \sigma \\
    H \vdash FS
  } {
    H \vdash F^{\epsilon} \circ FS
  }
  \infrule[\textsc{T-FS-NA2}] {
    H \vdash^{\tau}_x F^{\epsilon} \typ \sigma \\
    H \vdash FS
  } {
    H \vdash^{\tau}_x F^{\epsilon} \circ FS
  }
  \end{multicols}

  \vspace{0.3cm}

  \begin{multicols}{2}
  \infrule[\textsc{T-FS-A}] {
    H \vdash F^x \typ \tau \\
    H \vdash^{\tau}_x FS
  } {
    H \vdash F^x \circ FS
  }
  \infrule[\textsc{T-FS-A2}] {
    H \vdash^{\sigma}_y F^x \typ \tau \\
    H \vdash^{\tau}_x FS
  } {
    H \vdash^{\sigma}_y F^x \circ FS
  }
  \end{multicols}

  \vspace{0.3cm}

  \caption{Well-formed environments, frames, and frame stacks.}
  \label{fig:wf-env-frames}
\end{figure}

\paragraph{Well-Formedness.}

Frames, frame stacks, and heaps must be
well-formed. Figure~\ref{fig:wf-env-frames} shows the well-formedness
rules for environments, frames, and frame stacks. Essentially, $\Gamma
, L$ are well-formed in heap $H$ if for all variables $x \in
dom(\Gamma)$ the type of $L(x)$ in $H$ is a subtype of the static type
of $x$ in $\Gamma$ (\textsc{WF-Var}, \textsc{WF-Env}). A frame
$\aframe L t l$ is well-typed in $H$ if its term $t$ is well-typed in
some environment $\Gamma$ such that $\Gamma , L$ are well-formed in
$H$ (\textsc{T-Frame1}, \textsc{T-Frame2}). A frame stack is
well-formed if all its frames are well-typed. Rules \textsc{T-FS-NA}
and \textsc{T-FS-NA2} are required for $\epsilon$-annotated frames.
Well-typed heaps are defined as follows.

\begin{definition}[Object Type]\label{def:typeof}
  For an object identifier $o \in dom(H)$ where $H(o) = \obj C {FM}$,
  $typeof(H, o) := C$
\end{definition}

\begin{definition}[Well-typed Heap]\label{def:well-typed-heap}
  A heap $H$ is well-typed, written $\vdash H \typ \star$ iff
  \begin{align*}
    \forall o \in dom(H).~ & H(o) = \obj C {FM} \implies \\
                           & (dom(FM) = fields(C)~ \land \\
                           & \forall f \in dom(FM).~FM(f) = \texttt{null}~ \lor \\
                           & typeof(H, FM(f)) \sub ftype(C, f)) \\
  \end{align*}
\end{definition}
\noindent
To formalize the heap structure enforced by \CLCONE~we use the
following definitions.

\begin{definition}[Separation]\label{def:separation}
  Two object identifiers $o$ and $o'$ are separate in heap $H$,
  written $sep(H, o, o')$, iff

  $\forall q, q' \in dom(H).~ \\ reach(H, o, q) \land reach(H, o', q')
  \implies q \neq q'$.
\end{definition}

\begin{definition}[Box Separation]\label{def:box-sep-frame1}
  For heap $H$ and frame $F$, $boxSep(H, F)$ holds iff

  $F = \aframe L t l \land \forall x \mapsto b(o), y \mapsto b(o') \in
  L.~ \\ \quad o \neq o' \implies sep(H, o, o')$
\end{definition}

\newpage
\begin{definition}[Box-Object Separation]\label{def:box-obj-sep1}
  For heap $H$ and frame $F$, $boxObjSep(H, F)$ holds iff

  $F = \aframe L t l \land \forall x \mapsto b(o), y \mapsto o' \in
  L.~sep(H, o, o')$
\end{definition}

\begin{definition}[Box Ocap Invariant]\label{def:box-ocap1}
  For heap $H$ and frame $F$, $boxOcap(H, F)$ holds iff

  $F = \aframe L t l \land \forall x \mapsto b(o) \in L, o' \in
  dom(H).~ \\ reach(H, o, o') \implies ocap(typeof(H, o'))$
\end{definition}
\noindent
In a well-formed frame, (a) two box references that are not aliases
are disjoint (Def.~\ref{def:box-sep-frame1}), (b) box references and
non-box references are disjoint (Def.~\ref{def:box-obj-sep1}), and (c)
all types reachable from box references are ocap
(Def.~\ref{def:box-ocap1}).

\begin{definition}[Global Ocap Separation]\label{def:global-ocap-sep1}
  For heap $H$ and frame $F$, $globalOcapSep(H, F)$ holds iff

  $F = \aframe L t l \land \forall x \mapsto o \in L, y \mapsto o' \in
  L_0.~ \\ ocap(typeof(H, o)) \land sep(H, o, o')$
\end{definition}
\noindent
In addition, in a well-formed frame that is well-typed under effect
\ocap, non-box references have ocap types, and they are disjoint from
the global variables in $L_0$ (Def.~\ref{def:global-ocap-sep1}).

\begin{figure}
  \centering
  \infrule[\textsc{F-ok}] {
    boxSep(H, F) \andalso boxObjSep(H, F) \\
    boxOcap(H, F) \\
    a = \texttt{ocap} \implies globalOcapSep(H, F)
  } {
    H ~;~ a \vdash F \ok
  }

  \vspace{0.3cm}

  \infrule[\textsc{SingFS-ok}] {
    H ~;~ a \vdash F \ok
  } {
    H ~;~ a \vdash F \circ \epsilon \ok
  }

  \vspace{0.3cm}

  \infrule[\textsc{FS-ok}] {
    H ~;~ b \vdash F^l \ok \andalso H ~;~ a \vdash FS \ok \\
    b = \begin{cases}
      \texttt{ocap} & \text{if } a = \texttt{ocap} \lor l = \epsilon \\
      \epsilon      & \text{otherwise}
    \end{cases} \\
    boxSeparation(H, F, FS) \\
    uniqueOpenBox(H, F, FS) \\
    openBoxPropagation(H, F^l, FS)
  } {
    H ~;~ b \vdash F^l \circ FS \ok
  }
  \caption{Separation invariants of frames and frame stacks.}
  \label{fig:f-ok1}
\end{figure}

The judgement $H ~;~ a \vdash F \ok$ combines these invariants as
shown in Figure~\ref{fig:f-ok1}; the corresponding judgement for frame
stacks uses the following additional invariants.

\begin{definition}[Box Separation]\label{def:box-separation1}
  For heap $H$, frame $F$, and frame stack $FS$,
  $boxSeparation(H, F, FS)$ holds iff

  $\forall o, o' \in dom(H).~boxRoot(o, F) \land boxRoot(o', FS) \land
  o \neq o' \implies sep(H, o, o')$
\end{definition}
\noindent
Def.~\ref{def:box-separation1} uses auxiliary predicate $boxRoot$
shown in Figure~\ref{fig:core1-aux-predicates}. $boxRoot(o, F)$ holds
iff there is a box reference to $o$ in frame $F$; $boxRoot(o, FS)$
holds iff there is a box reference to $o$ in one of the frames $FS$.
Informally, $boxSeparation(H, F, FS)$ holds iff non-aliased boxes are
disjoint.

\begin{definition}[Unique Open Box]\label{def:unique-open-box1}
  For heap $H$, frame $F$, and frame stack $FS$,
  $uniqueOpenBox(H, F, FS)$ holds iff

  $\forall o, o' \in dom(H).~openbox(H, o, F, FS) \land \\
    openbox(H, o', F, FS) \implies o = o'$
\end{definition}
\noindent
Def.~\ref{def:unique-open-box1} uses auxiliary predicate $openbox$
shown in Figure~\ref{fig:core1-aux-predicates}. $openbox(H, o, F, FS)$
holds iff $boxRoot(o, FS)$ and there is a local variable in frame $F$
which points to an object reachable from $o$ (box $o$ is ``open'' in
frame $F$). Informally, $uniqueOpenBox(H, F, FS)$ holds iff at most
one box is open (\ie accessible via non-box references) in frame $F$.

\begin{definition}[Open Box Propagation]\label{def:open-box-propag1}
  For heap $H$, frame $F^l$, and frame stack $FS$,
  $openBoxPropagation(H, F^l, FS)$ holds iff

  $l \neq \epsilon \land FS = G \circ GS \land openbox(H, o, F, FS) \implies
   openbox(H, o, G, GS)$
\end{definition}
\noindent
Informally, $openBoxPropagation(H, F^l, FS)$ holds iff frame $F^l$
preserves the open boxes in the top-most frame of frame stack $FS$.

According to rule \textsc{FS-ok} shown in Figure~\ref{fig:f-ok1},
well-formed frame stacks ensure (a) non-aliased boxes are disjoint
(Def.~\ref{def:box-separation1}), (b) at most one box is open (\ie
accessible via non-box references) per frame
(Def.~\ref{def:unique-open-box1}), and (c) method calls preserve open
boxes (Def.~\ref{def:open-box-propag1}).

\begin{figure}
  \centering
  \infrule {
    x \mapsto b(o) \in L
  } {
    boxRoot(o, \aframe{L}{t}{l})
  }
  \vspace{0.3cm}

  \infrule {
    boxRoot(o, F)
  } {
    boxRoot(o, F \circ \epsilon)
  }
  \vspace{0.3cm}

  \infrule {
    boxRoot(o, F) \lor boxRoot(o, FS)
  } {
    boxRoot(o, F \circ FS)
  }
  \vspace{0.3cm}

  \infrule {
    boxRoot(o, FS) \andalso x \mapsto o' \in L \andalso reach(H, o, o')
  } {
    openbox(H, o, \aframe{L}{t}{l}, FS)
  }
  \caption{Auxiliary predicates.}
  \label{fig:core1-aux-predicates}
\end{figure}

\subsection{Soundness and Heap Separation}\label{sec:core1-heap-sep}

Type soundness of \CLCONE~follows from the following preservation and
progress theorems. Instead of proving these theorems directly, we
prove corresponding theorems for an extended core language
(Section~\ref{sec:soundness}).

\begin{theorem}[Preservation]\label{thm:core1-preservation}
  If $\vdash H \typ \star$ then:
  \begin{enumerate}
  \item If $H \vdash F \typ \sigma$, $H ~;~ a \vdash F \ok$, and
    $\freduce H F {H'} {F'}$ then $\vdash H' \typ \star$, $H' \vdash
    F' \typ \sigma$, and $H' ~;~ a \vdash F' \ok$.

  \item If $H \vdash FS$, $H ~;~ a \vdash FS \ok$, and $\fsreduce H
    {FS} {H'} {FS'}$ then $\vdash H' \typ \star$, $H' \vdash FS'$, and
    $H' ~;~ b \vdash FS' \ok$.
  \end{enumerate}
\end{theorem}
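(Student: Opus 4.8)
The plan is to prove both parts by case analysis on the last reduction rule applied, discharging the three conclusions separately in each case: heap well-typedness $\vdash H' \typ \star$, frame (or frame-stack) typing, and the $\ok$ invariant. The first two follow the standard Featherweight-Java-style argument; the genuinely new work is re-establishing the separation invariants bundled into $H ; a \vdash F \ok$ and into the frame-stack judgement. Since every reduced term is a \texttt{let} in ANF, I would uniformly begin by inverting \textsc{T-Let} to obtain a type $\tau$ for the bound expression together with a typing of the continuation $t$ under $\Gamma, x \typ \tau$, and then show the updated environment $L[x \mapsto v]$ well-formed via \textsc{WF-Var} and \textsc{WF-Env}.

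For Part~(1) I would handle the six rules of Figure~\ref{fig:core1-frame-rules}. Rules \textsc{E-Null}, \textsc{E-Var}, and \textsc{E-Select} leave $H$ unchanged, so heap typing and every separation predicate are immediate. For \textsc{E-Assign} the heap changes at a single field, and $\vdash H' \typ \star$ follows from the subtyping premise $D' \sub D$ of \textsc{T-Assign}; the crucial observation for the $\ok$ invariant is that in \CLCONE~fields carry only class types, so the written value $L(z)$ is never a box reference. Hence the new edge lies entirely within the non-box region of the heap, which by $boxObjSep$ (Def.~\ref{def:box-obj-sep1}) is already disjoint from every box, so $boxSep$, $boxObjSep$, and $boxOcap$ are untouched; under effect \ocap~I additionally invoke $globalOcapSep$ (Def.~\ref{def:global-ocap-sep1}) to see that $L(z)$'s reachable set is \ocap~and global-disjoint, so attaching it to $o$ preserves that predicate too. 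For \textsc{E-New} and \textsc{E-Box} the fresh object $o$ has only \texttt{null} fields, so it forms an isolated heap component and is automatically separate from all existing roots; \textsc{E-Box} additionally discharges $boxOcap$ because \textsc{T-Box} guarantees $ocap(C)$, while \textsc{E-New} supplies $globalOcapSep$ from the \ocap~premise of \textsc{T-New}.

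For Part~(2) a frame-stack step is either a lift of an internal top-frame reduction (whose treatment reduces to Part~(1) after checking that the heap update leaves the lower frames' invariants intact, which holds since that update only touches the top frame's non-box region or adds a fresh component) or one of the four genuine rules of Figure~\ref{fig:core1-frame-stack-rules}, all of which leave $H$ fixed so that $\vdash H' \typ \star$ is trivial. Here the stack predicates $boxSeparation$, $uniqueOpenBox$, and $openBoxPropagation$ (Defs.~\ref{def:box-separation1}--\ref{def:open-box-propag1}) enter. In \textsc{E-Invoke} the callee environment is $L_0[\this \mapsto L(y), x \mapsto L(z)]$; since \this~and the globals are non-box references, the new frame holds at most the single box $L(z)$, an \emph{alias} of the caller's box and hence trivially separate from every distinct box below it, with typing supplied by \textsc{WF-Method}. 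In \textsc{E-Open}, the central case, the new $\epsilon$-frame has environment $[z \mapsto o]$ with $o$ the opened box's content and body typed under effect \ocap~by \textsc{T-Open}; I would establish $globalOcapSep$ from $boxOcap$ (giving $ocap(typeof(H,o))$) together with box-object disjointness (giving separation of $o$ from the globals), then show $uniqueOpenBox$ (the fresh frame opens exactly $o$) and $openBoxPropagation$ (the now-annotated caller continues to witness $o$ as open). The two return rules pop the top frame and either write the returned value via the frame annotation (\textsc{E-Return1}) or discard it (\textsc{E-Return2}); for these I would re-derive the stack invariants on the shortened stack, using $openBoxPropagation$ to carry the open-box witness from the popped frame to its successor.

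The main obstacle I anticipate is \textsc{E-Open} together with the interplay of the three stack predicates: I must show that exposing a box's contents as an ordinary reference in a child frame keeps $uniqueOpenBox$ (at most one open box per frame) and $openBoxPropagation$ (open-ness inherited along annotated frames) globally consistent, and that the \ocap~typing of the opened body is precisely what supplies the $globalOcapSep$ facts the $\ok$ judgement demands. A secondary subtlety is the reachability bookkeeping for \textsc{E-Assign}: confirming that growing the non-box region can never silently connect it to a box, which rests entirely on the syntactic restriction that \CLCONE~fields store only class-typed, hence non-box, values.
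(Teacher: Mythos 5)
Your high-level plan---case analysis on the reduction derivation, discharging heap typing, frame(-stack) typing, and the well-formedness ($\ok$) judgement separately in each case---is the same skeleton the paper uses. Note, however, that the paper never proves Theorem~\ref{thm:core1-preservation} directly: it proves the corresponding Theorem~\ref{thm:core2-preservation} for \CLCTWO, via Lemma~\ref{lem:lemma1} (single frames) and Lemma~\ref{lem:lemma2} (frame stacks) in Appendix~\ref{app:core2-preservation-proof}, and takes the \CLCONE~statement to be the permission-free specialization. Your Part~(1) cases and your sketches for \textsc{E-Invoke}, \textsc{E-Open}, and the two return rules are, in substance, the arguments found there.

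The genuine gap is in the lifted case of Part~(2): a single-frame step (say \textsc{E-Assign}) fired at the top of a non-trivial stack. You dispose of it by claiming the heap update ``only touches the top frame's non-box region or adds a fresh component,'' hence leaves the lower frames' invariants intact. That inference fails in exactly the situation this type system is about: when the top frame was created by \textsc{E-Open}, its \emph{ordinary} references point \emph{inside} a box owned by a frame further down the stack (this is what the $openbox$ predicate expresses). An assignment in that frame then mutates the interior of a lower frame's box, and it can enlarge that box's reachable set---for instance $y.f = z$ where $y$ is the opened alias and $z$ points to an object allocated within the open frame---so $boxSeparation$, $uniqueOpenBox$, $openBoxPropagation$, and the lower frames' own $boxObjSep$ and $globalOcapSep$ must all be \emph{re-established}, not observed to be untouched. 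This is the crux of the paper's proof: it requires the auxiliary Lemma~\ref{lem:lemma3} (if some box is open in the top frame, the frame's effect is \ocap, so $globalOcapSep$ of the top frame is in force), a provenance argument that the $\epsilon$-annotated frame below must have been created by \textsc{E-Open}, and an inner induction over the length of the stack (case \textsc{E-Frame} of Lemma~\ref{lem:lemma2}). Your ``secondary subtlety''---that \CLCONE~fields hold only class-typed values---does not close this hole: it only rules out storing box \emph{references} in fields, whereas the dangerous edge is an ordinary object-to-object edge created underneath an open box. As written, your plan does not go through for this case.
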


\newpage
\begin{theorem}[Progress]\label{thm:core1-progress}
  If $\vdash H \typ \star$ then:

  If $H \vdash FS$ and $H ~;~ a \vdash FS \ok$ then either
  $\fsreduce H {FS} {H'} {FS'}$ or $FS = \aframe L x l \circ \epsilon$ or $FS = F \circ GS$
  where $F = \aframe L {\texttt{let}~x = t~\texttt{in}~t'} l$,
  $t \in \{ y.f, y.f = z, y.m(z), \\ y.\texttt{open}~\{ z \Rightarrow t'' \} \}$,
  and $L(y) = \texttt{null}$.

\end{theorem}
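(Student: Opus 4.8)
The plan is to argue by structural case analysis on the frame stack $FS$, using the typing derivation furnished by $H \vdash FS$ together with $\vdash H \typ \star$; the separation hypothesis $H ~;~ a \vdash FS \ok$ plays no role in deciding reducibility and is carried only for uniformity with the preservation statement. First I would invert $H \vdash FS$ to expose that the head frame $F = \aframe L t l$ is well-typed, \ie $\Gamma ~;~ a \vdash t \typ \sigma$ for some $\Gamma$ with $H \vdash \Gamma ; L$; in particular $\dom(\Gamma) \subseteq \dom(L)$ by \textsc{WF-Env}. I then split on the shape of $t$, which by the grammar of Figure~\ref{fig:core1-syntax2} is either a variable or a let binding.

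If $t = x$ is a variable, I distinguish on the tail of the stack. When $FS = \aframe L x l \circ \epsilon$ the configuration is terminal, which is precisely the second disjunct. Otherwise the tail is non-empty and the annotation $l$ selects a return rule: $l = y$ gives \textsc{E-Return1} and $l = \epsilon$ gives \textsc{E-Return2}, so the stack reduces. If $t = \texttt{let}~x = e~\texttt{in}~t'$, I case on $e$. For $e \in \{ \texttt{null}, y, \texttt{new}~C, \BoxV{C} \}$ the matching single-frame rule (\textsc{E-Null}, \textsc{E-Var}, \textsc{E-New}, \textsc{E-Box}) always applies and lifts to a stack transition; for \textsc{E-Var} I use $y \in \dom(\Gamma) \subseteq \dom(L)$. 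The remaining expressions $y.f$, $y.f = z$, $y.m(z)$, and $y.\texttt{open}~\{ z \Rightarrow t'' \}$ all inspect $L(y)$, and whenever $L(y) = \texttt{null}$ the configuration matches the third disjunct and no further progress is required. It thus remains to show that when $L(y) \neq \texttt{null}$ the intended rule (\textsc{E-Select}, \textsc{E-Assign}, \textsc{E-Invoke}, \textsc{E-Open}) is applicable.

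This last step is the crux and rests on a canonical-forms argument relating static types to run-time value shapes. From $H \vdash \Gamma ; L$ and \textsc{WF-Var} I have $typeof(H, L(y)) \sub \Gamma(y)$ whenever $L(y) \neq \texttt{null}$. Since the subtyping relation keeps class types and box types incomparable — the only cross-kind rule is \textsc{$\sub$-Null}, while \textsc{$\sub$-Box} relates box types only to box types — the static kind of $\Gamma(y)$ fixes the run-time kind of $L(y)$: a class-typed variable maps to a genuine object identifier $o \in \dom(H)$ with $H(o) = \obj C {FM}$, and a box-typed variable maps to a box reference $b(o)$. Hence for $y.f$, $y.f = z$, and $y.m(z)$, where $y$ carries a class type by \textsc{T-Select}/\textsc{T-Assign}/\textsc{T-Invoke}, the value $L(y)$ is an object in $\dom(H)$; and for $y.\texttt{open}$, where $y \typ \BoxT{C}$ by \textsc{T-Open}, $L(y) = b(o)$, so \textsc{E-Open} fires. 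To discharge the object cases I invoke $\vdash H \typ \star$ (Def.~\ref{def:well-typed-heap}), which gives $\dom(FM) = fields(typeof(H, o))$; because $typeof(H, o) \sub \Gamma(y)$ and fields and method signatures are inherited along subclassing exactly as in FJ, the selected $f$ lies in $\dom(FM)$ (so \textsc{E-Select}/\textsc{E-Assign} apply, using $z \in \dom(L)$ for assignment), and $mbody(typeof(H, o), m)$ is defined whenever $mtype(\Gamma(y), m)$ is (so \textsc{E-Invoke} applies).

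The hard part will be precisely this canonical-forms reasoning: making the distinction between box references $b(o)$ and ordinary references $o$ visible at the level of static types. This requires reading $typeof$ as assigning a box type to each box reference and then leaning on the kind-separation built into $\sub$, so that \textsc{WF-Var} genuinely rules out a box reference flowing into a class-typed slot (which would wedge \textsc{E-Select}/\textsc{E-Assign}/\textsc{E-Invoke}) and an object reference flowing into a box-typed slot (which would wedge \textsc{E-Open}). Once this correspondence is established, every non-\texttt{null} case reduces and the three disjuncts of the statement are exhausted.
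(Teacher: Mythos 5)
Your overall route is the same as the paper's: the paper proves this theorem via its \CLCTWO~analogue (Theorem~\ref{thm:core2-progress}, Appendix~\ref{app:core2-progress-proof}), and that proof has exactly your shape --- decompose the stack into a head frame, case on the structure of $t$, dispatch the variable case with \textsc{E-Return1}/\textsc{E-Return2} using $x \in \dom(\Gamma) \subseteq \dom(L)$, fire the unconditional single-frame rules, peel off the $L(y) = \texttt{null}$ disjunct, and for non-\texttt{null} receivers combine \textsc{WF-Var} canonical forms with $\vdash H \typ \star$ and a subclass-field lemma (the paper's Lemma~\ref{lem:subclass-field}) to show $f \in \dom(FM)$ and hence rule applicability. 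Your observation about the crux is also on target: the paper resolves the box-versus-object canonical-forms issue exactly as you propose, by building the distinction into \textsc{WF-Var} (its \CLCTWO~version has a separate disjunct $L(x) = b(o,p) \land \Gamma(x) = \GuaT Q C \land typeof(H,o) \sub C$), and you are in fact more explicit than the paper on the \textsc{E-Invoke} case, which the appendix dismisses as ``analogous.''

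There is one concrete misstep: your opening claim that $H ~;~ a \vdash FS \ok$ ``plays no role in deciding reducibility and is carried only for uniformity'' is wrong, and the proof as written has a gap at its very first move because of it. The typing hypothesis alone does not let you ``invert $H \vdash FS$ to expose the head frame'': by \textsc{T-EmpFS} we have $H \vdash \epsilon$, and the empty stack neither reduces nor matches $\aframe L x l \circ \epsilon$ nor $F \circ GS$, so with only $H \vdash FS$ the theorem is false. The $\ok$ hypothesis is precisely what excludes this, since \textsc{SingFS-ok} and \textsc{FS-ok} only derive judgements for non-empty stacks; accordingly, the paper's first proof line is ``By $H ~;~ a \vdash FS \ok$, $FS = F \circ FS'$.'' The fix is a one-liner, but it should cite the $\ok$ judgement rather than declare it inert. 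With that repaired, your argument goes through and coincides with the paper's.
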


The following corollary expresses an essential heap separation
invariant enforced by \CLCONE. Informally, the corollary states
that objects ``within a box'' (reachable from a box reference) are
never mutated unless their box is ``open'' (a reference to the box
entry object is on the stack).

\begin{corollary}[Heap Separation]\label{coroll:core1-heap-sep}
  If $\vdash H \typ \star$ then:

  If $H \vdash FS$, $H ~;~ a \vdash FS \ok$,
  $\fsreduce{H}{FS}{H'}{FS'}$, \\ $FS = F \circ GS$, $F =
  \aframe{L}{\texttt{let}~x = y.f = z~\texttt{in}~t}{l}$, $L(y) = o'$,
  $boxRoot(o, FS)$, and $reach(H, o, o')$, then $w \mapsto o \in
  env(G)$ where $G \in FS$.
\end{corollary}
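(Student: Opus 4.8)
The plan is to derive the statement entirely from the frame-stack well-formedness invariants packaged in $H ~;~ a \vdash FS \ok$, which hold by hypothesis; the reduction hypothesis $\fsreduce{H}{FS}{H'}{FS'}$ serves only to witness that the top frame $F$ genuinely takes the assignment step \textsc{E-Assign} (so that $L(y) = o'$ really is an object reference and the separation invariants are in force). I do not expect to need Preservation for the pre-state here, only the invariants already assumed.

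First I would rule out $o$ being a box root of the top frame $F$ itself. Suppose $x \mapsto b(o) \in L$ for some $x$. Since $L(y) = o'$ is a non-box reference, $boxObjSep(H, F)$ yields $sep(H, o, o')$; instantiating the definition of $sep$ with $q = q' = o'$ and combining $reach(H, o, o')$ (a hypothesis) with the reflexivity of $reach$ gives $o' \neq o'$, a contradiction. Hence $\neg\,boxRoot(o, F)$, and since $boxRoot(o, FS)$ with $FS = F \circ GS$, we obtain $boxRoot(o, GS)$. Together with $L(y) = o'$ and $reach(H, o, o')$ this is exactly $openbox(H, o, F, GS)$: box $o$ is open in the top frame.

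Next I would walk down the stack. Each frame interposed by \textsc{E-Invoke} carries a non-empty annotation, so $openBoxPropagation$ applies there and transports $openbox(\cdot, o, \cdot, \cdot)$ one level deeper. Iterating, I reach the first $\epsilon$-annotated frame $G$ at or below $F$, with $o$ still open in $G$. Because $openbox$ requires a box root strictly below $G$, this $G$ cannot be the initial frame; by the operational semantics the only other source of $\epsilon$-annotated frames is \textsc{E-Open}, so $G$ arose by opening some box $b(o'')$, and that step installed the binding $z \mapsto o''$ directly. The opened box $o''$ is itself open in $G$, so by $uniqueOpenBox(H, G, \cdot)$ the open box is unique and $o'' = o$; therefore $z \mapsto o \in env(G)$, which is the claim with $w = z$.

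The main obstacle is this last step: the explicitly stated invariants never assert that an $\epsilon$-annotated frame retains a \emph{direct} reference to the box it opened — they only constrain reachability through $openbox$. To close the gap I would record, as an auxiliary invariant established alongside Preservation, that every $\epsilon$-annotated frame other than the initial one results from an \textsc{E-Open} of a unique box whose entry object it still references directly (the binding survives because single-frame reduction only adds or updates variables and the opened name is never rebound). With that characterization, $uniqueOpenBox$ pins down $o'' = o$ and the argument closes. I also have to verify that the propagation chain is not severed by an intervening nested open of a \emph{different} box $o_2$: this follows because $boxSeparation$ forces the contents of $o_2$ to be disjoint from those of $o$, so $o$ cannot be open in such a nested frame, and the first $\epsilon$-frame the chain reaches must indeed be the opener of $o$.
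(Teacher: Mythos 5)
Your argument is essentially correct, and its second half takes a genuinely different route from the paper's. The first half is identical to the paper: $boxObjSep(H,F)$ together with $L(y)=o'$ and $reach(H,o,o')$ rules out $boxRoot(o,F)$, so $boxRoot(o,GS)$. The paper then finishes with a \emph{provenance} argument: it picks the frame $G'\in GS$ holding $b(o)$, notes that the transition rules never let a box reference be dereferenced, and concludes that the reference into $o$'s contents held by $F$ can only have entered the stack through an \textsc{E-Open} of $b(o)$ between $F$ and $G'$; that opener frame is the witness $G$. You instead stay inside the stated stack invariants: you form $openbox(H,o,F,GS)$, push it down through the \textsc{E-Invoke} frames with $openBoxPropagation$, and at the first $\epsilon$-annotated frame use $uniqueOpenBox$ to force the box opened there to be $o$ itself. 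What this buys is an argument that a preservation-style induction could actually discharge, at exactly the cost you flag: the stated invariants nowhere assert that an $\epsilon$-frame still holds a \emph{direct} binding to the entry object of the box it opened (nor that its creator still holds $b(o)$), so your auxiliary invariant is genuinely needed. It is worth noting that the paper's sketch silently assumes the same survival fact in its last sentence (``by \textsc{E-Open} \ldots $w \mapsto o \in env(G)$''), so you have surfaced, not created, that obligation. One repair: your closing appeal to $boxSeparation$ to exclude a nested open of a \emph{different} box is the wrong invariant --- that predicate constrains box roots $b(\cdot)$, which the $\epsilon$-frame need not hold; the $uniqueOpenBox$ argument you already give (both $o$ and the frame's own opened box would be open there) is the one that settles that case.
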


\begin{proof}[Proof sketch]
  First, by $H ~;~ a \vdash FS \ok$, \textsc{FS-ok}, and $FS = F \circ
  GS$ we have $H ~;~ a \vdash F \ok$. By \textsc{F-ok} we have
  $boxObjSep(H, F)$. By def.~\ref{def:box-obj-sep1} this means $u
  \mapsto b(o) \notin L$ and therefore $\lnot boxRoot(o, F)$. Given
  that $L(y) = o'$, $reach(H, o, o')$, and $boxRoot(o, FS)$ it must be
  that \\ $boxRoot(o, GS)$. Given that $boxRoot(o, GS)$ by
  def.~$boxRoot$ (Figure~\ref{fig:core1-aux-predicates}), there is a
  frame $G' \in GS$ such that $u \mapsto b(o) \in
  env(G')$. Well-formedness of $FS$ implies well-formedness of all its
  frames (\textsc{FS-ok}). Therefore, $G'$ is well-formed and by
  \textsc{F-ok}, $o$ is disjoint from other boxes
  (def.~\ref{def:box-sep-frame1}) and other objects
  (def.~\ref{def:box-obj-sep1}) reachable in $G'$, including the
  \texttt{global} variable. By the transition rules, box reference $u
  \mapsto b(o)$ prevents field selection; as a result, between frames
  $F$ and $G'$ there must be a frame created by opening $b(o)$. By
  \textsc{E-Open}, this means there is a frame $G \in FS$ such that $w
  \mapsto o \in env(G)$.
\end{proof}

\subsection{Lightweight Affinity}\label{sec:core2}


\comment{
Through the consumption of boxes, patterns such as merging two
boxes, ownership transfer of boxes, and unique fields are supported.}

This section introduces the \textsc{CoreLaCasa}$^{2}$
language \newline (\textsc{CLC}$^{2}$) which extends
\textsc{CLC}$^{1}$ with {\em affinity}, such that boxes may be
consumed at most once. Access to boxes is controlled using {\em
  permissions}. Permissions themselves are neither flow-sensitive nor
affine.  Consequently, they can be maintained in the type environment
$\Gamma$. Our notion of affinity is based on {\em continuation terms:}
consumption of permissions, and, thus, boxes, is only possible in
contexts where an explicit continuation is provided.  The
consumed permission is then no longer available in the continuation.

\begin{figure}[t]
  \centering
$\ba[t]{l@{\hspace{2mm}}l}
p    ::=  \seq{cd}~\seq{vd}~t                                       & \mbox{program}              \\
cd   ::=  \texttt{class}~C~\texttt{extends}~D~\{\highlight{\seq{fd}}~\seq{md}\} & \mbox{class}    \\
vd   ::=  \texttt{var}~f \typ C                                     & \mbox{variable} \\
ud   ::=  \texttt{var}~f \typ \highlight{\texttt{Box}[C]}           & \mbox{unique field} \\
fd   ::=  \highlight{vd ~|~ ud}                                     & \mbox{field}    \\
md   ::=  \texttt{def}~m(x \typ \sigma) \typ \highlight{C} = t      & \mbox{method}   \\
\sigma,\tau ::=                                                     & \mbox{surface type}         \\
\gap C, D                                                           & \gap\mbox{class type}       \\
\gap ~|~  \texttt{Box}[C]                                           & \gap\mbox{box type}         \\
\gap ~|~  \texttt{Null}                                             & \gap\mbox{null type}  \\
\pi ::=                                                             & \mbox{type}                 \\
\gap \sigma, \tau                                                   & \gap\mbox{surface type}     \\
\gap \highlight{Q \tri \texttt{Box}[C]}                             & \gap\mbox{guarded type} \\
\gap \highlight{\bot}                                               & \gap\mbox{bottom type} \\
\ea$
\caption{\textsc{CLC}$^{2}$ syntax. $C$, $D$ range over class names,
  $f$, $m$, $x$ range over term names. $Q$ ranges over abstract
  types.}
  \label{fig:core2-syntax}
\end{figure}

\paragraph{Syntax}

Figure~\ref{fig:core2-syntax} and Figure~\ref{fig:core2-syntax2} show the
syntactic differences between \textsc{CLC}$^{2}$ and
\textsc{CLC}$^{1}$: first, field types are either class types $C$ or
box types $\texttt{Box}[C]$; second, we introduce a bottom type $\bot$
and {\em guarded types} $\GuaT{Q}{C}$ where $Q$ ranges over a
countably infinite supply of abstract types; third, we introduce {\em
  continuation terms}.

In \textsc{CLC}$^{2}$ types are divided into {\em surface types} which
can occur in the surface syntax, and general {\em types}, including
guarded types, which cannot occur in the surface syntax; guarded types
are only introduced by type inference (see
Section~\ref{sec:static-semantics}). The bottom type $\bot$ is the
type of continuation terms $t^c$; these terms come in three forms:
\begin{enumerate}

\item A term $\texttt{box}[C]~\{ x \Rightarrow t \}$ creates a box
  containing a new instance of class type $C$, and makes that box
  accessible as $x$ in the continuation $t$. Note that in
  \textsc{CLC}$^{1}$ boxes are created using expressions of the form
  $\texttt{box}[C]$. In \textsc{CLC}$^{2}$ we require the continuation
  term $t$, because creating a box in addition creates a {\em
    permission} only available in $t$.

\item A term $\texttt{capture}(x.f, y)~\{ z \Rightarrow t \}$ merges
  two boxes $x$ and $y$ by assigning the value of $y$ to the field $f$
  of the value of $x$. In the continuation $t$ (a) $y$'s permission is
  no longer available, and (b) $z$ refers to box $x$.

\item A term $\texttt{swap}(x.f, y)~\{ z \Rightarrow t \}$ extracts
  the value of the unique field $x.f$ and makes it available as the
  value of a box $z$ in the continuation $t$; in addition, the value
  of box $y$ replaces the previous value of $x.f$. Finally, $y$'s
  permission is consumed.

\end{enumerate}

Given that only continuation terms can create boxes in \CLCTWO, method
invocations cannot return boxes unknown to the caller. As a result,
any box returned by a method invocation must have been passed as the
single argument in the invocation. However, a method that takes a box
as an argument, and returns the same box can be expressed using a
combination of \verb|open| and a method that takes the contents of the
box as an argument. Therefore, method return types are always class
types in \CLCTWO, simplifying the meta-theory.

\begin{figure}[t]
  \centering

$\ba[t]{l@{\hspace{2mm}}l}
t    ::=                                                   & \mbox{terms}                 \\
\gap     x                                                 & \gap\mbox{variable}          \\
\gap ~|~ \texttt{let}~x = e~\texttt{in}~t                  & \gap\mbox{let binding}       \\
\gap ~|~ \highlight{t^c}                                   & \gap\mbox{continuation term} \\
~ & ~ \\
e    ::=                                                 & \mbox{expressions}           \\
\gap     \texttt{null}                                   & \gap\mbox{null reference}    \\
\gap ~|~ x                                               & \gap\mbox{variable}          \\
\gap ~|~ x.f                                             & \gap\mbox{selection}         \\
\gap ~|~ x.f = y                                         & \gap\mbox{assignment}        \\
\gap ~|~ \texttt{new}~C                                  & \gap\mbox{instance creation} \\
\gap ~|~ x.m(y)                                          & \gap\mbox{invocation}        \\
\gap ~|~ x.\texttt{open}~\{ y \Rightarrow t \}           & \gap\mbox{open box}          \\
~ & ~ \\
t^c  ::=                                                   & \mbox{continuation term}     \\
\gap     \highlight{\texttt{box}[C]~\{ x \Rightarrow t \}} & \gap\mbox{box creation}      \\
\gap ~|~ \highlight{\texttt{capture}(x.f, y)~\{ z \Rightarrow t \}} & \gap\mbox{capture}           \\
\gap ~|~ \highlight{\texttt{swap}(x.f, y)~\{ z \Rightarrow t \}}    & \gap\mbox{swap}              \\
\ea$
  \caption{\textsc{CLC}$^{2}$ terms and expressions.}
  \label{fig:core2-syntax2}
\end{figure}

\subsubsection{Dynamic Semantics}

\CLCTWO~extends the dynamic semantics compared to \CLCONE \newline
with dynamically changing {\em permissions}. A dynamic access to a box
requires its associated permission to be available. For this, we
extend the reduction relations compared to \CLCONE~with {\em
  permission sets} $P$. Thus, a frame $\pframe L t P l$ combines a
variable environment $L$ and term $t$ with a set of permissions
$P$. (As before, the label $l$ is used for transferring return values
from method invocations.)

The transition rules of \CLCTWO~for single frames are identical to the
corresponding transition rules of \CLCONE; the permission sets do not
change.\footnote{Therefore, the transition rules (trivially) extended
  with permission sets are only shown in
  Appendix~\ref{app:core2-preservation-proof}.} In contrast, the
transition rules for {\em frame stacks} affect the permission sets of
frames.

The extended transition rules of \CLCTWO~are shown in
Figure~\ref{fig:core2-frame-stack-rules}. Rule \textsc{E-Invoke}
additionally requires permission $p$ to be available in $P$ in case
the argument of the invocation is a box protected by $p$; in this case
permission $p$ is also transferred to the new frame (the ``activation
record'').  Reduction gets stuck if permission $p$ is not
available. Rules \textsc{E-Return1} and \textsc{E-Return2} do not
affect permission sets and are otherwise identical to the
corresponding rules of \CLCONE. Rule \textsc{E-Open} requires that
permission $p$ of the box-to-open $b(o, p)$ is one of the currently
available permissions $P$. The permission set of the new frame is
empty. Rule \textsc{E-Box} creates a box $b(o, p)$ accessible in
continuation $t$ using fresh permission $p$. Note that rule
\textsc{E-Box} discards frame stack $FS$ in favor of the continuation
$t$.

\begin{figure}[t]
  \centering

\infrule[\textsc{E-Invoke}]
{ H(L(y)) = \obj C {FM} \andalso mbody(C, m) = x \rightarrow t' \\
  L' = L_0[\texttt{this} \mapsto L(y), x \mapsto L(z)] \\
  P' = \emptyset \lor (L(z) = b(o, p) \land p \in P \land P' = \{p\})
}
{ \fsreducebreak H {\pframe L {\texttt{let}~x = y.m(z)~\texttt{in}~t} P l \circ FS} H {\pframe {L'} {t'} {P'} x \circ \pframe L t P l \circ FS}
}

\vspace{0.3cm}

\infax[\textsc{E-Return1}]
{ \fsreducebreak H {\pframe L x P y \circ \pframe {L'} {t'} {P'} l \circ FS} H {\pframe {L'[y \mapsto L(x)]} {t'} {P'} l \circ FS}
}

\vspace{0.3cm}

\infax[\textsc{E-Return2}]
{ \fsreducebreak H {\pframe L x P {\epsilon} \circ \pframe {L'} {t'} {P'} l \circ FS} H {\pframe {L'} {t'} {P'} l \circ FS}
}

\vspace{0.3cm}

\infrule[\textsc{E-Open}]
{ L(y) = b(o, p) \andalso p \in P \andalso L' = [z \mapsto o]
}
{ \fsreducebreak H {\pframe L {\texttt{let}~x = y.\texttt{open}~\{ z \Rightarrow t' \}~\texttt{in}~t} P l \circ FS} H {\pframe {L'} {t'} {\emptyset} {\epsilon} \circ \pframe {L[x \mapsto L(y)]} t P l \circ FS}
}

\vspace{0.3cm}

\infrule[\textsc{E-Box}]
{ o \notin dom(H) \andalso fields(C) = \seq{f} \\
  H' = H[o \mapsto \obj C {\seq{f \mapsto \texttt{null}}}] \andalso p~\text{fresh} \\
}
{ \fsreducebreak H {\pframe L {\texttt{box}[C]~\{ x \Rightarrow t \}} P l \circ FS} {H'} {\pframe {L[x \mapsto b(o, p)]} t {P \cup \{p\}} {\epsilon} \circ \epsilon}
}

\comment{
\vspace{0.3cm}

\infrule[\textsc{E-Frame}]
{ \freduce H F {H'} {F'}
}
{ \fsreduce H {F \circ FS} {H'} {F' \circ FS}
}}

  \caption{\CLCTWO~frame stack transition rules.}
  \label{fig:core2-frame-stack-rules}
\end{figure}

Figure~\ref{fig:core2-frame-stack-rules-capture-swap} shows \CLCTWO's
two new transition rules. \textsc{E-Capture} merges box $b(o, p)$ and
box $b(o', p')$ by assigning $o'$ to field $f$ of object $H(o)$. The
semantics of \texttt{capture} is thus similar to that of a regular
field assignment. However, \texttt{capture} additionally requires both
permissions $p$ and $p'$ to be available; moreover, in continuation
$t$ permission $p'$ is no longer available, effectively {\em
  consuming} box $b(o', p')$. Like \textsc{E-Box}, \textsc{E-Capture}
discards frame stack $FS$.  Finally, \textsc{E-Swap} provides access
to a unique field $f$ of an object in box $b(o, p)$: in continuation
$t$ variable $z$ refers to the previous object $o''$ in $f$; the
object $o'$ in box $b(o', p')$ replaces $o''$. Like
\textsc{E-Capture}, \textsc{E-Swap} requires both permissions $p$ and
$p'$ to be available, and in continuation $t$ permission $p'$ is no
longer available, consuming box $b(o', p')$.

\begin{figure}
  \centering

\infrule[\textsc{E-Capture}]
{ L(x) = b(o, p) \andalso L(y) = b(o', p') \andalso \{ p, p' \} \subseteq P \\
  H(o) = \obj C {FM} \andalso H' = H[o \mapsto \obj C {FM[f \mapsto o']}] \\
}
{ \fsreducebreak H {\pframe L {\texttt{capture}(x.f, y)~\{ z \Rightarrow t \}} P l \circ FS} {H'} {\pframe {L[z \mapsto L(x)]} t {P \setminus \{ p' \}} {\epsilon} \circ \epsilon}
}

  \vspace{0.3cm}

  \infrule[\textsc{E-Swap}] {
    L(x) = b(o, p) \andalso L(y) = b(o', p') \andalso \{ p, p' \} \subseteq P \\
    H(o) = \obj C {FM} \andalso FM(f) = o'' \andalso p''~\text{fresh} \\
    H' = H[o \mapsto \obj C {FM[f \mapsto o']}] \\
  } {
    \fsreducebreak H {\pframe L {\texttt{swap}(x.f, y)~\{ z \Rightarrow t \}} P l \circ FS} {H'} {\pframe {L[z \mapsto b(o'', p'')]} t {(P \setminus \{ p' \}) \cup \{ p'' \}} {\epsilon} \circ \epsilon}
  }

  \caption{Transition rules for \texttt{capture} and \texttt{swap}.}
  \label{fig:core2-frame-stack-rules-capture-swap}
\end{figure}

\subsubsection{Static Semantics}\label{sec:static-semantics}

\begin{figure}
  \centering
  \infrule[\textsc{WF-Method1}] {
    \Gamma_0 , \texttt{this} \typ C, x \typ D ~;~ \epsilon \vdash t \typ E' \\
    E' \sub E
  } {
    C \vdash \texttt{def}~m(x \typ D) \typ E = t
  }
  \vspace{0.3cm}
  \infrule[\textsc{WF-Method2}] {
    \Gamma = \Gamma_0 , \texttt{this} \typ C, x \typ \highlight{\GuaT Q D, \Perm Q} \\
    Q~\text{fresh} \andalso \Gamma ~;~ \epsilon \vdash t \typ E' \andalso E' \sub E
  } {
    C \vdash \texttt{def}~m(x \typ \BoxT D) \typ E = t
  }
  \caption{Well-formed \CLCTWO~methods.}
  \label{fig:clc2-wf-programs}
\end{figure}

\paragraph{Well-Formed Programs.}

\CLCTWO~adapts method \\ well-formedness for the case where static
permissions are propagated to the callee context: the body of a method
with a parameter of type $\BoxT D$ is type-checked in an environment
$\Gamma$ which includes a static permission $\Perm Q$ where $Q$ is a
fresh abstract type; furthermore, the parameter has a {\em guarded
  type} $\GuaT Q D$. This environment $\Gamma$ ensures the method body
has full access to the argument box.

The \textsc{Ocap-*} rules for \CLCTWO~treat box-typed method
parameters analogously, and are left to the appendix.

\paragraph{Subclassing and Subtypes.}

In \CLCTWO, the subtyping relation $\sub$ is identical to that of
\CLCONE except for one additional rule for the $\bot$ type:

\setlength{\columnsep}{0pt}
\begin{multicols}{2}
  \infax[\textsc{$\sub$-Bot}]
        { \bot \sub \pi }
\end{multicols}
\noindent
The \textsc{$\sub$-Bot} rule says that $\bot$ is a subtype of any
type $\pi$. (Note that type \texttt{Null} is a subtype of any
{\em surface type}, whereas $\pi$ ranges over {\em all} types
including guarded types.)

\begin{figure}
  \centering
\infrule[\textsc{T-Invoke}] {
  \Gamma ~;~ a \vdash x \typ C \andalso mtype(C, m) = \sigma \rightarrow \tau \\
  \Gamma ~;~ a \vdash y \typ \sigma' \andalso \sigma' \sub \sigma~\lor \\
  \highlight{(\sigma = \BoxT D \land \sigma' = \GuaT Q D \land \Perm Q \in \Gamma)}
} {
  \Gamma ~;~ a \vdash x.m(y) \typ \tau
}

\comment{
\vspace{0.3cm}

\infrule[\textsc{T-Invoke2}] {
  \Gamma ~;~ a \vdash x \typ C \andalso \Gamma ~;~ a \vdash y \typ \highlight{\GuaT Q D} \\
  mtype(C, m) = \BoxT D \rightarrow E \andalso \highlight{\Perm Q \in \Gamma}
} {
  \Gamma ~;~ a \vdash x.m(y) \typ E
}}

\vspace{0.3cm}

\infrule[\textsc{T-New}] {
  a = \texttt{ocap} \implies ocap(C) \\
  \highlight{\forall~\texttt{var}~f \typ \sigma \in \seq{fd}.~\exists~D.~\sigma = D}
} {
  \Gamma ~;~ a \vdash \texttt{new}~C \typ C
}

\vspace{0.3cm}

\infrule[\textsc{T-Open}] {
  \Gamma ~;~ a \vdash x \typ \highlight{\GuaT Q C} \andalso \highlight{\Perm Q \in \Gamma} \\
  y \typ C ~;~ \texttt{ocap} \vdash t \typ \sigma
} {
  \Gamma ~;~ a \vdash x.\texttt{open}~\{ y \Rightarrow t \} \typ \highlight{\GuaT Q C}
}

\vspace{0.3cm}

\infrule[\textsc{T-Box}] {
  ocap(C) \andalso \highlight{Q~\text{fresh}} \\
  \Gamma, x \typ \highlight{\GuaT Q C}, \highlight{\Perm Q} ~;~ a \vdash t \typ \sigma
} {
  \Gamma ~;~ a \vdash \texttt{box}[C]~\{ x \Rightarrow t \} \typ \highlight{\bot}
}

\vspace{0.3cm}
  \caption{\CLCTWO~term and expression typing.}
  \label{fig:core2-type-rules}
\end{figure}

\paragraph{Term and Expression Typing.}

Figure~\ref{fig:core2-type-rules} shows the changes in the type
rules. In \textsc{T-Invoke}, if the method parameter has a type $\BoxT
D$ then the argument $y$ must have a type $\GuaT Q D$ such that the
static permission $\Perm Q$ is available in $\Gamma$. (Otherwise, box
$y$ has been consumed.) \textsc{T-New} checks that none of the field
types are box types. This makes sure classes with box-typed fields are
only created using \verb|box| expressions. (Box-typed fields are then
accessible using \verb|swap|.) \textsc{T-Open} requires the static
permission $\Perm Q$ corresponding to the guarded type $\GuaT Q C$ of
the opened box $x$ to be available in $\Gamma$; this ensures consumed
boxes are never opened. Finally, \textsc{T-Box} assigns a guarded type
$\GuaT Q C$ to the newly created box $x$ where $Q$ is a fresh abstract
type; the permission $\Perm Q$ is available in the type context of the
continuation term $t$.  The \verb|box| expression itself has type
$\bot$, since reduction never ``returns''; $t$ is the (only)
continuation.

\begin{figure}
  \centering
  \infrule[\textsc{T-Capture}] {
    \Gamma ~;~ a \vdash x \typ \GuaT Q C \andalso \Gamma ~;~ a \vdash y \typ \GuaT {Q'} D \\
    \{ \Perm Q, \Perm {Q'} \} \subseteq \Gamma \andalso D \sub ftype(C, f) \\
    \Gamma \setminus \{ \Perm {Q'} \}, z \typ \GuaT Q C ~;~ a \vdash t \typ \sigma
  } {
    \Gamma ~;~ a \vdash \texttt{capture}(x.f, y)~\{ z \Rightarrow t \} \typ \bot
  }
  \vspace{0.3cm}
  \infrule[\textsc{T-Swap}] {
    \Gamma ~;~ a \vdash x \typ \GuaT Q C \andalso \Gamma ~;~ a \vdash y \typ \GuaT {Q'} {D'} \\
    \{ \Perm Q, \Perm {Q'} \} \subseteq \Gamma \andalso ftype(C, f) = \texttt{Box}[D] \\
    D' \sub D \andalso R~\text{fresh} \\
    \Gamma \setminus \{ \Perm {Q'} \}, z \typ \GuaT R D, \Perm R ~;~ a \vdash t \typ \sigma
  } {
    \Gamma ~;~ a \vdash \texttt{swap}(x.f, y)~\{ z \Rightarrow t \} \typ \bot
  }
  \caption{Typing \CLCTWO's \texttt{capture} and \texttt{swap}.}
  \label{fig:core2-capture-swap}
\end{figure}

Figure~\ref{fig:core2-capture-swap} shows the type rules for \CLCTWO's
two new expressions. Both rules require $x$ and $y$ to have guarded
types such that the corresponding permissions are available in
$\Gamma$. In both cases the permission of $y$ is removed from the
environment used to type-check the continuation $t$; thus, box $y$ is
consumed in each case. In its continuation, \verb|capture| provides
access to box $x$ under alias $z$; thus, $z$'s type is equal to $x$'s
type. In contrast, \verb|swap| extracts the value of a {\em unique
  field} and provides access to it under alias $z$ in its
continuation. \CLCTWO~ensures the value extracted from the unique
field is externally unique. Therefore, the type of $z$ is a guarded
type $\GuaT R D$ where $R$ is fresh; permission $\Perm R$ is created
for use in continuation $t$.

\paragraph{Well-Formedness.}

\CLCTWO~extends \CLCONE~with unique fields of type $\texttt{Box}[C]$
(see Figure~\ref{fig:core2-syntax}); the following refined definition of
well-typed heaps in \CLCTWO~reflects this extension:

\begin{definition}[Well-typed Heap]\label{def:well-typed-heap2}
  A heap $H$ is well-typed, written $\vdash H \typ \star$ iff
  \begin{align*}
    \forall o \in dom(H).~ & H(o) = \obj{C}{FM} \implies \\
                           & (dom(FM) = fields(C)~ \land \\
                           & \forall f \in dom(FM).~FM(f) = \texttt{null}~ \lor \\
                           & (typeof(H, FM(f)) \sub D~ \land \\
                           & \quad ftype(C, f) \in \{D, \texttt{Box}[D]\})) \\
  \end{align*}
\end{definition}

The most interesting additions of \CLCTWO with respect to
well-formedness concern (a) separation invariants and (b) field
uniqueness. In \CLCONE, two boxes $x$ and $y$ are separate as long as
$x$ is not an alias of $y$. In \CLCTWO, the separation invariant is
more complex, because \verb|capture| merges two boxes, and \verb|swap|
replaces the value of a unique field. The key idea is to make {\em
  separation conditional on the availability of permissions}.

Box separation for frames in \CLCTWO~is defined as follows:
\begin{definition}[Box Separation]\label{def:box-sep-frame}
  For heap $H$ and frame $F$, $boxSep(H, F)$ holds iff

  $F = \pframe L t P l \land \forall x \mapsto b(o, p), y \mapsto
  b(o', p') \in L.~ p \neq p' \land \{ p, p' \} \subseteq P \implies
  sep(H, o, o')$
\end{definition}
\noindent
Two box references are disjoint if they are guarded by two different
permissions which are both available. As soon as a box is consumed,
\eg via \verb|capture|, box separation no longer holds, as expected.
In other invariants like $boxObjSep$, box permissions are not
required. Similarly, the differences in $boxOcap$ and $globalOcapSep$
are minor, and therefore left to the appendix.

\comment{
The main difference between box separation in \CLCONE~and in \CLCTWO~

\CLCTWO~extends frame and frame stack typing with permissions.

The
details are unsurprising and therefore left to the appendix.}

\CLCTWO's addition of unique fields requires a new {\em field
  uniqueness} invariant for well-formed frames:
\begin{definition}[Field Uniqueness]\label{def:field-uniqueness}
  For heap $H$ and frame $F = \pframe L t P l$, $fieldUniqueness(H,
  F)$ holds iff

  $\forall x \mapsto b(o, p) \in L, o', \hat{o} \in dom(H).~\\ p \in P
  ~\land~ reach(H, o, \hat{o}) ~\land~ H(\hat{o}) = \obj C {FM}
  ~\land~ \\ ftype(C, f) = \BoxT D ~\land~ reach(H, FM(f), o') \implies
  domedge(H, \hat{o}, f, o, o')$
\end{definition}
\noindent
This invariant expresses the fact that all reference paths from a box
$b(o, p)$ to an object $o'$ reachable from a unique field $f$ of
object $\hat{o}$ must ``go through'' that unique field. In other
words, in all reference paths from $o$ to $o'$, the edge $(\hat{o},
f)$ is a {\em dominating edge}. (A precise definition of $domedge$
appears in the appendix.)

\paragraph{Frame Stack Invariants.}

The frame stack invariants of \CLCTWO~are extended to take the
availability of permissions into account. For example, box separation
is now only preserved for boxes (a) that are not controlled by the
same permission, and (b) whose permissions are available:

\begin{definition}[Box Separation]\label{def:box-separation}
  Frame $F$ and frame stack $FS$ satisfy the box separation property
  in $H$, written \\ $boxSep(H, F, FS)$ iff

  $\forall o, o' \in dom(H).~boxRoot(o, F, p) \land \\ boxRoot(o', FS, p') \land p \neq p'
  \implies sep(H, o, o')$
\end{definition}
\noindent

Note that the availability of permissions is required indirectly by
the $boxRoot$ predicate (its other details are uninteresting, and
therefore omitted).

\comment{
\begin{definition}[Unique Open Box]\label{def:unique-open-box}
  Frame $F$ and frame stack $FS$ satisfy the unique open box property
  in $H$, written $uniqueOpenBox(H, F, FS)$ iff

  $\forall o, o' \in dom(H).~openbox(H, o, F, FS) \land \\ openbox(H, o', F, FS) \implies o = o'$
\end{definition}

\begin{definition}[Open Box Propagation]\label{def:open-box-propag}
  Frame $F^l$ and frame stack $FS$ satisfy the open box propagation property
  in $H$, written $openBoxPropagation(H, F^l, FS)$ iff

  $l \neq \epsilon \land FS = G \circ GS \land openbox(H, o, F, FS) \implies openbox(H, o, G, GS)$
\end{definition}
}

\section{Soundness}\label{sec:soundness}

\begin{theorem}[Preservation]\label{thm:core2-preservation}
  If $\vdash H \typ \star$ then:
  \begin{enumerate}

  \item If $H \vdash F \typ \sigma$, $H ~;~ a \vdash F \ok$, and
    $\freduce H F {H'} {F'}$ then $\vdash H' \typ \star$, $H' \vdash
    F' \typ \sigma$, and $H' ~;~ a \vdash F' \ok$.

  \item If $H \vdash FS$, $H ~;~ a \vdash FS \ok$, and $\fsreduce H
    {FS} {H'} {FS'}$ then $\vdash H' \typ \star$, $H' \vdash FS'$, and
    $H' ~;~ b \vdash FS' \ok$.

  \end{enumerate}
\end{theorem}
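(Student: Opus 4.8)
The plan is to prove both parts by case analysis on the last reduction rule applied, supported by a few standard auxiliary lemmas. Before the case analysis I would establish: (i) a \emph{heap-extension} lemma stating that adding a fresh object with \texttt{null} fields to a well-typed heap preserves $\vdash H \typ \star$ together with the type of every existing variable and every reachability and separation fact; (ii) an \emph{environment} lemma stating that if $\Gamma, x \typ \tau ~;~ a \vdash t \typ \sigma$ and $typeof(H, v) \sub \tau$ (or $v = \texttt{null}$), then binding $x \mapsto v$ yields a well-typed frame; and (iii) a \emph{reachability-under-mutation} lemma describing how $reach(H', \cdot, \cdot)$ changes when a single field $f$ of an object $o$ is overwritten. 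These lemmas isolate the routine bookkeeping, so that each reduction case reduces to checking the three conclusions: $\vdash H' \typ \star$, re-typability of the residual frame or stack, and preservation of the $\ok$ invariants.

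For Part~1 (single-frame reduction), among the applicable rules only \textsc{E-Assign} and \textsc{E-New} alter the heap. For \textsc{E-New} the heap-extension lemma gives $\vdash H' \typ \star$ immediately, and the residual frame is re-typed by inverting \textsc{T-Let} and applying the environment lemma with the fresh object's type $C$. For \textsc{E-Assign} I would invert \textsc{T-Let} and \textsc{T-Assign} to obtain $D' \sub D$ with $ftype(C,f) = D$, then verify the refined well-typed-heap condition of Def.~\ref{def:well-typed-heap2}, which admits field type $D$ or $\texttt{Box}[D]$. The invariants collected by \textsc{F-ok} ($boxSep$, $boxObjSep$, $boxOcap$, and $globalOcapSep$ under effect \ocap) are preserved because none of these rules creates or consumes a box reference; the only subtlety is that \textsc{E-Assign} mutates the heap, so I would invoke the reachability lemma to confirm that box/object disjointness and the ocap-reachability property survive the new edge, using $boxObjSep$ to rule out aliasing between the mutated object and any boxed object.

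For Part~2 (frame-stack reduction) I would treat \textsc{E-Invoke}, \textsc{E-Return1}, \textsc{E-Return2}, and \textsc{E-Open} as in \CLCONE, now additionally threading permission sets: for \textsc{E-Invoke} the callee frame is typed via \textsc{WF-Method1} or \textsc{WF-Method2} according to whether the argument is a class or a box, the latter introducing the guarded type $\GuaT Q D$ and permission $\Perm Q$ matching the transferred runtime permission $p$, and the effect $b$ of the new stack is read off from \textsc{FS-ok}. The three genuinely new rules \textsc{E-Box}, \textsc{E-Capture}, and \textsc{E-Swap} all \emph{discard} the frame stack $FS$, collapsing the result to a single-frame stack ending in $\epsilon$. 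This is a simplification: the cross-frame invariants $boxSeparation$, $uniqueOpenBox$, and $openBoxPropagation$ become vacuous or trivial on a one-frame stack, so the remaining obligations are $\vdash H' \typ \star$ and the single-frame $\ok$ invariants of the new top frame, now under the permission-conditional separation of Def.~\ref{def:box-sep-frame}.

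The main obstacle will be \textsc{E-Swap} (and, less severely, \textsc{E-Capture}), where I must re-establish field uniqueness (Def.~\ref{def:field-uniqueness}) after overwriting the unique field $f$ of $H(o)$ with the externally-unique object $o'$ while extracting the old contents $o''$ into a fresh box $b(o'', p'')$. Here I would argue that consuming $p'$ removes $b(o', p')$ from the available permissions, so by Def.~\ref{def:box-sep-frame} the sub-heap rooted at $o'$ need no longer be separate from the rest; that freshness of $p''$ makes $b(o'', p'')$ separate from everything reachable under the surviving permissions; and that the $domedge$ condition is preserved because the new edge $(\hat{o}, f)$ into $o'$ is, by the prior field uniqueness of the $y$-box together with the reachability lemma, a dominating edge on every path from $o$ into the freshly installed subtree. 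Proving this domination claim precisely---that no surviving reference path can bypass the swapped-in edge---is the crux and the step most likely to demand careful heap-structural reasoning; the corresponding argument for \textsc{E-Capture} is analogous but easier, since no extraction occurs and $z$ merely re-aliases the unchanged box $x$.
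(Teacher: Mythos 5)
Your plan is structurally close to the paper's proof (which splits the theorem into a single-frame lemma and a frame-stack lemma, each by induction on the reduction derivation), and several of your observations match it exactly: the discarded-stack rules \textsc{E-Box}, \textsc{E-Capture}, \textsc{E-Swap} do reduce to single-frame obligations because the resulting stack $F' \circ \epsilon$ is checked by \textsc{SingFS-ok}, and your identification of $fieldUniqueness$ after \textsc{E-Swap} (separation $sep(H,o,o')$ from $boxSep$, freshness of $p''$, and the $domedge$ condition for the swapped-in subtree) is precisely where the paper's \textsc{E-Swap} case does its work. However, there is a genuine gap in Part~2: the frame-stack reduction relation also includes the lifting of a single-frame step to the top of a \emph{multi-frame} stack (the \textsc{E-Frame} case in the paper's proof), and your case list omits it entirely. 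Your Part~1 only re-establishes the top frame's typing and $\ok$ judgement; but the conclusion $H' ~;~ b \vdash FS' \ok$ requires re-establishing, in the mutated heap $H'$, the per-frame invariants ($boxSep$, $boxObjSep$, $boxOcap$, $fieldUniqueness$) of \emph{every lower frame} together with the cross-frame invariants $boxSeparation$, $uniqueOpenBox$, and $openBoxPropagation$, which all mention $H$. When \textsc{E-Assign} fires under a non-trivial stack, the new edge $o.f \mapsto L(z)$ can in principle create paths that violate a lower frame's box separation, so these obligations are not inherited from Part~1 and are not ``routine bookkeeping.''

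This is in fact the hardest case in the paper's proof, and it needs an idea absent from your proposal: an auxiliary lemma (Lemma~\ref{lem:lemma3}) stating that if some box is open in the top frame, i.e.\ $openbox(H, \hat{o}, F, FS)$, then the frame is typed under effect $a = \texttt{ocap}$ --- proved by its own induction on the stack using $openBoxPropagation$ and the fact that only \textsc{E-Open} creates $\epsilon$-annotated frames. Combined with $globalOcapSep$ and an inner induction on the size of $FS$ (with a case split on whether any box is open in $F$), this confines the effect of the mutation so that the lower frames' invariants survive; your reachability-under-mutation lemma is deployed only for the single reduced frame and provides no bridge from the top-frame write to the frames below. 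A smaller omission in the same direction: \CLCTWO's frame typing includes the permission-consistency judgement $\vdash \Gamma ; L ; P$ (\textsc{WF-Perm}, with its injection $\gamma$ from static to dynamic permissions), which must be re-threaded in every case --- e.g.\ extended with $\gamma' = \gamma[Q \mapsto p]$ in \textsc{E-Box} --- and your environment lemma as stated does not account for it.
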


\begin{proof}

  Part (1) is proved by induction on the derivation of $\freduce H F
  {H'} {F'}$. Part (2) is proved by induction on the derivation of
  $\fsreduce H {FS} {H'} {FS'}$ and part (1). (See
  Appendix~\ref{app:core2-preservation-proof} for the full proof.)

\end{proof}

\begin{theorem}[Progress]\label{thm:core2-progress}
  If $\vdash H \typ \star$ then:

  If $H \vdash FS$ and $H ~;~ a \vdash FS \ok$ then either
  $\fsreduce H {FS} {H'} {FS'}$ or $FS = \pframe L x P l \circ \epsilon$ or $FS = F \circ GS$
  where
  \begin{itemize}
  \item $F = \pframe L {\texttt{let}~x = t~\texttt{in}~t'} P l$,
        $t \in \{ y.f, y.f = z, y.m(z), y.\texttt{open}~\{ z \Rightarrow t'' \} \}$,
        and $L(y) = \texttt{null}$; or
  \item $F = \pframe L {\texttt{capture}(x.f, y)~\{ z \Rightarrow t \}} P l$ where $L(x) = \texttt{null} \wedge L(y) = \texttt{null}$; or
  \item $F = \pframe L {\texttt{swap}(x.f, y)~\{ z \Rightarrow t \}} P l$ where $L(x) = \texttt{null} \wedge L(y) = \texttt{null}$.
  \end{itemize}

\end{theorem}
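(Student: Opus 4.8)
The plan is to prove progress by case analysis on the shape of the top frame of $FS$, after separating the genuinely terminal configuration. Write $FS = F \circ FS'$ with $F = \pframe L t P l$. Since the conclusion only mentions frame-stack reductions $\fsreduce H {FS} {H'} {FS'}$, I first record a small single-frame progress sub-lemma for the expressions that reduce through the (lifted) single-frame rule \textsc{E-Frame}: if $t = \texttt{let}~x = e~\texttt{in}~t'$ with $e \in \{ \texttt{null}, y, y.f, y.f = z, \texttt{new}~C \}$, then either one of \textsc{E-Null}, \textsc{E-Var}, \textsc{E-Select}, \textsc{E-Assign}, \textsc{E-New} fires—so \textsc{E-Frame} yields a frame-stack step—or the receiver $y$ of a selection/assignment is \texttt{null}, which is exactly one of the listed stuck forms. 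The remaining top-level terms, namely $t = x$, the box operations $\{ \texttt{box}, \texttt{capture}, \texttt{swap} \}$, and the let bindings over $\{ y.m(z), y.\texttt{open}~\{\ldots\} \}$, are driven by the dedicated frame-stack rules of Figures~\ref{fig:core2-frame-stack-rules} and~\ref{fig:core2-frame-stack-rules-capture-swap}.

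For $t = x$ I split on $FS'$: if $FS' = \epsilon$ the configuration is the terminal form $\pframe L x P l \circ \epsilon$; otherwise \textsc{E-Return1} fires when $l \neq \epsilon$ and \textsc{E-Return2} when $l = \epsilon$, both unconditionally. For $t = \texttt{box}[C]~\{ x \Rightarrow t' \}$, rule \textsc{E-Box} always applies: a fresh $o$ and fresh permission $p$ exist and $FS'$ is discarded, so no side condition can block it. The interesting cases are invocation, open, capture, and swap, where I must rule out every stuck state except the null dereferences named in the theorem. In each case I appeal to well-typedness to pin down the \emph{kind} of value a variable holds: from $H \vdash FS$ and the frame-typing rules the top frame carries a derivation $\Gamma ~;~ a \vdash t \typ \sigma$ with $H \vdash \Gamma ; L$, so \textsc{WF-Var} forces a variable of class type to hold \texttt{null} or an object identifier, and a variable of guarded type $\GuaT Q C$ to hold \texttt{null} or a box reference $b(o, p)$. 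Hence for $y.f$, $y.f = z$, and $y.m(z)$ the receiver is \texttt{null} (a listed stuck form) or an object $o$ with $H(o) = \obj C {FM}$; the well-typed-heap invariant (Def.~\ref{def:well-typed-heap2}) together with \textsc{T-Select}/\textsc{T-Assign}/\textsc{T-Invoke} then guarantees that the referenced field, respectively method body, exists, so the corresponding rule fires once the permission side condition of \textsc{E-Invoke} is discharged as below.

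The main obstacle is the permission bookkeeping, which is exactly what distinguishes \CLCTWO~progress from a textbook proof. Rules \textsc{E-Open}, \textsc{E-Capture}, and \textsc{E-Swap} are guarded by dynamic side conditions $p \in P$ (and $\{ p, p' \} \subseteq P$), so progress can hold only if these dynamic permissions are present whenever the typing derivation used their static counterparts. I therefore need the refined frame well-formedness invariant of \CLCTWO~(the $\Gamma$–$(L, P)$ correspondence deferred to the appendix, which augments \textsc{WF-Env}/\textsc{WF-Var} with permissions): whenever $x$ has guarded type $\GuaT Q C$ with $\Perm Q \in \Gamma$, then $L(x) = b(o, p)$ for a \emph{non-null} box reference with $p \in P$. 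Two consequences follow. First, \textsc{T-Open} supplies $\Perm Q \in \Gamma$ for the opened box, so the matching $p \in P$ holds and \textsc{E-Open} fires once $L(y) \neq \texttt{null}$; likewise \textsc{T-Capture}/\textsc{T-Swap} supply both $\Perm Q, \Perm{Q'} \in \Gamma$, yielding $\{ p, p' \} \subseteq P$, and the guarded branch of \textsc{T-Invoke} supplies $\Perm Q$ so that the permission-transferring instance of \textsc{E-Invoke} is enabled. Second, and this is what collapses the \texttt{capture}/\texttt{swap} stuck forms to the stated shape, the same invariant shows the guarded operands $x$ and $y$ are non-null box references; thus \textsc{E-Capture}/\textsc{E-Swap} fire and the only residual non-reducing configuration is the listed $L(x) = \texttt{null} \wedge L(y) = \texttt{null}$ case (which the non-nullness invariant in fact shows is unreachable). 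The one remaining subtlety is \textsc{E-Swap} over a unique field whose current value is \texttt{null}: I must confirm that the produced reference $b(o'', p'')$ is a legal value (null contents inside a box), so that swap still steps; this rests on the value-domain extension recorded in the appendix and introduces no new stuck state.
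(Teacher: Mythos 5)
Your overall route is the paper's own: case analysis on the term in the top frame (with \textsc{E-Frame} lifting the single-frame rules), inversion of \textsc{T-Frame1}/\textsc{T-Let} and the expression typing rules, \textsc{WF-Env}/\textsc{WF-Var} to pin down value shapes, the well-typed-heap invariant together with the subclass-field lemma (Lemma~\ref{lem:subclass-field}) to get $f \in dom(FM)$, and---the one genuinely \CLCTWO-specific step---the permission correspondence $\vdash \Gamma ; L ; P$ (\textsc{WF-Perm}) to convert the static premises $\Perm{Q} \in \Gamma$ of \textsc{T-Open}/\textsc{T-Capture}/\textsc{T-Swap}/\textsc{T-Invoke} into the dynamic side conditions $p \in P$ and $\{p, p'\} \subseteq P$. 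The appendix proof does exactly this (in the \texttt{capture} case it extracts $\gamma(Q) = p \in P$ from \textsc{WF-Perm} before firing \textsc{E-Capture}), treats $t = x$ and \texttt{box} the same way you do, and declares swap/assignment/invocation analogous.

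There is, however, one concrete error in how you state the key invariant. You write: whenever $\Gamma(x) = \GuaT{Q}{C}$ and $\Perm{Q} \in \Gamma$, then $L(x) = b(o, p)$ for a \emph{non-null} box reference with $p \in P$, and you conclude that the $L(x) = \texttt{null} \wedge L(y) = \texttt{null}$ stuck forms are ``in fact unreachable.'' That is not what \textsc{WF-Perm} says: its equation $\gamma(Q) = p$ is \emph{conditional} on $L(x) = b(o, p)$, and \textsc{WF-Var} explicitly permits $L(x) = \texttt{null}$ at guarded type (its first disjunct). Null at guarded type is reachable: since $\texttt{Null} \sub \sigma$, \texttt{null} may be passed where $\BoxT{D}$ is expected, after which \textsc{WF-Method2} types the callee body with $x \typ \GuaT{Q}{D}$ and $\Perm{Q}$ in scope while $L(x) = \texttt{null}$ at run time---precisely why the theorem lists those stuck configurations at all, and your strengthened invariant would be unprovable in the preservation argument. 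This does not sink your progress proof, because your main branching (both operands non-null $\Rightarrow$ the rule fires; otherwise the listed stuck form) is the paper's own; but the unreachability claim must be dropped, and the ``value-domain extension recorded in the appendix'' you invoke for \texttt{swap} over a null-valued unique field does not exist in the paper---the appendix simply omits the \texttt{swap} case as analogous. Finally, note that both you and the paper gloss the asymmetric situation where exactly one of $L(x), L(y)$ is null, which the conjunction in the theorem statement does not literally cover; the paper's proof dismisses it with the same ``otherwise the conclusion holds'' move you make.
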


\begin{proof}

  By induction on the derivation of $H \vdash FS$. (See
  Appendix~\ref{app:core2-progress-proof} for the full proof.)

\end{proof}

Importantly, for a well-formed frame configuration, \\ \CLCTWO~ensures
that all required permissions are dynamically available; thus,
reduction is never stuck due to missing permissions.

\subsection{Isolation}

In order to state an essential isolation theorem, in the following we
extend \CLCTWO~with a simple form of message-passing concurrency.
This extension enables the statement of Theorem~\ref{thm:isolation}
which expresses the fact that the type system of \CLCTWO~enforces
data-race freedom in the presence of a shared heap and efficient,
by-reference message passing.

\begin{figure}
  \centering
$\ba[t]{l@{\hspace{2mm}}l}
\sigma,\tau ::=                                     & \mbox{surface type}         \\
\gap     \ldots                                     & ~    \\
\gap ~|~ \ProcT{C}                                  & \gap\mbox{process type}     \\
~ & ~ \\
e    ::=                                            & \mbox{expression}           \\
\gap     \ldots                                     & ~    \\
\gap ~|~ \texttt{proc}~\{ (x \typ \BoxT C) \Rightarrow t \}  & \gap\mbox{process creation}  \\
~ & ~ \\
t^c  ::=                                            & \mbox{continuation term}    \\
\gap     \ldots                                     & ~    \\
\gap ~|~ \texttt{send}(x, y)~\{ z \Rightarrow t \}  & \gap\mbox{message send}     \\
\ea$
  \caption{Syntax extensions for concurrency.}
  \label{fig:core3-syntax}
\end{figure}

Figure~\ref{fig:core3-syntax} summarizes the syntax extensions. An
expression of the form $\texttt{proc}~\{ (x \typ \BoxT C) \Rightarrow
t \}$ creates a concurrent process which applies the function $\{ (x
\typ \BoxT C) \Rightarrow t \}$ to each received message. A
continuation term of the form $\texttt{send}(x, y)~\{ z \Rightarrow t
\}$ asynchronously sends box $y$ to process $x$ and then applies the
continuation closure $\{ z \Rightarrow t \}$ to $x$.

\paragraph{Dynamic semantics.}

\CLCTHR~extends the dynamic semantics of \CLCTWO~such that the
configuration of a program consists of a shared heap $H$ and a set of
processes $\mathcal{P}$. Each process $FS^o$ is a frame stack $FS$
labelled with an object identifier $o$. A heap $H$ maps the object
identifier $o$ of a process $FS^o$ to a \emph{process record}
$\proc{\BoxT{C}}{M}{x \rightarrow t}$ where $\BoxT{C}$ is the type of
messages the process can receive, $M$ is a set of object identifiers
representing (buffered) incoming messages, and $x \rightarrow t$ is
the message handler function. \CLCTHR~introduces a third reduction
relation $\preduce{H}{\mathcal{P}}{H'}{\mathcal{P'}}$ which reduces a
set of processes $\mathcal{P}$ in heap $H$.

\begin{figure}
  \centering

  \infrule[\textsc{E-Proc}] {
    F = \pframe{L}{\texttt{let}~x = \texttt{proc}~\{ (y \typ \BoxT{C}) \Rightarrow
t' \}~\texttt{in}~t}{P}{l} \\
    F' = \pframe{L[x \mapsto o']}{t}{P}{l} \quad o'~\text{fresh} \\
    H' = H[o' \mapsto \proc{\BoxT{C}}{\emptyset}{y \rightarrow t'}]
  } {
    \preduce{H}{\{ (F \circ FS)^o \} \cup \mathcal{P}}{H'}{\{ (F' \circ FS)^o , \epsilon^{o'} \} \cup \mathcal{P}}
  }

  \vspace{0.3cm}

  \infrule[\textsc{E-Send}] {
    F = \pframe{L}{\texttt{send}(x, y)~\{ z \Rightarrow t \}}{P}{l} \quad L(x) = o \\
    H(o) = \proc{\BoxT{C}}{M}{f} \quad L(y) = b(o', p') \quad p' \in P \\
    F' = \pframe{L[z \mapsto o]}{t}{P \setminus \{p'\}}{\epsilon} \\
    H' = H[o \mapsto \proc{\BoxT{C}}{M \cup \{o'\}}{f}]
  } {
    \fsreduce{H}{F \circ FS}{H'}{F' \circ \epsilon}
  }

  \vspace{0.3cm}

  \infrule[\textsc{E-Receive}] {
    H(o) = \proc{\BoxT{C}}{M}{x \rightarrow t} \quad M = M' \uplus \{ o' \} \\
    F = \pframe{L}{y}{P}{l} \quad
    F' = \pframe{\emptyset[x \mapsto b(o', p)]}{t}{\{p\}}{\epsilon} \quad p~\text{fresh} \\
    H' = H[o \mapsto \proc{\BoxT{C}}{M'}{x \rightarrow t}]
  } {
    \preduce{H}{\{ (F \circ \epsilon)^o \} \cup \mathcal{P}}{H'}{\{ (F' \circ \epsilon)^o \} \cup \mathcal{P}}
  }
  \caption{\CLCTHR~process transition rules.}
  \label{fig:core3-proc-rules}
\end{figure}

Figure~\ref{fig:core3-proc-rules} shows the process transition
rules. Rule \textsc{E-Proc} creates a new process by allocating a
process record with an empty received message set and the message type
and handler function as specified in the \texttt{proc} expression. The
new process $\epsilon^{o'}$ starts out with an empty frame stack,
since it is initially idle. Rule \textsc{E-Send} sends the object
identifier in box $y$ to process $x$. The required permission $p'$ of
box $y$ is consumed in the resulting frame $F'$. The call stack is
discarded, since \texttt{send} is a continuation term. In rule
\textsc{E-Receive} process $o$ is ready to process a message from its
non-empty set of incoming messages $M$, since (the term in) frame $F$
cannot be reduced further and there are no other frames on the frame
stack. (The $\uplus$ operator denotes disjoint set union.) Frame $F'$
starts message processing with the parameter bound to a box reference
with a fresh permission.

\paragraph{Static semantics.}
Figure~\ref{fig:core3-wf-rules} shows the well-formedness rules that
\CLCTHR~introduces for (sets of) processes. A set of processes is
well-formed if each process is well-formed (\textsc{WF-Soup}). A
process is well-formed if its frame stack is well-formed
(\textsc{WF-Proc}).

\begin{figure}
  \centering
  \infrule[\textsc{WF-Soup}] {
    H \vdash FS^o \quad H \vdash \mathcal{P}
  } {
    H \vdash \{FS^o\} \cup \mathcal{P}
  }
  \vspace{0.3cm}
  \infrule[\textsc{WF-Proc}] {
    H \vdash FS \quad H ~;~ a \vdash FS \ok
  } {
    H \vdash FS^o
  }
  \caption{\CLCTHR~well-formedness rules.}
  \label{fig:core3-wf-rules}
\end{figure}

Figure~\ref{fig:core3-typing} shows the typing of process creation and
message sending. Rule \textsc{T-Proc} requires the body of a new
process to be well-typed in an environment that only contains the
parameter of the message handler and a matching access permission
which is fresh. Importantly, body term $t$ is type-checked under
effect \ocap. This means that $t$ may only instantiate ocap
classes. As a result, it is impossible to access global variables from
within the newly created process. Rule \textsc{T-Send} requires the
permission $\Perm{Q}$ of sent box $y$ to be available in context
$\Gamma$. The body $t$ of the continuation closure must be well-typed
in a context where $\Perm{Q}$ is no longer available. As with all
continuation terms, the type of a \texttt{send} term is $\bot$.

\begin{figure}
  \centering
  \infrule[\textsc{T-Proc}] {
    x \typ \GuaT Q C, \Perm Q ~;~ \texttt{ocap} \vdash t \typ \pi \\
    Q~\text{fresh}
  } {
    \Gamma ~;~ a \vdash \texttt{proc}~\{ (x \typ \BoxT C) \Rightarrow t \} \typ \ProcT{C}
  }
  \vspace{0.3cm}
  \infrule[\textsc{T-Send}] {
    \Gamma ~;~ a \vdash x \typ \ProcT{C} \andalso \Gamma ~;~ a \vdash y \typ \GuaT{Q}{D} \\
    \Perm{Q} \in \Gamma \andalso D \sub C \\
    \Gamma \setminus \{ \Perm{Q} \}, z \typ \ProcT{C} ~;~ a \vdash t \typ \pi
  } {
    \Gamma ~;~ a \vdash \texttt{send}(x, y)~\{ z \Rightarrow t \} \typ \bot
  }
  \caption{\CLCTHR~typing rules.}
  \label{fig:core3-typing}
\end{figure}

\begin{figure}
  \centering
  \infrule[\textsc{Acc-F}] {
    x \mapsto o \in L \lor (x \mapsto b(o, p) \in L \land p \in P)
  } {
    accRoot(o, \pframe{L}{t}{P}{l})
  }
  \vspace{0.3cm}
  \infrule[\textsc{Acc-FS}] {
    accRoot(o, F) \lor accRoot(o, FS)
  } {
    accRoot(o, F \circ FS)
  }
  \vspace{0.3cm}
  \infrule[\textsc{Iso-FS}] {
    \forall o, o' \in dom(H).~(accRoot(o, FS) \\ \land accRoot(o', FS')) \Rightarrow sep(H, o, o')
  } {
    isolated(H, FS, FS')
  }
  \vspace{0.3cm}
  \infrule[\textsc{Iso-Proc}] {
    H(o)  = \proc{\BoxT{C}}{M}{f} \\
    H(o') = \proc{\BoxT{D}}{M'}{g} \\
    \forall q \in M, q' \in M'.~sep(H, q, q') \\
    isolated(H, FS, GS)
  } {
    isolated(H, FS^o, GS^{o'})
  }
  \caption{\CLCTHR~process and frame stack isolation.}
  \label{fig:core3-isolation}
\end{figure}

Figure~\ref{fig:core3-isolation} defines a predicate $isolated$ to
express isolation of frame stacks and processes. Isolation of frame
stacks builds on an $accRoot$ predicate: identifier $o$ is an
accessible root in frame $F$, written $accRoot(o, F)$, iff $env(F)$
contains a binding $x \mapsto o$ or $x \mapsto b(o, p)$ where
permission $p$ is available in $F$ (\textsc{Acc-F}); $o$ is an
accessible root in frame stack $FS$ iff $accRoot(o, F)$ holds for any
frame $F \in FS$ (\textsc{Acc-FS}). Two frame stacks are then isolated
in $H$ iff all their accessible roots are disjoint in $H$
(\textsc{Iso-FS}). Finally, two processes are isolated iff their
message queues and frame stacks are disjoint (\textsc{Iso-Proc}).

\begin{theorem}[Isolation]\label{thm:isolation}
  If $\vdash H \typ \star$ then:

  If $H \vdash \mathcal{P}$, $\forall P, P' \in \mathcal{P}.~P \neq P' \implies isolated(H, P, P')$, and
    $\preduce{H}{\mathcal{P}}{H'}{\mathcal{P'}}$ then $\vdash H' \typ \star$, $H' \vdash
    \mathcal{P'}$, and \\ $\forall Q, Q' \in \mathcal{P'}.~Q \neq Q' \implies isolated(H', Q, Q')$.

\end{theorem}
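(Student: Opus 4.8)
The plan is to prove this by case analysis on the rule used to derive $\preduce{H}{\mathcal{P}}{H'}{\mathcal{P'}}$. There are two genuinely new process-level rules, \textsc{E-Proc} and \textsc{E-Receive}, together with the lifting to $\preduce$ of the frame-stack reductions $\fsreduce{H}{FS}{H'}{FS'}$ carried out inside a single process --- most notably \textsc{E-Send}, but also all the inherited \CLCTWO~rules. For each case the three conclusions are discharged separately. Heap well-typedness $\vdash H' \typ \star$ and process-set well-formedness $H' \vdash \mathcal{P'}$ follow, for the lifted frame-stack reductions, from Preservation (Theorem~\ref{thm:core2-preservation}) applied to the single reducing process (its frame stack stays well-formed, and \textsc{WF-Soup}/\textsc{WF-Proc} rebuild the soup); for \textsc{E-Proc} and \textsc{E-Receive} they are checked directly against the process record added to or updated in $H$. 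The substantive work in every case is re-establishing $isolated(H', Q, Q')$ for all pairs $Q \neq Q'$ in $\mathcal{P'}$.

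For \textsc{E-Proc} the argument is short. The spawned process $\epsilon^{o'}$ has an empty frame stack and, in its fresh record $\proc{\BoxT{C}}{\emptyset}{y \rightarrow t'}$, an empty message set; having no accessible roots and no messages, it is vacuously isolated from every other process by \textsc{Iso-Proc} and \textsc{Iso-FS}. The spawning process changes only by the binding $x \mapsto o'$ in its top frame; because $o'$ is fresh, $sep(H', o', q)$ holds for every pre-existing $q$, so the one new accessible root cannot break isolation with any other process, while the old roots remain separate by hypothesis.

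The delicate cases are \textsc{E-Send} and \textsc{E-Receive}, and I expect the crux of the whole theorem to lie here, because these are precisely the points where an object identifier migrates between being a \emph{buffered message} and being an \emph{accessible root}. In \textsc{E-Send} the box $b(o', p')$ is appended to the receiver's queue and its permission $p'$ is consumed in the sender's continuation frame $\pframe{L[z \mapsto o]}{t}{P \setminus \{p'\}}{\epsilon}$; since $p' \notin P \setminus \{p'\}$, \textsc{Acc-F} no longer counts $o'$ as an accessible root of the sender, so the sender relinquishes $o'$ exactly as the message acquires it. Dually, in \textsc{E-Receive} the dequeued $o'$ re-enters as $b(o', p)$ with a \emph{fresh} permission $p$, turning it into a uniquely-owned root of the receiver. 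To preserve isolation I must show the migrating $o'$ is separate from the roots \emph{and} messages of every other process. Separation from other processes' roots comes from the isolation hypothesis (before the send $o'$ was a root of the sender, hence \textsc{Iso-FS}-separate; before the receive $o'$ was a message, hence $sep$-separate from other queues). The genuinely hard part is the \emph{cross-category} obligation --- a queued message being separate from another process's live roots, and vice versa --- which is not literally one of the two conjuncts of \textsc{Iso-Proc}; discharging it requires strengthening the maintained invariant so that, across distinct processes, every accessible root is separate from every message, and then showing \textsc{E-Send}/\textsc{E-Receive} preserve that strengthened form. The enabling facts are permission consumption on send (the sender loses the root) and permission freshness on receive (no alias of the message survives elsewhere), combined with $boxSep$ inside the acting process to guarantee $o'$'s reachable closure is disjoint from that process's remaining roots.

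For the remaining lifted intra-process reductions (the inherited \CLCTWO~rules such as \textsc{E-Invoke}, \textsc{E-Open}, \textsc{E-Box}, \textsc{E-Capture}, \textsc{E-Swap}), isolation with the untouched processes is preserved for a uniform reason: by the \CLCTWO~frame-stack well-formedness and box-separation invariants, reducing one process never enlarges its accessible-root reachable closure so as to overlap another process. Any object allocated freshly (as in \textsc{E-Box}, \textsc{E-New}, or \textsc{E-Swap}) is chosen outside $dom(H)$ and is therefore separate, and no rule fabricates a reference into another process's heap region. Thus $accRoot(o, FS')$ for the reduced stack implies that $o$ was already reachable from a root of $FS$ or is fresh, and in either case $sep(H', o, o')$ with any root $o'$ of an untouched process follows from the hypothesis or from freshness.
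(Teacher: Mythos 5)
The paper never writes out a proof of Theorem~\ref{thm:isolation} at all: what stands in for one is the informal paragraph following the statement, which appeals to preservation of frame, frame-stack, and process well-formedness together with the permission-conditional separation invariants (def.~\ref{def:box-sep-frame}, def.~\ref{def:box-separation}, def.~\ref{def:box-obj-sep}, def.~\ref{def:global-ocap-sep}). Your proposal follows exactly that route --- case analysis on $\leadsto$, Theorem~\ref{thm:core2-preservation} for the lifted intra-process steps, freshness for \textsc{E-Proc}, permission consumption and freshness for \textsc{E-Send}/\textsc{E-Receive} --- but it is considerably more precise than the paper's sketch, and your central observation is both correct and something the paper silently elides: the $isolated$ predicate of Figure~\ref{fig:core3-isolation} relates only messages to messages (third premise of \textsc{Iso-Proc}) and roots to roots (\textsc{Iso-FS}), so it says nothing about a buffered message of one process versus an accessible root of another. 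Since \textsc{E-Send} moves an object from the root category to the message category and \textsc{E-Receive} moves it back (note the receiver's old environment is entirely discarded, so the dequeued box becomes essentially its only root), the theorem's invariant is not closed under $\leadsto$ on its own, and the induction must carry the strengthened cross-category separation you identify; your enabling facts for re-establishing it --- the sender loses $p'$ and, because aliased box references share a permission, $boxSep$ and $boxObjSep$ ensure no surviving sender root reaches into the sent box's closure (with the discarded call stack removing any frame where $p'$ was still live), while receive introduces a fresh permission --- are the right ones. Two loose ends you should make explicit if you carry the plan out: Theorem~\ref{thm:core2-preservation} was proved only for \CLCTWO\ transitions, so \textsc{E-Send}, which mutates a \emph{process record} rather than an object, is not literally covered by it, and process records fall outside Def.~\ref{def:well-typed-heap2} altogether; your plan to ``check directly'' is sound but amounts to extending heap well-typedness with a clause for $\proc{\BoxT{C}}{M}{x \rightarrow t}$ (each $m \in M$ an object of a subtype of $C$) and verifying it at \textsc{E-Proc}, \textsc{E-Send}, and \textsc{E-Receive}. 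In short: your proposal is correct in approach, matches the paper's intended argument, and supplies a genuine missing ingredient (the strengthened message--root invariant) that the paper's one-paragraph justification never surfaces.
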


Theorem~\ref{thm:isolation} states that $\leadsto$ preserves isolation
of well-typed processes. Isolation is preserved even when boxes are
transferred by reference between concurrent processes. Informally, the
validity of this statement rests on the preservation of
well-formedness of frames, frame stacks, and processes;
well-formedness guarantees the separation of boxes with available
permissions (def.~\ref{def:box-sep-frame} and
def.~\ref{def:box-separation}), and the separation of boxes with
available permissions from objects ``outside'' of boxes
(def.~\ref{def:box-obj-sep} and def.~\ref{def:global-ocap-sep} in
\ref{app:additional-rules}).

\comment{
\lacasa~supports both Scala 2.11, the current stable Scala release, as
well as Scala 2.10. The latter has been important for our empirical
evaluation (Section~\ref{sec:empirical-eval}), because a project in
our corpus, GeoTrellis, is still built for Scala 2.10.}

\section{Implementation}\label{sec:impl}

\lacasa~is implemented as a combination of a compiler plugin for the
current Scala 2 reference compiler and a runtime library.  The plugin
extends the compilation pipeline with an additional phase right after
regular type checking.  Its main tasks are (a) object-capability
checking, (b) checking the stack locality of boxes and permissions,
and (c) checking the constraints of \lacasa~expressions.  In turn, (c)
requires object-capability checking: type arguments of \verb|mkBox|
invocations must be object-capability safe, and \verb|open| bodies may
only instantiate object-capability safe classes. Certain important
constraints are implemented using spores~\cite{Miller14}.

\comment{
\item Our practical implementation not only checks methods for absence
  of accesses to global objects and instance creations of insecure
  classes. It also takes care of type parameters: instantiating a
  generic class with an insecure type argument is insecure in our
  system.}

\paragraph{Object-Capability Checking in Scala.}

Our empirical study revealed the importance of certain Scala-specific
``tweaks'' to conventional object-capability checking. We describe the
most important one. The Scala compiler generates so-called
``companion'' singleton objects for case classes and custom value
classes if the corresponding companions do not already exist. For a
case class such a synthetic companion object provides, \eg factory and
extractor~\cite{Emir07} methods. Synthetic companion objects are
object-capability safe.

\comment{
\begin{itemize}
\item The Scala compiler generates so-called ``companion'' (singleton)
  objects for certain kinds of classes, such as case classes and
  custom (or derived) value classes, in case the corresponding
  companions do not already exist in the program source. For a case
  class such a synthetic companion object provides, for example, a
  factory method, and an extractor~\cite{Emir07} for pattern matching.
  Synthetic companion objects are object-capability safe.

\item In Scala, type classes~\cite{Wadler89} are realized using a
  pattern~\cite{Oliveira10} which also uses singleton
  objects. Concretely, Scala's collections library uses a type class
  called \verb|CanBuildFrom| to enable uniform return types for
  combinators like map and filter~\cite{Odersky09}. The singleton
  objects used to implement instances of this type class are
  object-capability safe.
\end{itemize}}

\paragraph{Leveraging Spores.}

We leverage constraints supported by spores in several places in
\lacasa. We provide two examples where spores are used in our
implementation.

\comment{
:
\begin{lstlisting}[numbers=left, numberstyle=\scriptsize\color{gray}\ttfamily]
sealed trait OnlyNothing[T]
object OnlyNothing {
  implicit val onlyNothing: OnlyNothing[Nothing] =
    new OnlyNothing[Nothing] {}
}
\end{lstlisting}
\noindent
Since trait \verb|OnlyNothing[T]| is marked as \verb|sealed| it is
impossible to create subclasses outside \lacasa's runtime library. As
a result, there is only the single anonymous subclass in line
4.}

The first example is the body of an \verb|open| expression. According
to the rules of \lacasa, it is not allowed to have free variables (see
Section~\ref{sec:overview}). Using spores, this constraint can be
expressed in the type of the \verb|open| method as follows:
\begin{lstlisting}
def open(fun: Spore[T, Unit])(implicit
         acc: CanAccess { type C = self.C },
         noCapture: OnlyNothing[fun.Captured]): Unit
\end{lstlisting}
\noindent
Besides the implicit access permission, the method also takes an
implicit parameter of type \\ \verb|OnlyNothing[fun.Captured]|.  The
generic \verb|OnlyNothing| type is a trivial type class with only a
single instance, namely for type \verb|Nothing|, Scala's bottom type.
Consequently, for an invocation of \verb|open|, the compiler is
only able to resolve the implicit parameter \verb|noCapture| in the
case where type \verb|fun.Captured| is equal to \verb|Nothing|. In
turn, spores ensure this is only the case when the \verb|fun| spore
does not capture anything.

The second example where \lacasa~leverages spores is permission
consumption in \verb|swap|:
\begin{lstlisting}
def swap[S](select: T => Box[S])
           (assign: (T, Box[S]) => Unit, b: Box[S])
     (fun: Spore[Packed[S], Unit] {
             type Excluded = b.C
           })
     (implicit acc: CanAccess { type C = b.C }): Unit
\end{lstlisting}
\noindent
Besides the field accessor functions \verb|select| and \verb|assign|
(see Section~\ref{sec:overview}), \verb|swap| receives a box \verb|b|
to be put into the unique field. The implicit \verb|acc| parameter
ensures the availability of \verb|b|'s permission. Crucially,
\verb|b|'s permission is consumed by the assignment to the unique
field. Therefore, \verb|b| must not be accessed in the continuation
spore, which is expressed using the spore's \verb|Excluded| type
member. As a result, \verb|fun|'s body can no longer capture the permission.

\comment{
it is impossible to capture the permission within
the body of spore \verb|fun|.}

\subsection{Discarding the stack using exceptions}\label{sec:discarding}

Certain \lacasa~operations require discarding the stack of callers in
order to ensure consumed access permissions become unavailable. For
example, recall the message send shown in Figure~\ref{fig:actors-lacasa}
(see Section~\ref{sec:overview}):
\begin{lstlisting}
s.next.send(packed.box)({
  // continuation closure
})(access)
// unreachable
\end{lstlisting}
\noindent
Here, the \texttt{send} invocation consumes the \texttt{access}
permission: the permission is no longer available in the continuation.
This semantics is enforced by ensuring (a) the \texttt{access}
permission is unavailable within the explicit continuation closure
(line 2), and (b) code following the \texttt{send} invocation (line 4)
is unreachable. The former is enforced analogously to \texttt{swap}
discussed above. The latter is enforced by discarding the stack of
callers.

Discarding the call stack is a well-known technique in Scala, and has
been widely used in the context of event-based actors~\cite{Haller09}
where the stack of callers is discarded when an actor suspends with
just a continuation closure.\footnote{See
  \url{https://github.com/twitter-archive/kestrel/blob/3e64b28ad4e71256213e2bd6e8bd68a9978a2486/src/main/scala/net/lag/kestrel/KestrelHandler.scala}
  for a usage example in a large-scale production system.}  The
implementation consists of throwing an exception which unwinds the
call stack up to the actor's event-loop, or up to the boundary of a
concurrent task.

\begin{figure}
  \centering
\begin{lstlisting}
class MessageHandlerTask(
        receiver: Actor[T],
        packed: Packed[T]) extends Runnable {
  def run(): Unit = {
    // process message in `packed` object
    try {
      // invoke `receive` method of `receiver` actor
      receiver.receive(packed.box)(packed.access)
    } catch {
      case nrc: NoReturnControl => /* do nothing */
    }
    // check for next message
    ...
  }
  ...
}
\end{lstlisting}
  \caption{Handling \texttt{NoReturnControl} within actors.}
  \label{fig:noreturncontrol}
\end{figure}

Prior to throwing the stack-unwinding exception, operations like
\texttt{send} invoke their continuation closure which is provided
explicitly by the programmer:
\begin{lstlisting}
def send(msg: Box[T])(cont: NullarySpore[Unit] {...})
      (implicit acc: CanAccess {...}): Nothing = {
  ...     // enqueue message
  cont()  // invoke continuation closure
  throw new NoReturnControl  // discard stack
}
\end{lstlisting}
\noindent
The thrown \texttt{NoReturnControl} exception is caught either within
the main thread where the main method is wrapped in a
\texttt{try-catch} (see below), or within a worker thread of the actor
system's thread pool. In the latter case, the task that executes actor
code catches the \texttt{NoReturnControl} exception, as shown in
Figure~\ref{fig:noreturncontrol}. Note that the exception handler is at
the actor's ``top level:'' after processing the received message (in
\texttt{packed}) the \texttt{receiver} actor is ready to process the
next message (if any).

Scala's standard library provides a special
\\ \texttt{ControlThrowable} type for such cases where exceptions are
used to manage control flow.  The above \\ \texttt{NoReturnControl} type
extends \texttt{ControlThrowable}. The latter is defined as follows:
\begin{lstlisting}
trait ControlThrowable extends Throwable
                       with NoStackTrace
\end{lstlisting}
Mixing in the \texttt{NoStackTrace} trait disables the generation of
JVM stack traces, which is expensive but not required. The
\texttt{ControlThrowable} type enables exception handling without
disturbing exception-based control-flow transfers:
\begin{lstlisting}
try { ... } catch {
  case c: ControlThrowable => throw c // propagate
  case e: Exception => ...
}
\end{lstlisting}
Crucially, exceptions of a subtype of \texttt{ControlThrowable} are
propagated in order not to influence the in-progress control flow
transfer. Patterns such as the above are unchecked in Scala. However,
in the case of \lacasa, failure to propagate
\texttt{ControlThrowable}s could result in unsoundness. For example,
consider the following addition of a \texttt{try-catch} to the
previous example (shown at the beginning of
Section~\ref{sec:discarding}):
\begin{lstlisting}
try {
  s.next.send(packed.box)({
    // continuation closure
  })(access)
} catch {
  case c: ControlThrowable => // do nothing
}
other.send(packed.box)({
  ...
})(access)
\end{lstlisting}
By catching and not propagating the \texttt{ControlThrowable}, the
\texttt{access} permission remains accessible from line 8, enabling
sending the same object (\texttt{packed.box}) twice.

In order to prevent such soundness issues, the \lacasa~compiler plugin
checks \texttt{try-catch} expressions: a valid catch clause either (a)
does not match any \\ \texttt{ControlThrowable}, or (b) is preceeded by
a catch clause that matches and propagates \texttt{ControlThrowable}
exceptions. Furthermore, to support trusted \lacasa~code, a marker
method permits unsafe catches. For example, the main method of a
\lacasa~program is wrapped in the following trusted
\texttt{try-catch}:
\begin{lstlisting}
try { /* main method body */ } catch {
  case nrc: NoReturnControl => uncheckedCatchControl
}
\end{lstlisting}

\section{Empirical Evaluation}\label{sec:empirical-eval}

\begin{figure*}[t!]
  \begin{tabular}{ l | l | r | l | r | r | r }
    \hline
        {\bfseries ~Project~} & {\bfseries ~Version~} & {\bfseries ~SLOC~} & {\bfseries ~GitHub stats~} & {\bfseries ~\#classes/traits~} & {\bfseries ~\#ocap (\%)~} & {\bfseries ~\#dir. insec. (\%)~} \\
        \hline



Scala Standard Library & 2.11.7     & 33,107 & \stars 5,795 ~ \contribs 257 & 1,505 & 644 (43\%) & 212/861 (25\%) \\
Signal/Collect         & 8.0.6      & 10,159 & \stars 123   ~ \contribs 11  & 236   & 159 (67\%) & 60/77 (78\%) \\
GeoTrellis             & 0.10.0-RC2 & ~      & \stars 400   ~ \contribs 38  & ~     & ~  & ~ \\
-engine                & ~          & 3,868  & ~                            & 190   & 40 (21\%) & 124/150 (83\%) \\
-raster                & ~          & 22,291 & ~                            & 670   & 233 (35\%) & 325/437 (74\%) \\
-spark                 & ~          & 9,192  & ~                            & 326   & 101 (31\%) & 167/225 (74\%) \\

        \hline
\textbf{Total} & ~ & 78,617 & ~ & 2,927 & 1,177 (40\%) & 888/1,750 (51\%)~ \\

\hline
  \end{tabular}

  \caption{Evaluating the object-capability discipline in real Scala
    projects. Each project is an active open-source project hosted on
    GitHub. \protect\stars~represents the number of ``stars'' (or
    interest) a repository has on GitHub, and
    \protect\contribs~represents the number of contributors of the
    project.}

  \label{fig:ocap-stats}
\end{figure*}

The presented approach to object isolation and uniqueness is based on
object capabilities. Isolation is enforced only for instances
of ocap classes, \ie classes adhering to the object-capability
discipline. Likewise, ownership transfer is supported only for
instances of ocap classes. Therefore, it is important to know whether
the object-capability discipline imposes an undue burden
on developers; or whether, on the contrary, developers tend to design
classes and traits in a way that naturally follows the
object-capability discipline. Specifically, our empirical evaluation
aims to answer the following question: How many classes/traits in
medium to large open-source Scala projects already satisfy the
object-capability constraints required by \lacasa?

\comment{
\begin{itemize}
\item RQ1: How many classes/traits in medium to large open-source
  Scala projects already satisfy the object-capability constraints
  required by \lacasa?
\item RQ2: How important is it in practice to combine immutability and
  object-capability constraints?
\end{itemize}}

\comment{
  \footnote{By source lines of code we mean
  lines of code excluding blank lines and comments. The reported
  numbers were obtained using \texttt{cloc}~\cite{Cloc} v1.60.}
}

\paragraph{Methodology}

For our empirical analysis we selected Scala's standard library, a
large and widely-used class library, as well as two medium to large
open-source Scala applications. In total, our corpus comprises 78,617
source lines of code (obtained using~\cite{Cloc}). Determining the
prevalence of ocap classes and traits is especially important in the
case of Scala's standard library, since it tells us for which
classes/traits \lacasa~ supports isolation and ownership transfer
``out of the box,'' \ie without code changes. (We will
refer to both classes and traits as ``classes'' in the
following.)

The two open-source applications are Signal/Collect (S/C) and
GeoTrellis.  S/C~\cite{Stutz10} is a distributed graph processing
framework with applications in machine learning and the semantic web,
among others. Concurrency and distribution are implemented using the
Akka actor framework~\cite{Akka}.  Consequently, S/C could also
benefit from \lacasa's additional safety.  GeoTrellis is a high
performance data processing engine for geographic data, used by the
City of Asheville~\cite{PriorityPlaces} (NC, USA) and the U.S. Army,
among others. Like S/C, GeoTrellis utilizes actor concurrency through
Akka.

\comment{; it also integrates Spark~\cite{Zaharia12}.}

In each case we performed a clean build with the \lacasa~ compiler
plugin enabled. We configured the plugin to check ocap constraints for
{\em all compiled classes}. In addition, we collected statistics on
classes that {\em directly} violate ocap constraints through accesses
to global singleton objects.

\comment{
To determine the
importance of combining ocap constraints with immutability, we
re-analyzed the source of S/C while utilizing information
about singleton objects in Scala's standard library that are immutable
and never access shared global state; accesses to these singletons
were then allowed from within ocap classes.}

\paragraph{Results}

Figure~\ref{fig:ocap-stats} shows the collected statistics.

For Scala's standard library we found that 43\% of all classes follow
the object-capability discipline. While this number might seem low, it
is important to note that a {\em strict} form of ocap checking was
used: accesses to top-level singleton objects were disallowed, even if
these singletons were themselves immutable and object-capability
safe. Thus, classes directly using helper singletons were marked as
insecure. Interestingly, only 25\% of the insecure classes directly
access top-level singleton objects. This means, the majority of
insecure classes is insecure due to dependencies on other insecure
classes. These results can be explained as follows. First, helper
singletons (in particular, ``companion objects'') play an
important role in the architecture of Scala's collections
package~\cite{Odersky09}. In turn, with 22,958 SLOC the collections
package is by far the library's largest package,
accounting for 69\% of its total size.  Second, due to the high degree
of reuse enabled by techniques such as the type class
pattern~\cite{Oliveira10}, even a relatively small number of classes
that directly depend on singletons leads to an overall 57\% of
insecure classes.

In S/C 67\% of all classes satisfy strict ocap constraints,
a significantly higher percentage than for the Scala library. At
the same time, the percentage of classes that are not ocap due to
direct accesses to top-level singletons is also much higher (78\%
compared to 25\%). This means there is less reuse of insecure classes
in S/C. All analyzed components of GeoTrellis have a
similarly high percentage of ``directly insecure''
classes. Interestingly, even with its reliance on ``companion
objects'' and its high degree of reuse, the proportion of ocap classes
in the standard library is significantly higher compared to
GeoTrellis where it ranges between
21\% and 35\%.

\paragraph{Immutability and object capabilities}

Many singleton objects in Scala's standard library (a) are deeply
immutable, (b) only create instances of ocap classes, and (c) never
access global state. Such singletons are safe to access from within
ocap classes.\footnote{Analogous rules have been used for static
  fields in Joe-E~\cite{Mettler10}, an object-capability secure subset
  of Java.} To measure the impact of such singletons on the proportion
of ocap classes, we reanalyzed S/C with knowledge of safe
singletons in the standard library. As a result, the percentage of
ocap classes increased from 67\% to 79\%, while the proportion of
directly insecure classes remained identical.  Thus, knowledge of
``safe singleton objects'' is indeed important for object capabilities
in Scala.

\comment{
Signal/Collect~\cite{Stutz10} is ``a framework for scalable graph
computing.''\footnote{See
  \href{http://uzh.github.io/signal-collect/}{\texttt{http://uzh.github.io/signal-collect/}}}
GeoTrellis is ``a geographic data processing engine for high
performance applications.''\footnote{See
  \href{http://geotrellis.io/}{\texttt{http://geotrellis.io/}}}
}

\section{Other Related Work}\label{sec:other-related}

A number of previous approaches leverages permissions or capabilities
for uniqueness or related notions. Approaches limited to tree-shaped
object structures for unique references
include~\cite{wadler90,Odersky92,Westbrook12,Boyland03,Caires13,Pottier13}. In
contrast, \lacasa~provides external uniqueness~\cite{Clarke03}, which
allows internally-aliased object graphs. Permissions in
\lacasa~indicate which objects (``boxes'') are accessible in the
current scope. In contrast, the deny capabilities of the Pony
language~\cite{Clebsch15} indicate which operations are \emph{denied}
on aliases to the same object. By distinguishing read/write as well as
(actor-)local and global aliases, Pony derives a fine-grained matrix
of reference capabilities, which are more expressive than the
presented system. While Pony is a new language design,
\lacasa~integrates affine references into an existing language, while
minimizing the effort for reusing existing classes.

The notion of uniqueness provided by our system is similar to
UTT~\cite{Muller07}, an extension of Universe types~\cite{DietlM05}
with ownership transfer. Overall, UTT is more flexible, whereas
\lacasa~requires fewer annotations for reusing existing code; it also
integrates with Scala's local type inference. Active
ownership~\cite{Clarke08} shares our goal of providing a minimal type
system extension, however it requires owner-polymorphic methods and
existential owners whose integration with local type inference is not
clear. A more general overview of ownership-based aliasing control is
provided in \cite{Clarke13}. There is a long line of work on unique
object references~\cite{Hogg91,Baker95,Minsky96,Almeida97,Boyland01}
which are more restrictive than external uniqueness; a recurring theme
is the interaction between unique, immutable, and read-only
references, which is also exploited in a variant of C\# for systems
programming~\cite{Gordon12}. Several systems combine ownership with
concurrency control to prevent data races. RaceFree
Java~\cite{Abadi06} associates fields with locks, and an effect system
ensures correct lock acquisition. Boyapati et al.~\cite{Boyapati02}
and Zhao~\cite{Zhao} extend type system guarantees to deadlock
prevention.

Our system takes important inspiration from
Loci~\cite{WrigstadPMZV09}, a type system for enforcing thread
locality which requires very few source annotations. However,
\lacasa~supports ownership transfer, which is outside the domain of
Loci. Kilim~\cite{Srinivasan08} combines type qualifiers with an
intra-procedural shape analysis to ensure isolation of Java-based
actors. To simplify the alias analysis and annotation system, messages
must be tree-shaped. Messages in \lacasa~are not restricted to trees;
moreover, \lacasa~uses a type-based approach rather than static
analysis. StreamFlex~\cite{SpringPGV07} and
FlexoTasks~\cite{AuerbachBGSV08} are implicit ownership systems for
stream-based programming; like \lacasa, they allow reusing classes
which pass certain sanity checks, but the systems are more restrictive
than external uniqueness.

\section{Conclusion}\label{sec:conclusion}

This paper presents a new approach to integrating isolation and
uniqueness into an existing full-featured language. A key novelty of
the system is its minimization of annotations necessary for reusing
existing code. Only a single bit of information per class is required
to determine its reusability. Interestingly, this information is
provided by the object capability model, a proven methodology for
applications in security, such as secure sandboxing.  We present a
complete formal account of our system, including proofs of key
soundness theorems. We implement the system for the full Scala
language, and evaluate the object capability model on a corpus of over
75,000 LOC of popular open-source projects. Our results show that
between 21\% and 79\% of the classes of a project adhere to a strict
object capability discipline. In summary, we believe our approach has the potential to make a
flexible form of uniqueness practical on a large scale and in existing
languages with rich type systems.

\comment{
Surprisingly,

minimize the annotations
necessary to reuse existing code in a context where isolation and
uniqueness is required.  In order to classify an existing class as
supported for isolation and ownership transfer, our system requires
only a {\em single bit of information}. Surprisingly, this single bit
of information can be provided by the object-capability model,
an established methodology in the context of program
security.  Moreover, this methodology is highly practical: object
capabilities are used on a large scale in industry, for instance, to
enforce secure sandboxing in JavaScript applications on the web. To
our knowledge this paper is the first to propose an approach that
explicitly leverages the object-capability discipline for uniqueness,
providing a complete formal development and soundness proof.

We formally establish the soundness of our approach to separation and
uniqueness based on object capabilities in the context of two
class-based object-oriented core languages. The first core language
formalizes a type-based notion of object capabilities, including an
essential separation invariant. Given its generic nature, we believe
this core language could be reused in a variety of contexts where
object capabilities are important. The second core language extends
this theory with flow-insensitive permissions, which
additionally provide external uniqueness and ownership transfer.
}

\comment{
We show the practicality of our approach by implementing it for the
full Scala language as a compiler plugin with a runtime
library. To our knowledge, our implementation of (external) uniqueness
is the first to integrate soundly with local type inference in Scala.
Moreover, the implementation leverages a unique combination of
previous proposals for (a) implicits, and (b) closures with capture
control.

We empirically evaluate the conformity of existing Scala classes to
the object capability model on a corpus of over 75,000 LOC of popular
open-source projects. Our results show that between 21\% and 79\% of
the classes of a project adhere to a strict object capability
discipline. Furthermore, preliminary results suggest that immutability
information can increase these percentages significantly.
}

\comment{
We believe our insights into the relationship
of closures and aliasing control could be of benefit also in the context}

\comment{
Refinements of closures are under consideration not only in
the Scala community, but also in the context of languages like Rust,
C++, and Haskell. Thus, we believe our insights into the relationship
of closures and aliasing control are of benefit to the broader
programming languages community.}

\comment{
Therefore, we believe this paper provides valuable
insights into the relationship of closures and aliasing control, which
could be of benefit to a broad community of programming language
designers and implementers.}

\comment{
provides additional
information about }

\comment{
currently being developed

being developed not only for Scala, but also
for languages like Haskell, Rust, and C++. Thus, we believe our
proposal can provide valuable}

\comment{
  Like the foundation
of object capabilities, affine permissions in our system have been
deliberately designed for integration with existing, full-featured
languages.}

\comment{
   like Scala. In
particular, our permission system orthogonally extends existing type
systems, without affecting the existing type structure.}

\comment{
In order to 
we develop (a) a type-based
theory of object capabilities for a class-based object-oriented core
language, and (b) a type system which extends
}

\comment{
extends
existing type systems in an orthogonal way

Last but not least our implementation of (external) uniqueness is the
first to support local type inference, which is used in an increasing
number of widely-used programming languages.
In summary
}




\bibliographystyle{abbrvnat}





\bibliography{bib}

\appendix

\comment{
\section{Spores}
\label{sec:spores}

Spores are a closure-like abstraction and type system which aims to give users
a principled way of controlling the environment which a closure can capture.
This is achieved by (a) enforcing a specific syntactic shape which dictates
how the environment of a spore is declared, and (b) providing additional type-checking
to ensure that types being captured have certain properties. A crucial insight of
spores is that, by including type information of captured
variables in the type of a spore, type-based constraints for captured
variables can be composed and checked, making spores safer to use in a
concurrent, distributed, or in arbitrary settings where closures must be
controlled.

\subsection{Spore Syntax}
\label{sec:spore-syntax}

\setlength{\belowcaptionskip}{-6pt}
\begin{figure}[t!]
\centering
\includegraphics[width=7.4cm]{spore-shape.pdf}
\caption{The syntactic shape of a spore.}
\label{fig:spore-shape}
\end{figure}
\setlength{\belowcaptionskip}{0pt}

A spore is a closure with a specific shape that dictates how the environment
of a spore is declared. The shape of a spore is shown in Figure~\ref{fig:spore-shape}.
A spore consists of two parts:
\begin{itemize}
\item {\bf the spore header}, composed of a list of value definitions.
\item {\bf the spore body} (sometimes referred to as the ``spore closure''), a regular closure.
\end{itemize}


The characteristic property
of a spore is that the {\em spore body} is only allowed to access its
parameter, the values in the spore header, as well as top-level singleton objects
(public, global state). In particular, the spore closure is not allowed to
capture variables in the environment. Only an expression on the right-hand
side of a value definition in the spore header is allowed to capture
variables.

By enforcing this shape, the environment of a spore is always declared
explicitly in the spore header, which avoids accidentally capturing
problematic references. Moreover, importantly for object-oriented languages, it's no
longer possible to accidentally capture the \verb|this| reference.

\begin{figure}
\begin{subfigure}{.5\textwidth}
  \centering
  \begin{lstlisting}
  {
    val y1: S1 = <expr1>
    ...
    val yn: Sn = <exprn>
    (x: T) => {
      // ...
    }
  }
  \end{lstlisting}
  \caption{A closure block.}
  \label{fig:normal-block}
\end{subfigure}%
\begin{subfigure}{.5\textwidth}
  \centering
  \begin{lstlisting}
  spore {
    val y1: S1 = <expr1>
    ...
    val yn: Sn = <exprn>
    (x: T) => {
      // ...
    }
  }
  \end{lstlisting}
  \caption{A spore.}
  \label{fig:normal-spore-shape}
\end{subfigure}%
\vspace{1mm}
\caption{The evaluation semantics of a spore is equivalent to that of a closure, obtained by simply leaving out the spore marker.}
\label{fig:evaluation-semantics}
\end{figure}

\subsubsection{Evaluation Semantics}

The evaluation semantics of a spore is equivalent to a closure
obtained by leaving out the \verb|spore| marker, as shown in
Figure~\ref{fig:evaluation-semantics}. In Scala, the block shown in
Figure~\ref{fig:normal-block} first initializes
all value definitions in order and then evaluates to a closure that captures
the introduced local variables \verb|y1, ..., yn|. The corresponding spore,
shown in Figure~\ref{fig:normal-spore-shape} has the exact same evaluation
semantics. Interestingly, this closure shape is already used in production
systems such as Spark in an effort to avoid problems with accidentally
captured references, such as \verb|this|. However, in systems like Spark, the
above shape is merely a convention that is not enforced.

\vspace{2mm}
\subsection{The \texttt{Spore} Type}
\label{sec:spore-type}


Figure~\ref{fig:spore-type} shows Scala's arity-1 function type and the arity-1 spore type.\footnote{For simplicity, we omit \texttt{Function1}'s definitions of the \texttt{andThen} and \texttt{compose} methods.}
Functions are

\noindent contravariant in their argument type \verb|A| (indicated using
\verb|-|) and covariant in their result type \verb|B| (indicated
using \verb|+|). The \verb|apply| method of \verb|Function1| is abstract; a concrete implementation applies
the body of the function that is being defined to the parameter \verb|x|.

Individual spores have {\em refinement types} of the base \verb|Spore| type, which, to be compatible with normal Scala functions,
is itself a subtype of \verb|Function1|. Like functions, spores are contravariant in their argument type \verb|A|, and
covariant in their result type \verb|B|. Unlike a normal function,
however, the \verb|Spore| type additionally contains information about
\textit{captured} and \textit{excluded} types. This information is represented
as (potentially abstract) \verb|Captured| and \verb|Excluded| type members. In a
concrete spore, the \verb|Captured| type is defined to be a tuple with the types of all captured variables.
\cite{MillerHO14}~discusses the \verb|Excluded| type member in detail.

\begin{figure}[t!]
\begin{subfigure}{.5\textwidth}
  \centering
  \begin{lstlisting}
    trait Function1[-A, +B] {
      def apply(x: A):  B
    }
  \end{lstlisting}
  \caption{Scala's arity-1 function type.}
  \label{fig:function-arity1}
\end{subfigure}%
\begin{subfigure}{.5\textwidth}
  \centering
  \begin{lstlisting}
    trait Spore[-A, +B]
    extends Function1[A, B] {
      type Captured
      type Excluded
    }
  \end{lstlisting}
  \caption{The arity-1 \texttt{Spore} type.}
  \label{fig:spore-arity1}
\end{subfigure}%
\vspace{1mm}
\caption{The \texttt{Spore} type.}
\label{fig:spore-type}
\vspace{-2mm}
\end{figure}




\vspace{2mm}
\subsection{Basic Usage}
\label{sec:basic-usage}
\vspace{1mm}


\subsubsection{Definition}

A spore can be defined as shown in Figure~\ref{fig:captured-spore}, with its
corresponding type shown in Figure~\ref{fig:captured-type}. As can be seen,
the types of the environment listed in the spore header are
represented by the \verb|Captured| type member in the spore's type.

\begin{figure}[t!]
\begin{subfigure}{.5\textwidth}
  \centering
  \begin{lstlisting}
    val s = spore {
      val y1: String = expr1;
      val y2: Int = expr2;
      (x: Int) => y1 + y2 + x
    }
  \end{lstlisting}
  \caption{A spore \texttt{s} which captures a \texttt{String} and an \texttt{Int} in its spore header.}
  \label{fig:captured-spore}
\end{subfigure}%
\begin{subfigure}{.5\textwidth}
  \centering
  \begin{lstlisting}
    Spore[Int, String] {
      type Captured = (String, Int)
    }
  \end{lstlisting}
  \caption{\texttt{s}'s corresponding type.}
  \label{fig:captured-type}
\end{subfigure}%
\vspace{1mm}
\caption{An example of the \texttt{Captured} type member. \\\textit{Note: we omit the
\texttt{Excluded} type member for simplicity; we discuss it in detail in~\cite{MillerHO14}.}}
\label{fig:captured-ex}
\vspace{-5mm}
\end{figure}
}

\section{Full Proofs}

\begin{figure}
  \centering

  \infrule[\textsc{Ocap-Class}] {
    p \vdash \texttt{class}~C~\texttt{extends}~D~\{\seq{fd}~\seq{md}\} \\
    C \vdash_{ocap} \seq{md} \andalso ocap(D) \\
    \forall~\texttt{var}~f \typ \sigma \in \seq{fd}.~ocap(\sigma) \lor \sigma = \texttt{Box}[E] \land ocap(E)
  } {
    ocap(C)
  }

  \vspace{0.3cm}

  \infrule[\textsc{Ocap-Method1}] {
    \texttt{this} \typ C, x \typ D ~;~ \texttt{ocap} \vdash t \typ E' \andalso E' \sub E
  } {
    C \vdash_{ocap} \texttt{def}~m(x \typ D) \typ E = t
  }

  \vspace{0.3cm}

  \infrule[\textsc{Ocap-Method2}] {
    \Gamma = \texttt{this} \typ C, x \typ \GuaT Q D, \Perm Q \andalso Q~\text{fresh} \\
    \Gamma ~;~ \texttt{ocap} \vdash t \typ E' \andalso E' \sub E
  } {
    C \vdash_{ocap} \texttt{def}~m(x \typ \BoxT D) \typ E = t
  }

  \caption{Well-formed ocap classes.}
  \label{fig:clc2-wf-ocap-classes}
\end{figure}

\begin{figure}
  \centering

  \infrule[\textsc{WF-Perm}] {
    \gamma : permTypes(\Gamma) \longrightarrow P \text{ injective } \\
    \forall x \in dom(\Gamma), \\
    \Gamma(x) = \GuaT Q C \wedge L(x) = b(o, p) \wedge \Perm Q \in \Gamma 
    \Longrightarrow \\
    \gamma(Q) = p
  } {
    \vdash \Gamma ; L ; P
  }

  \vspace{0.3cm}

  \infrule[\textsc{WF-Var}] {
    L(x) = \texttt{null}~ \lor \\
    L(x) = o \land typeof(H, o) \sub \Gamma(x)~ \lor \\
    L(x) = \highlight{b(o, p)} \land \Gamma(x) = \highlight{\GuaT Q C} \land \\ typeof(H, o) \sub C
  } {
    H \vdash \Gamma ; L ; x
  }

  \vspace{0.3cm}

  \infrule[\textsc{T-Frame1}] {
    \Gamma ~;~ a \vdash t \typ \sigma \andalso \highlight{l \neq \epsilon \implies \sigma \sub C} \\
    H \vdash \Gamma ; L \andalso H \vdash \Gamma ; L ; P
  } {
    H \vdash \highlight{\pframe L t P l} \typ \sigma
  }

  \vspace{0.3cm}

  \infrule[\textsc{T-Frame2}] {
    \Gamma, x \typ \tau ~;~ a \vdash t \typ \sigma \andalso \highlight{l \neq \epsilon \implies \sigma \sub C} \\
    H \vdash \Gamma ; L \andalso H \vdash \Gamma ; L ; P
  } {
    H \vdash^{\tau}_x \highlight{\pframe L t P l} \typ \sigma
  }

  \vspace{0.3cm}

  \infrule[\textsc{F-ok}] {
    boxSep(H, F) \andalso boxObjSep(H, F) \\
    boxOcap(H, F) \\
    a = \texttt{ocap} \implies globalOcapSep(H, F) \\
    fieldUniqueness(H, F)
  } {
    H ~;~ a \vdash F \ok
  }

  \caption{Frame and frame stack typing.}
  \label{fig:core2-frame-stack-typing}
\end{figure}

\begin{figure}
  \centering
  \infrule {
    x \mapsto b(o, p) \in L \andalso p \in P
  } {
    boxRoot(o, \pframe L t P l)
  }
  \vspace{0.3cm}

  \infrule {
    boxRoot(o, F)
  } {
    boxRoot(o, F \circ \epsilon)
  }
  \vspace{0.3cm}

  \infrule {
    boxRoot(o, F) \lor boxRoot(o, FS)
  } {
    boxRoot(o, F \circ FS)
  }
  \vspace{0.3cm}

  \infrule {
    x \mapsto b(o, p) \in L \andalso p \in P
  } {
    boxRoot(o, \pframe L t P l, p)
  }
  \vspace{0.3cm}

  \infrule {
    boxRoot(o, F, p)
  } {
    boxRoot(o, F \circ \epsilon, p)
  }
  \vspace{0.3cm}

  \infrule {
    boxRoot(o, F, p) \lor boxRoot(o, FS, p)
  } {
    boxRoot(o, F \circ FS, p)
  }
  \vspace{0.3cm}

  \infrule {
    boxRoot(o, FS) \andalso x \mapsto o' \in env(F) \andalso reach(H, o, o')
  } {
    openbox(H, o, F, FS)
  }
  \caption{Auxiliary predicates.}
  \label{fig:core2-aux-predicates}
\end{figure}

\subsection{Additional Rules and Definitions}\label{app:additional-rules}

Figure~\ref{fig:clc2-wf-ocap-classes} shows \CLCTWO's \textsc{Ocap-*}
rules. Figure~\ref{fig:core2-frame-stack-typing} shows the updated
rules for frame and frame stack typing in
\CLCTWO. Figure~\ref{fig:core2-aux-predicates} shows the auxiliary
predicates $boxRoot$ and $openbox$.

\begin{definition}[Box-Object Separation]\label{def:box-obj-sep}
  Frame $F$ satisfies the {\em box-object separation} invariant in
  $H$, written \\ $boxObjSep(H, F)$, iff

  $F = \pframe L t P l \land \forall x \mapsto b(o, p), y \mapsto o'
  \in L.~sep(H, o, o')$
\end{definition}

\begin{definition}[Box Ocap Invariant]\label{def:box-ocap}
  Frame $F$ satisfies the {\em box ocap} invariant in $H$, written
  $boxOcap(H, F)$, iff

  $F = \pframe L t P l \land \forall x \mapsto b(o, p) \in L, o' \in
  dom(H).~p \in P \land reach(H, o, o') \implies ocap(typeof(H, o'))$
\end{definition}

\begin{definition}[Global Ocap Separation]\label{def:global-ocap-sep}
  Frame $F$ satisfies the {\em global ocap separation} invariant in
  $H$, written $globalOcapSep(H, F)$, iff

  $F = \pframe L t P l \land \forall x \mapsto o \in L, y \mapsto o'
  \in L_0.~ \\ ocap(typeof(H, o)) \land sep(H, o, o')$
\end{definition}

\begin{definition}[Dominating Edge]\label{def:dom-edge}
  Field $f$ of $\hat{o}$ is a dominating edge for paths from $o$ to
  $o'$ in $H$, written \\ $domedge(H, \hat{o}, f, o, o')$, iff

  $\forall P \in path(H, o, o').~P = o \ldots \hat{o}, FM(f) \ldots
  o'$ \\ where $H(\hat{o}) = \obj C {FM}$ and $f \in dom(FM)$.
\end{definition}

\begin{definition}
  \(permTypes(\Gamma)\) is the set of \emph{permissions} in a typing context \(\Gamma\),
  \[permTypes(\Gamma) = \{Q~|~\Perm Q \in \Gamma\}\]
\end{definition}

\subsection{Proof of Theorem~\ref{thm:core2-preservation}}\label{app:core2-preservation-proof}

\begin{figure}
  \centering

  \infax[\textsc{E-Null}] {
    \freducebreak H {\pframe L {\texttt{let}~x = \texttt{null}~\texttt{in}~t} P l} H {\pframe {L[x \mapsto \texttt{null}]} t P l}
  }

  \vspace{0.3cm}

  \infax[\textsc{E-Var}] {
    \freducebreak H {\pframe L {\texttt{let}~x = y~\texttt{in}~t} P l} H {\pframe {L[x \mapsto L(y)]} t P l}
  }

  \vspace{0.3cm}

  \infrule[\textsc{E-Select}] {
    H(L(y)) = \obj C {FM} \andalso f \in dom(FM)
  } {
    \freducebreak H {\pframe L {\texttt{let}~x = y.f~\texttt{in}~t} P l} H {\pframe {L[x \mapsto FM(f)]} t P l}
  }

  \vspace{0.3cm}

  \infrule[\textsc{E-Assign}] {
    L(y) = o \andalso H(o) = \obj C {FM} \\
    H' = H[o \mapsto \obj C {FM[f \mapsto L(z)]}] \\
  } {
    \freducebreak H {\pframe L {\texttt{let}~x = y.f = z~\texttt{in}~t} P l} {H'} {\pframe L {\texttt{let}~x = z~\texttt{in}~t} P l}
  }

  \vspace{0.3cm}

  \infrule[\textsc{E-New}] {
    o \notin dom(H) \andalso fields(C) = \seq{f} \\
    H' = H[o \mapsto \obj C {\seq{f \mapsto \texttt{null}}}] \\
  } {
    \freducebreak H {\pframe L {\texttt{let}~x = \texttt{new}~C~\texttt{in}~t} P l} {H'} {\pframe {L[x \mapsto o]} t P l}
  }

  \caption{\CLCTWO~frame transition rules.}
  \label{fig:core2-frame-rules}
\end{figure}

\begin{lemma}\label{lem:lemma1}
If $\vdash H \typ \star$ then:

If $H \vdash F \typ \sigma$, $H ~;~ a \vdash F \ok$, and $\freduce H F
{H'} {F'}$ then $\vdash H' \typ \star$, $H' \vdash F' \typ \sigma$,
and $H' ~;~ a \vdash F' \ok$.
\end{lemma}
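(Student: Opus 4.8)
The statement is the single-frame preservation step for \CLCTWO, so the proof proceeds by case analysis on the derivation of $\freduce H F {H'} {F'}$, which by Figure~\ref{fig:core2-frame-rules} must be one of \textsc{E-Null}, \textsc{E-Var}, \textsc{E-Select}, \textsc{E-Assign}, or \textsc{E-New}. In each case $F = \pframe L {\texttt{let}~x = e~\texttt{in}~t} P l$ and $F'$ has the \emph{same} permission set $P$, so every permission-dependent side condition (in \textsc{WF-Perm}, $boxSep$, $boxOcap$, $boxRoot$) is inherited once the variable environment is shown to be consistent. For each rule I must re-establish three things: $\vdash H' \typ \star$, $H' \vdash F' \typ \sigma$, and $H' ~;~ a \vdash F' \ok$. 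Since $F$ is typed via some $\Gamma$ with $\Gamma ~;~ a \vdash \texttt{let}~x = e~\texttt{in}~t \typ \sigma$, inverting \textsc{T-Let} gives $\Gamma ~;~ a \vdash e \typ \tau$ and $\Gamma, x \typ \tau ~;~ a \vdash t \typ \sigma$; I reuse the latter derivation under $\Gamma' = \Gamma, x \typ \tau$, so the only real work is checking well-formedness of the updated environment ($H' \vdash \Gamma' ; L'$ and $H' \vdash \Gamma' ; L' ; P$) and the separation invariants of \textsc{F-ok}.

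\textbf{The routine cases.} For \textsc{E-Null}, \textsc{E-Var}, and \textsc{E-Select} we have $H' = H$, so $\vdash H' \typ \star$ is immediate and every reachability-based invariant depends only on the unchanged heap; it suffices to verify the single new binding. \textsc{E-Null} binds $x$ to $\texttt{null}$ (first disjunct of \textsc{WF-Var}). \textsc{E-Var} copies $L(y)$; if $L(y) = b(o,p)$ then $x$ becomes a second reference to the \emph{same} box with the same permission, so $boxSep$ (Def.~\ref{def:box-sep-frame}) is vacuous for the pair $(x,y)$ because its premise $p \neq p'$ fails, while $boxObjSep$, $boxOcap$, and $fieldUniqueness$ transfer from $y$ to $x$ unchanged, and \textsc{WF-Perm} stays injective because $x$ and $y$ share the same abstract type $Q$. \textsc{E-Select} binds $x$ to $FM(f)$, a plain object identifier that by Def.~\ref{def:box-obj-sep} (applied to $y$) lies outside every box, so the new non-box binding cannot break box--object separation; its type matches by Def.~\ref{def:well-typed-heap2}. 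For \textsc{E-New} the fresh $o$ has only $\texttt{null}$ fields and is disjoint from all existing objects, making $\vdash H' \typ \star$ and all separation clauses trivial; the only substantive point is the $a = \texttt{ocap}$ case of \textsc{F-ok}, where $globalOcapSep$ (Def.~\ref{def:global-ocap-sep}) and $boxOcap$ (Def.~\ref{def:box-ocap}) require $ocap(typeof(H',o)) = ocap(C)$, which is exactly the hypothesis that \textsc{T-New} imposes under effect $\texttt{ocap}$.

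\textbf{The main obstacle: \textsc{E-Assign}.} This is the only rule that mutates the existing heap, replacing $FM(f)$ by $L(z)$ in $H(L(y)) = \obj C {FM}$, and it is where the argument is delicate. For $\vdash H' \typ \star$ I invert \textsc{T-Assign} to get $ftype(C,f) = D$ and $z \typ D'$ with $D' \sub D$, and combine the $z$ clause of \textsc{WF-Var} ($typeof(H,L(z)) \sub D'$) with transitivity to match the field-type requirement of Def.~\ref{def:well-typed-heap2}. For frame typing I keep $\Gamma'(x) = D$ (the field type recorded by \textsc{T-Assign}, not the narrower $D'$); then $t$ is reused at its original typing and \textsc{WF-Var} for $x$ holds because $typeof(H',L(z)) \sub D' \sub D$. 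The crux is $H' ~;~ a \vdash F' \ok$: mutation changes reachability, so I must show none of $boxSep$, $boxObjSep$, $boxOcap$, $globalOcapSep$, or $fieldUniqueness$ is disturbed. The key structural fact is that the mutated object $o = L(y)$ is a non-box reference, so by $boxObjSep$ (Def.~\ref{def:box-obj-sep}) it is $sep$ from every box $b(o_i,p_i)$ in $L$; hence $o$ lies in no box's reachable set and editing $o.f$ leaves the reachable set of each box unchanged. Consequently all box-indexed invariants are preserved, and since $L(z)$ is again separated from every box, the reachable set of $o$ stays box-disjoint.

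\textbf{Where the care goes.} The one genuinely subtle clause is $fieldUniqueness$ (Def.~\ref{def:field-uniqueness}). I expect to argue that regular assignment can never target a unique (box-typed) field, because the guarded-type discipline prevents $z$ from being given a plain $\BoxT D$ type: no subtyping rule makes a guarded type $\GuaT Q C$ or a class type a subtype of $\BoxT D$, so \textsc{T-Assign} simply cannot supply a well-typed argument for a $\texttt{Box}$-typed field, and the dominating-edge property is untouched. Establishing this ``vacuity'' cleanly --- confirming that \textsc{E-Assign} and \textsc{E-Select} never read or overwrite a $\texttt{Box}$-typed field in a well-formed, well-typed frame --- is the step I expect to require the most attention, since it is what ties the heap-level separation invariants to the syntactic guarded-type restrictions enforced by \textsc{T-New}, \textsc{T-Box}, and \textsc{T-Swap}.
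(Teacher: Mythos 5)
Your proposal follows essentially the same route as the paper's proof of Lemma~\ref{lem:lemma1}: case analysis on the single-frame rules, inversion of \textsc{T-Frame1}, \textsc{T-Let}, and the expression typing rule, re-establishment of \textsc{WF-Var}/\textsc{WF-Env}/\textsc{WF-Perm} for the one new binding (with the permission set $P$ unchanged), and, for \textsc{E-Assign}, the crucial observation that $L(y)$ and $L(z)$ are non-box references which $boxObjSep$ separates from every box, so box-reachable regions are untouched and the box-indexed invariants carry over; this is exactly the paper's argument, as is your $\sub$-transitivity argument for $\vdash H' \typ \star$.

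The place you go astray is the final paragraph on $fieldUniqueness$, which is instructive in three ways. First, the ``vacuity'' argument is unnecessary: Def.~\ref{def:field-uniqueness} quantifies only over objects reachable from box roots with available permissions, so the separation argument you already gave (the mutated object lies in no box's reachable set, hence no field map or path inside any box region changes) disposes of $fieldUniqueness$ along with the other invariants --- which is also what the paper implicitly relies on. Second, as a standalone argument it would not suffice: $fieldUniqueness$ constrains \emph{all} paths from a box root to $o'$, and an assignment to an ordinary class-typed field could still create a bypass path around a dominating edge if the assignment target were reachable from a box root; ruling out writes to $\texttt{Box}$-typed fields does not rule that out. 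Third, the subtyping claim has a hole: by \textsc{$\sub$-Null}, $\texttt{Null} \sub \BoxT{D}$ (the rule covers all surface types), so a $\texttt{null}$-typed variable would be an admissible right-hand side; what actually blocks writes to unique fields is that the premise $ftype(C,f) = D$ of \textsc{T-Assign} (and \textsc{T-Select}) ranges over class types only. A last bookkeeping point: \textsc{E-Assign} leaves $L$ unchanged and produces the term $\texttt{let}~x = z~\texttt{in}~t$, so there is no \textsc{WF-Var} obligation for $x$ at this step; the obligation is to re-type the let, where your choice of binder type $D$ clashes with \textsc{T-Let} (which has no subsumption), while the paper instead takes the binder type to be the type of $z$ --- a dual form of the same fudge, so this does not count against you. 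None of these points invalidates your proof, since the separation argument you state is the one that does the work; only the emphasis on where the care goes is misplaced.
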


\begin{proof}
By induction on the derivation of $\freduce H F {H'} {F'}$.

\begin{itemize}

\item[-] Case E-Null
  \begin{enumerate}
  \item By the assumptions
    \begin{enumerate}[label=(\alph*)]
    \item $\vdash H \typ \star$
    \item $H \vdash F \typ \sigma$
    \item $H ~;~ a \vdash F \ok$
    \item $\freduce H F H {F'}$
    \item $F = \pframe L {\texttt{let}~x = \texttt{null}~\texttt{in}~t} P l$
    \item $F' = \pframe {L[x \mapsto \texttt{null}]} t P l$
    \end{enumerate}
  \item By 1.b), 1.e), and T-Frame1
    \begin{enumerate}[label=(\alph*)]
    \item $\Gamma ~;~ b \vdash \texttt{let}~x = \texttt{null}~\texttt{in}~t \typ \sigma$
    \item $l \neq \epsilon \implies \sigma \sub C$
    \item $H \vdash \Gamma ; L$
    \item $\vdash \Gamma; L; P$
    \end{enumerate}
  \item By 2.a) and T-Let
    \begin{enumerate}[label=(\alph*)]
    \item $\Gamma ~;~ b \vdash \texttt{null} \typ \tau$
    \item $\Gamma , x \typ \tau ~;~ b \vdash t \typ \sigma$
    \end{enumerate}
  \item Define
    \begin{enumerate}[label=(\alph*)]
    \item $\Gamma' := \Gamma , x \typ \tau$
    \item $L' := L[x \mapsto \texttt{null}]$
    \end{enumerate}
  \item By 4.a-b) and WF-Var, $H \vdash \Gamma' ; L' ; x$
  \item By 2.c), 4.a-b), 5., and WF-Env, $H \vdash \Gamma' ; L'$

  \item By 4.a-b), 
    \begin{enumerate}[label=(\alph*)]
    \item $permTypes(\Gamma') = permTypes(\Gamma)$
    \item $\forall (x,o,p), L(x) = b(o, p) \Longleftrightarrow L'(x) = b(o, p)$
    \end{enumerate}

  \item By 2.d), 7.a-b), WF-Perm, \( \vdash \Gamma' ; L' ; P\)
    
  \item By 2.b), 3.b), 4.a-b), 6., 8., and T-Frame1, $H \vdash F' \typ \sigma$
  \item By 1.c), 1.e-f), and F-ok, $H ~;~ a \vdash F' \ok$
  \item 1.a), 7., and 9. conclude this case.
  \end{enumerate}

\item[-] Case E-Select
  \begin{enumerate}
  \item By the assumptions
    \begin{enumerate}[label=(\alph*)]
    \item $\vdash H \typ \star$
    \item $H \vdash F \typ \sigma$
    \item $H ~;~ a \vdash F \ok$
    \item $\freduce H F H {F'}$
    \end{enumerate}
  \item By 1.d) and E-Select
    \begin{enumerate}[label=(\alph*)]
    \item $F = \pframe L {\texttt{let}~x = y.f~\texttt{in}~t} P l$
    \item $H(L(y)) = \obj C {FM}$
    \item $f \in dom(FM)$
    \item $F' = \pframe {L[x \mapsto FM(f)]} t P l$
    \end{enumerate}
  \item By 1.b), 2.a), and T-Frame1
    \begin{enumerate}[label=(\alph*)]
    \item $\Gamma ~;~ b \vdash \texttt{let}~x = y.f~\texttt{in}~t \typ \sigma$
    \item $l \neq \epsilon \implies \sigma \sub E'$
    \item $H \vdash \Gamma ~;~ L$
    \item $\vdash \Gamma ; L ; P$
    \end{enumerate}
  \item By 3.a) and T-Let
    \begin{enumerate}[label=(\alph*)]
    \item $\Gamma ~;~ b \vdash y.f \typ \tau$
    \item $\Gamma , x \typ \tau ~;~ b \vdash t \typ \sigma$
    \end{enumerate}
  \item By 4.a) and T-Select
    \begin{enumerate}[label=(\alph*)]
    \item $\Gamma ~;~ b \vdash y \typ D$
    \item $ftype(D, f) = E$
    \item $\tau = E$
    \end{enumerate}
  \item By 5.a) and T-Var, $\Gamma(y) = D$
  \item By 3.c), 6., and WF-Env, $H \vdash \Gamma ; L ; y$
  \item By 2.b), 6., 7., and WF-Var, $C = typeof(H, L(y)) \sub D$
  \item By 8., $\sub$-Ext, and WF-Class, $ftype(C, f) = ftype(D, f)$
  \item By 1.a), 2.b-c), and def.~\ref{def:well-typed-heap}, $FM(f) = \texttt{null} \lor typeof(H, FM(f)) \sub ftype(C, f)$
  \item Define
    \begin{enumerate}[label=(\alph*)]
    \item $\Gamma' := \Gamma , x \typ E$
    \item $L' := L[x \mapsto FM(f)]$
    \end{enumerate}
  \item By 5.b), 9., 10., and 11.b), $L'(x) = \texttt{null} ~\lor~ \\ typeof(H, L'(x)) \sub E$
  \item By 11.a-b), 12., and WF-Var, $H \vdash \Gamma' ; L' ; x$
  \item By 3.c), 11.a-b), 13., and WF-Env, $H \vdash \Gamma' ; L'$
  \item By 4.b), 5.c), and 11.a), $\Gamma' ; b \vdash t \typ \sigma$
    
  \item By 10., 11.a-b),
    \begin{enumerate}[label=(\alph*)]
    \item $permTypes(\Gamma') = permTypes(\Gamma)$
    \item $\forall (x,o,p), L(x) = b(o, p) \Longleftrightarrow L'(x) = b(o, p)$
    \end{enumerate}

  \item By 3.d), 16.a-b), WF-Perm, \( \vdash \Gamma' ; L' ; P\)

  \item By 2.d), 3.b), 14., 15., 17 and T-Frame1, $H \vdash F' \typ \sigma$

  \item By 1.c), 2.a-d), and F-ok, $H ~;~ a \vdash F' \ok$

  \item 1.a), 18., and 19. conclude this case.
  \end{enumerate}

\item[-] Case E-Assign
  \begin{enumerate}
  \item By the assumptions
    \begin{enumerate}[label=(\alph*)]
    \item $\vdash H \typ \star$
    \item $H \vdash F \typ \sigma$
    \item $H ~;~ a \vdash F \ok$
    \item $\freduce H F {H'} {F'}$
    \end{enumerate}
  \item By 1.d) and E-Assign
    \begin{enumerate}[label=(\alph*)]
    \item $F = \pframe L {\texttt{let}~x = y.f = z~\texttt{in}~t} P l$
    \item $L(y) = o$
    \item $H(o) = \obj C {FM}$
    \item $H' = H[o \mapsto \obj C {FM[f \mapsto L(z)]}]$
    \item $F' = \pframe L {\texttt{let}~x = z~\texttt{in}~t} P l$
    \end{enumerate}
  \item By 1.b), 2.a), and T-Frame1
    \begin{enumerate}[label=(\alph*)]
    \item $\Gamma ~;~ b \vdash \texttt{let}~x = y.f = z~\texttt{in}~t \typ \sigma$
    \item $l \neq \epsilon \implies \sigma \sub E'$
    \item $H \vdash \Gamma ; L$
    \item $\vdash \Gamma ; L ; P$ 
    \end{enumerate}
  \item By 3.a) and T-Let
    \begin{enumerate}[label=(\alph*)]
    \item $\Gamma ~;~ b \vdash y.f = z \typ \tau$
    \item $\Gamma , x \typ \tau ~;~ b \vdash t \typ \sigma$
    \end{enumerate}
  \item By 4.a) and T-Assign
    \begin{enumerate}[label=(\alph*)]
    \item $\Gamma ~;~ b \vdash y \typ D$
    \item $\Gamma ~;~ b \vdash z \typ E$
    \item $E \sub ftype(D, f)$
    \item $\tau = E$
    \end{enumerate}
  \item By 4.b), 5.b), 5.d), and T-Let, $\Gamma ~;~ b \vdash \texttt{let}~x = z~\texttt{in}~t \typ \sigma$
  \item By 2.c-d), 3.c), and WF-Env, $H' \vdash \Gamma ; L$
  \item By 3.b), 3.d), 6., 7., and T-Frame1, $H' \vdash F' \typ \sigma$
  \item By 5.b) and T-Var, $\Gamma(z) = E$
  \item By 3.c), 9., and WF-Env, $H \vdash \Gamma ; L ; z$
  \item By 9., 10., and WF-Var, $L(z) = \texttt{null} \lor L(z) = o' \land typeof(H, o') \sub E$
  \item By 1.c) and F-ok, $\forall x \mapsto b(o, p), y \mapsto o' \in L.~sep(H, o, o')$
  \item By 11. and 12., $L(z) = \texttt{null} \lor \forall w \mapsto b(o'', p'') \in L.~sep(H, L(z), o'')$
  \item By 2.b) and 12., $\forall w \mapsto b(o'', p'') \in L.~sep(H, L(y), o'')$
  \item By 2.c-d), 13., and 14., $\forall w \mapsto b(o'', p'') \in L.~ \\ sep(H', L(y), o'')$
  \item By 2.c-d) and 13., $L(z) = \texttt{null} \lor \forall w \mapsto b(o'', p'') \in L.~sep(H', L(z), o'')$
  \item By 12., 15., and 16., $\forall w \mapsto b(o'', p''), \hat{w} \mapsto \hat{o} \in L.~sep(H', o'', \hat{o})$
  \item By 1.c) and F-ok, $a = \texttt{ocap} \implies \forall x \mapsto o \in L.~ocap(typeof(H, o))$
  \item By 2.c-d) and 18., $a = \texttt{ocap} \implies \forall x \mapsto o \in L.~ocap(typeof(H', o))$
  \item By 1.c) and F-ok, $a = \texttt{ocap} \implies \forall x \mapsto o \in L, y \mapsto o' \in L_0.~sep(H, o, o')$
  \item By 2.b) and 20., $a = \texttt{ocap} \implies \forall \hat{w} \mapsto \hat{o} \in L_0.~sep(H, o, \hat{o})$
  \item By 11. and 20., $a = \texttt{ocap} \implies L(z) = \texttt{null} \lor \forall \hat{w} \mapsto \hat{o} \in L_0.~sep(H, o', \hat{o})$
  \item By 2.b-d), 21., and 22.
    \begin{enumerate}[label=(\alph*)]
    \item $a = \texttt{ocap} \implies \forall \hat{w} \mapsto \hat{o} \in L_0.~sep(H', o, \hat{o})$
    \item $a = \texttt{ocap} \implies L(z) = \texttt{null} \lor \forall \hat{w} \mapsto \hat{o} \in L_0.~sep(H', o', \hat{o})$
    \end{enumerate}
  \item By 2.b-d), 20., and 23., $a = \texttt{ocap} \implies \forall x \mapsto o \in L, y \mapsto o' \in L_0.~sep(H', o, o')$
  \item By 1.c), 2.a-e), 17., 19., 24., and F-ok, $H' ~;~ a \vdash F' \ok$
  \item By 5.c), 11., and $\sub$-Trans, $L(z) = \texttt{null} ~\lor~ \\ typeof(H, L(z)) \sub ftype(D, f)$
  \item By 5.a) and T-Var, $\Gamma(y) = D$
  \item By 3.c), 27., and WF-Env, $H \vdash \Gamma ; L ; y$
  \item By 2.b), 27., 28., and WF-Var, $typeof(H, L(y)) \sub D$
  \item By 2.b-c), 29., and def.~\ref{def:well-typed-heap}, $C \sub D$
  \item By 30., $\sub$-Ext, and WF-Class, $ftype(C, f) = ftype(D, f)$
  \item By 26. and 31., $L(z) = \texttt{null} \lor typeof(H, L(z)) \sub ftype(C, f)$
  \item By 2.c-d) and 32., $L(z) = \texttt{null} \lor typeof(H', L(z)) \sub ftype(C, f)$
  \item By 1.a), 2.c-d), 33., and def.~\ref{def:well-typed-heap}, $\vdash H' \typ \star$







  \item 8., 25., and 34. conclude this case.
  \end{enumerate} 

\end{itemize}

\end{proof}

\begin{lemma}\label{lem:lemma2}
If $\vdash H \typ \star$ then:

If $H \vdash FS$, $H ~;~ a \vdash FS \ok$, and $\fsreduce H {FS} {H'}
{FS'}$ then $\vdash H' \typ \star$, $H' \vdash FS'$, and $H' ~;~ b
\vdash FS' \ok$.
\end{lemma}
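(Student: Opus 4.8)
The plan is to proceed by induction on the derivation of $\fsreduce H {FS} {H'} {FS'}$, with one case per frame-stack transition rule. In every case I first invert the typing judgement $H \vdash FS$ using the frame-stack typing rules (\textsc{T-FS-NA}, \textsc{T-FS-NA2}, \textsc{T-FS-A}, \textsc{T-FS-A2}) together with \textsc{T-Frame1}/\textsc{T-Frame2}, and invert the well-formedness judgement $H ~;~ a \vdash FS \ok$ via \textsc{FS-ok} to extract (i) the per-frame well-formedness facts of \textsc{F-ok} and (ii) the cross-frame invariants $boxSep(H,F,FS)$, $uniqueOpenBox$, and $openBoxPropagation$. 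The case in which the top frame takes a single-frame step is discharged directly by Lemma~\ref{lem:lemma1}: it yields $\vdash H' \typ \star$ and preservation of the top frame, and since the tail $FS$ is untouched I only re-check that the heap change (from \textsc{E-Assign}, or the fresh allocation of \textsc{E-New}) leaves every reachable box root unchanged, so the cross-frame invariants are undisturbed.

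The structurally simple frame-stack cases are the two return rules and invocation. For \textsc{E-Return1} and \textsc{E-Return2} a frame is popped; well-typedness of the result follows from the annotated frame judgements $H \vdash^{\tau}_x F \typ \sigma$ supplied by \textsc{T-FS-A}/\textsc{T-FS-A2}, using that the returned value has a subtype of the slot $x$ in the caller, and well-formedness is immediate since removing a frame can only shrink the relevant quantifier ranges. For \textsc{E-Invoke} I build the callee frame $\pframe {L'} {t'} {P'} x$ from the method body, typed via \textsc{WF-Method1} or \textsc{WF-Method2}; the essential point is that when the argument is a box $b(o,p)$ the permission $p$ is both required ($p \in P$) and transferred into $P'$, so the guarded type $\GuaT Q D$ assigned to the parameter by \textsc{WF-Method2} is matched by an available permission, and box separation is preserved because exactly one box root, $o$, crosses into the new frame.

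The box-creating and stack-discarding rules \textsc{E-Box}, \textsc{E-Capture}, and \textsc{E-Swap} all collapse the tail to $\epsilon$, which trivialises $boxSep(H,F,FS)$, $uniqueOpenBox$, and $openBoxPropagation$; the work therefore concentrates in \textsc{F-ok} for the single resulting frame, plus re-establishing \textsc{WF-Perm}. In \textsc{E-Box} the fresh object $o$ has only \texttt{null} fields, so it is trivially separate from every existing reference and $fieldUniqueness$ holds vacuously, while the fresh permission $p$ matches the fresh abstract type $Q$ of \textsc{T-Box}. \textsc{E-Open} changes neither heap nor permissions; it pushes an $\epsilon$-annotated frame whose only binding is $z \mapsto o$, so I check that $o$ is the unique open box (the caller's $b(o,p)$ is the only reachable box root and $z$ aliases its contents) and that $openBoxPropagation$ holds for the caller, which follows from $p \in P$ being retained there.

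The main obstacle is \textsc{E-Swap}, with \textsc{E-Capture} a simpler variant of the same reasoning. Here the mutation $FM[f \mapsto o']$ must preserve $\vdash H' \typ \star$, which follows from the premises $D' \sub D$ and $ftype(C,f) = \BoxT D$ of \textsc{T-Swap} together with Definition~\ref{def:well-typed-heap2}. The delicate part is the two separation-style invariants of the resulting frame. First, $fieldUniqueness$ (Def.~\ref{def:field-uniqueness}) must hold for the freshly exposed box $b(o'',p'')$: this is exactly where the dominating-edge hypothesis is used, since the pre-state invariant guarantees every path reaching the old contents $o''$ passed through the edge $(\hat o, f)$, so once $f$ is overwritten $o''$ becomes an isolated, externally unique sub-heap, justifying the fresh guarded type $\GuaT R D$ and permission $\Perm R$. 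Second, $boxSep$ (Def.~\ref{def:box-sep-frame}) must account for the consumed box $b(o',p')$: because separation is \emph{conditional on permission availability} and \textsc{E-Swap} removes $p'$ from $P$, the merged object $o'$ no longer needs to be separate from $o$, while the extracted $o''$ with its fresh $p''$ is separate from every surviving box root precisely by the dominating-edge argument. Verifying that these local facts compose into $H' ~;~ b \vdash FS' \ok$, and that the effect $b$ computed by \textsc{FS-ok} is consistent after the stack is discarded, is the most intricate bookkeeping of the proof.
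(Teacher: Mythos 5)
Your overall strategy coincides with the paper's: induction on the derivation of $\fsreduce H {FS} {H'} {FS'}$, inversion of the \textsc{T-FS-*} and \textsc{FS-ok} judgements, Lemma~\ref{lem:lemma1} to discharge the single-frame step, and per-rule reconstruction of frame typing plus the \textsc{F-ok} invariants; your \textsc{E-Swap} analysis (heap typing from $D' \sub D$ and Def.~\ref{def:well-typed-heap2}, $fieldUniqueness$ via the dominating-edge property, $boxSep$ conditional on the deleted permission $p'$) is also exactly the paper's. However, there is a genuine gap in your \textsc{E-Frame} case, which is in fact where the paper does most of its work. You claim that since the tail is untouched you ``only re-check that the heap change leaves every reachable box root unchanged, so the cross-frame invariants are undisturbed.'' The set of box \emph{roots} is indeed unchanged, but \textsc{E-Assign} adds a heap edge from $o = L(y)$ to $L(z)$, which can enlarge the reachability set of a box whose root $b(\hat o, \hat p)$ lives in a \emph{tail} frame while $o$ is reachable from $\hat o$ --- precisely the situation when the top frame was created by \texttt{open}. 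In that configuration $boxSeparation$, $uniqueOpenBox$, and the $boxObjSep$/$boxOcap$ invariants of the tail frames are genuinely at risk, and they are not ``undisturbed'': one must show that the written value $L(z)$ stays confined to the same, unique, open box.

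The paper closes this with machinery your proposal lacks: the auxiliary Lemma~\ref{lem:lemma3}, which shows by its own induction (using $openBoxPropagation$ and the $\epsilon$-label of \texttt{open}-created frames) that if any box is open in $F$ w.r.t.\ $FS$ then the effect is \ocap; then a case split on whether some box is open, followed by an \emph{inner induction on the size of the tail stack} re-establishing \textsc{F-ok} for each tail frame under the mutated heap $H'$, exploiting $globalOcapSep$ and $uniqueOpenBox$ in the \ocap~branch. Without this, or an equivalent confinement argument, your \textsc{E-Frame} case fails exactly on heap-mutating steps. Your \textsc{E-Return} remark that well-formedness ``is immediate since removing a frame can only shrink the relevant quantifier ranges'' has a smaller instance of the same oversight: \textsc{E-Return1} injects the returned reference $L(x)$ into the caller's environment, and the paper must argue via $openBoxPropagation$ and the absence of open boxes in the returning frame that this value is separate from all of the caller's boxes --- that does not follow from shrinkage alone.
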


\begin{proof}
By induction on the derivation of $\fsreduce H {FS} {H'} {FS'}$

\begin{itemize}

\item[-] Case E-Invoke.
  \begin{enumerate}
  \item By the assumptions
    \begin{enumerate}[label=(\alph*)]
    \item $\vdash H \typ \star$
    \item $H \vdash FS$
    \item $H ~;~ a \vdash FS \ok$
    \item $\fsreduce H {FS} H {FS'}$
    \item $FS = F \circ GS$
    \item $F = \pframe L {\texttt{let}~x = y.m(z)~\texttt{in}~t} P l$
    \end{enumerate}
  \item By 1.d-f) and E-Invoke
    \begin{enumerate}[label=(\alph*)]
    \item $FS' = G' \circ G \circ GS$
    \item $G' = \pframe {L'} {t'} {P'} x$
    \item $G = \pframe L t P l$
    \item $L' = L_0[\texttt{this} \mapsto L(y), x \mapsto L(z)]$
    \item $H(L(y)) = \obj C {FM}$
    \item $mbody(C, m) = x \rightarrow t'$
    \item $P' = \begin{cases} \{p\} & \text{if } L(z) = b(o, p) \\
      \emptyset & \text{otherwise} \\
    \end{cases}$
    \end{enumerate}
  \item By 1.b), 1.e), T-FS-A, and T-FS-NA
    \begin{enumerate}[label=(\alph*)]
    \item $H \vdash F \typ \sigma$
    \item $l = \epsilon \implies H \vdash GS \land l = y \implies H \vdash^{\sigma}_y GS$
    \end{enumerate}
  \item By 1.f), 3.a), and T-Frame1
    \begin{enumerate}[label=(\alph*)]
    \item $\Gamma ~;~ b \vdash \texttt{let}~x = y.m(z)~\texttt{in}~t \typ \sigma$
    \item $H \vdash \Gamma ; L$
    \item $l \neq \epsilon \implies \sigma \sub C'$
    \item $\vdash \Gamma ; L ; P$
    \end{enumerate}
  \item By 4.a) and T-Let
    \begin{enumerate}[label=(\alph*)]
    \item $\Gamma ~;~ b \vdash y.m(z) \typ \tau$
    \item $\Gamma , x \typ \tau ~;~ b \vdash t \typ \sigma$
    \end{enumerate}
  \item By 4.b-c), 5.b), and T-Frame2, $H \vdash^{\tau}_x \pframe L t P l \typ \sigma$
  \item By 3.b), 6., T-FS-A2, and T-FS-NA2, $H \vdash^{\tau}_x \pframe L t P l \circ GS$
  \item By 5.a) and T-Invoke
    \begin{enumerate}[label=(\alph*)]
    \item $\Gamma ~;~ b \vdash y \typ D$
    \item $mtype(D, m) = \tau' \rightarrow \tau$
    \item $\Gamma ~;~ b \vdash z \typ \sigma'$
    \item $\tau' = E \implies \sigma' = E'$ for some $E' \sub E$
    \item $\tau' = \BoxT E \implies \sigma' = \GuaT Q {E'} \land \Perm Q \in \Gamma$ for some $Q, E' \sub E$
    \end{enumerate}
  \item By 8.a) and T-Var, $\Gamma(y) = D$
  \item By 2.e), 4.b), 9., WF-Env, and WF-Var, $C \sub D$
  \item By 8.b), 10., WF-Class, and WF-Override, $mtype(C, m) = mtype(D, m) = \tau' \rightarrow \tau$
  \item Define $\Gamma' := \begin{cases}
    \texttt{this} \typ C , x \typ E \text{~if } \tau' = E \\
    \texttt{this} \typ C , x \typ \GuaT Q E , \Perm Q \text{~otherwise }, \\ Q~\text{fresh} \\
  \end{cases}$
  \item By 2.f), 11., 12., WF-Method1, and WF-Method2, $\Gamma' ~;~ a' \vdash t' \typ \tau$
  \item By 8.c) and T-Var, $\Gamma(z) = \sigma'$
  \item By 4.b), 14., and WF-Env, $H \vdash \Gamma ; L ; z$
  \item By 8.d-e), 14., 15., and WF-Var, $L(z) = \texttt{null} \lor \\
    (\tau' = E \land \Gamma(z) = E' \land L(z) = o \land typeof(H, o) \sub E'$ for some $E' \sub E) \lor \\
    (\tau' = \BoxT E \land \Gamma(z) = \GuaT Q {E'} \land \Perm Q \in \Gamma \land L(z) = b(o, p) \land typeof(H, o) \sub E'$ for some $E' \sub E)$
  \item By 2.d), 12., 16., and $\sub$-Trans, $L'(x) = \texttt{null} \lor \\
    (\tau' = E \land \Gamma'(x) = E \land L'(x) = o \land typeof(H, o) \sub E) \lor \\
    (\tau' = \BoxT E \land \Gamma'(x) = \GuaT Q E \land \Perm Q \in \Gamma' \land \\
    L'(x) = b(o, p) \land typeof(H, o) \sub E)$
  \item By 17. and WF-Var, $H \vdash \Gamma' ; L' ; x$
  \item By 4.b), 9., and WF-Env, $H \vdash \Gamma ; L ; y$
  \item By 9., 19., and WF-Var, $L(y) = \texttt{null} \lor \Gamma(y) = D \land L(y) = o' \land typeof(H, o') \sub D$
  \item By 2.e), 20., and $\sub$-Refl, $L(y) = \texttt{null} \lor \Gamma(y) = D \land L(y) = o' \land typeof(H, o') \sub C$
  \item By 2.d), 12., and 21., $L'(\this) = \texttt{null} \lor \Gamma'(\this) = C \land L'(\this) = o' \land typeof(H, o') \sub C$
  \item By 22. and WF-Var, $H \vdash \Gamma' ; L' ; \this$
  \item By 2.d) and 12., $dom(\Gamma') \subseteq dom(L')$
  \item By 18., 23., 24., and WF-Env, $H \vdash \Gamma' ; L'$
  \item By 4.d), 8.b), 13., 25., WF-Method1, WF-Method2, and T-Frame1, $H \vdash \pframe {L'} {t'} {P'} x \typ \tau$
  \item By 7., 26., and T-FS-A, $H \vdash FS'$
  \item By 1.c) and FS-ok
    \begin{enumerate}[label=(\alph*)]
    \item $H ~;~ a \vdash F \ok$
    \item $H ~;~ c \vdash GS \ok$
    \item $a = \begin{cases}\ocap & \text{if } c = \ocap \lor label(F) = \epsilon \\
      \epsilon & \text{otherwise} \\
    \end{cases}$
    \item $boxSeparation(H, F, GS)$
    \item $uniqueOpenBox(H, F, GS)$
    \item $openBoxPropagation(H, F, GS)$
    \end{enumerate}
  \item By 28.d) and def. $boxSeparation$, \\ $boxSeparation(H, G, GS)$
  \item By 2.a-e), 2.g), and 29., $boxSeparation(H, G', G \circ GS)$
  \item By 28.e) and 2.d), $uniqueOpenBox(H, G', G \circ GS)$
  \item By 2.d) and 28.f), $openBoxPropagation(H, G', G \circ GS)$
  \item By 1.c) and FS-ok, $H ~;~ a \vdash G \circ GS \ok$
  \item By 2.d), 2.g), and 28.a), $H ~;~ a \vdash G' \ok$
  \item By 30., 31., 32., 33., 34., and FS-ok, $H ~;~ a \vdash FS' \ok$
  \item 1.a), 27., and 35. conclude this case.
    
  \end{enumerate}

\item[-] Case E-Return1.
  \begin{enumerate}
  \item By the assumptions
    \begin{enumerate}[label=(\alph*)]
    \item $\vdash H \typ \star$
    \item $H \vdash FS$
    \item $H ~;~ a \vdash FS \ok$
    \item $\fsreduce H {FS} H {FS'}$
    \item $FS = F \circ F' \circ GS$
    \item $F = \pframe L x P y$
    \item $F' = \pframe {L'} {t'} {P'} l$
    \end{enumerate}
  \item By 1.d-g) and E-Return1
    \begin{enumerate}[label=(\alph*)]
    \item $FS' = G' \circ GS$
    \item $G' = \pframe {L'[y \mapsto L(x)]} {t'} {P'} l$
    \end{enumerate}
  \item By 1.b), 1.e-f), and T-FS-A
    \begin{enumerate}[label=(\alph*)]
    \item $H \vdash F \typ \sigma$
    \item $H \vdash^{\sigma}_y F' \circ GS$
    \end{enumerate}
  \item By 1.f), 3.a), and T-Frame1
    \begin{enumerate}[label=(\alph*)]
    \item $\Gamma ~;~ b \vdash x \typ \sigma$
    \item $H \vdash \Gamma ; L$
    \item $\sigma \sub C$
    \item $\vdash \Gamma ; L; P$
    \end{enumerate}
  \item By 4.a), 4.c), and T-Var, $\Gamma(x) = \sigma \sub C$
  \item By 4.b), 5., and WF-Env, $H \vdash \Gamma ; L ; x$
  \item By 1.g), 3.b), T-FS-A2, and T-FS-NA2
    \begin{enumerate}[label=(\alph*)]
    \item $H \vdash^{\sigma}_y F' \typ \tau$
    \item $l = \epsilon \implies H \vdash GS \land l = z \implies H \vdash^{\tau}_z GS$
    \end{enumerate}
  \item By 1.g), 7.a), and T-Frame2
    \begin{enumerate}[label=(\alph*)]
    \item $\Gamma' , y \typ \sigma ; b' \vdash t' \typ \tau$
    \item $H \vdash \Gamma' ; L'$
    \item $l \neq \epsilon \implies \tau \sub D$
    \item $\vdash \Gamma' ; L' ; P'$
    \end{enumerate}
  \item By 6., 8.b), and WF-Env, $H \vdash (\Gamma' , y \typ \sigma) ; L'[y \mapsto L(x)]$
  \item By 2.b), 8.a), 8.c), 8.d), 9., and T-Frame1, $H \vdash G' \typ \tau$
  \item By 7.b), 10., T-FS-A, and T-FS-NA, $H \vdash FS'$
  \item By 1.c), 1.e-f), and FS-ok
    \begin{enumerate}[label=(\alph*)]
    \item $H ~;~ a \vdash F \ok$
    \item $H ~;~ c \vdash F' \circ GS \ok$
    \item $a = \begin{cases} \ocap & \text{if } c = \ocap \\
      \epsilon & \text{otherwise} \\
    \end{cases}$
    \item $boxSeparation(H, F, F' \circ GS)$
    \item $uniqueOpenBox(H, F, F' \circ GS)$
    \item $openBoxPropagation(H, F, F' \circ GS)$
    \end{enumerate}
  \item By 12.b) and FS-ok
    \begin{enumerate}[label=(\alph*)]
    \item $H ~;~ c \vdash F' \ok$
    \item $H ~;~ d \vdash GS \ok$
    \item $c = \begin{cases} \ocap & \text{if } d = \ocap \lor l = \epsilon \\
      \epsilon & \text{otherwise} \\
    \end{cases}$
    \item $boxSeparation(H, F', GS)$
    \item $uniqueOpenBox(H, F', GS)$
    \item $openBoxPropagation(H, F', GS)$
    \end{enumerate}
  \item By 13.a) and F-ok, $L(x) = \texttt{null} \implies H ~;~ c \vdash G' \ok$
  \item By 5., 6., and WF-Var, $L(x) = \texttt{null} \lor L(x) = o \land typeof(H, o) \sub C$
  \item By 13.a) and F-ok, $\forall z \mapsto b(o', p') \in L'.~ \\ \lnot openbox(H, o', F', GS)$
  \item By 12.f) and 16., $\forall z \mapsto b(o', p') \in L'.~ \\ \lnot openbox(H, o', F, F' \circ GS)$
  \item By 17. and def. openbox, $L(x) = o \implies \forall z \mapsto b(o', p') \in L'.~sep(H, L(x), o')$
  \item By 12.a), 12.c), and F-ok, $c = \ocap \implies L(x) = o \implies ocap(typeof(H, L(x)))$
  \item By 1.g), 2.b), 13.a), and F-ok, $fieldUniqueness(H, G')$
  \item By 2.b), 13.a), 14., 15., 18., 19., 20., and F-ok, $H ~;~ c \vdash G' \ok$
  \item By 13.d), 15., and 18., $boxSeparation(H, G', GS)$
  \item By 13.e), 15., and 18., $uniqueOpenBox(H, G', GS)$
  \item By 13.f), 15., and 18., $openBoxPropagation(H, G', GS)$
  \item By 13.b), 13.c), 21., 22., 23., 24., and FS-ok, $H ~;~ c \vdash FS' \ok$
  \item 1.a), 11., and 25. conclude this case.
  \end{enumerate}

\item[-] Case E-Open.
  \begin{enumerate}
  \item By the assumptions
    \begin{enumerate}[label=(\alph*)]
    \item $\vdash H \typ \star$
    \item $H \vdash FS$
    \item $H ~;~ a \vdash FS \ok$
    \item $\fsreduce H {FS} H {FS'}$
    \item $FS = F \circ GS$
    \item $F = \pframe L {\texttt{let}~x = y.\texttt{open}~\{ z \Rightarrow t' \}~\texttt{in}~t} P l$
    \end{enumerate}
  \item By 1.d-f) and E-Open
    \begin{enumerate}[label=(\alph*)]
    \item $FS' = G' \circ G \circ GS$
    \item $G' = \pframe {L'} {t'} {\emptyset} {\epsilon}$
    \item $G = \pframe {L[x \mapsto L(y)]} t P l$
    \item $L(y) = b(o, p)$
    \item $p \in P$
    \item $L' = [z \mapsto o]$
    \end{enumerate}
  \item By 1.b), 1.e-f), T-FS-A, and T-FS-NA
    \begin{enumerate}[label=(\alph*)]
    \item $H \vdash F \typ \sigma$
    \item $l = \epsilon \implies H \vdash GS \land l = w \implies H \vdash^{\sigma}_w GS$
    \end{enumerate}
  \item By 1.f), 3.a), and T-Frame1
    \begin{enumerate}[label=(\alph*)]
    \item $\Gamma ~;~ b \vdash \texttt{let}~x = y.\texttt{open}~\{ z \Rightarrow t' \}~\texttt{in}~t \typ \sigma$
    \item $l \neq \epsilon \implies \sigma \sub C$
    \item $H \vdash \Gamma ; L$
    \item $\vdash \Gamma ; L ; P$
    \end{enumerate}
  \item By 4.a) and T-Let
    \begin{enumerate}[label=(\alph*)]
    \item $\Gamma ~;~ b \vdash y.\texttt{open}~\{ z \Rightarrow t' \} \typ \tau$
    \item $\Gamma , x \typ \tau ~;~ b \vdash t \typ \sigma$
    \end{enumerate}
  \item By 5.a) and T-Open
    \begin{enumerate}[label=(\alph*)]
    \item $\Gamma ~;~ b \vdash y \typ \GuaT Q D$
    \item $\Perm Q \in \Gamma$
    \item $z \typ D ~;~ \ocap \vdash t' \typ \tau'$
    \item $\tau = \GuaT Q D$
    \end{enumerate}
  \item By 6.a) and T-Var, $\Gamma(y) = \GuaT Q D$
  \item By 4.c), 7., and WF-Env, $H \vdash \Gamma ; L ; y$
  \item By 2.d), 7., 8., and WF-Var, $typeof(H, o) \sub D$
  \item By 9. and WF-Var, $H \vdash (z \typ D) ; [z \mapsto o] ; z$
  \item By 10. and WF-Env, $H \vdash (z \typ D) ; [z \mapsto o]$
  \item By 6.c), 11., and T-Frame1, $H \vdash G' \typ \tau'$
  \item By 2.d), 6.d), 9., and WF-Var, $H \vdash \Gamma , x \typ \tau ; L[x \mapsto L(y)] ; x$
  \item By 4.c), 13., and WF-Env, $H \vdash \Gamma , x \typ \tau ; L[x \mapsto L(y)]$
  \item By 4.b), 4.d), 5.b), 14., and T-Frame1, $H \vdash G \typ \sigma$
  \item By 3.b), 15., T-FS-A, and T-FS-NA, $H \vdash G \circ GS$
  \item By 12., 16., and T-FS-NA, $H \vdash FS'$
  \item By 1.c) and FS-ok
    \begin{enumerate}[label=(\alph*)]
    \item $H ~;~ a \vdash F \ok$
    \item $H ~;~ c \vdash GS \ok$
    \item $a = \begin{cases}\ocap & \text{if } c = \ocap \lor l = \epsilon \\
      \epsilon & \text{otherwise} \\
    \end{cases}$
    \item $boxSeparation(H, F, GS)$
    \item $uniqueOpenBox(H, F, GS)$
    \item $openBoxPropagation(H, F, GS)$
    \end{enumerate}
  \item By 18.a) and F-ok
    \begin{enumerate}[label=(\alph*)]
    \item $boxSep(H, F)$
    \item $boxObjSep(H, F)$
    \item $boxOcap(H, F)$
    \item $a = \ocap \implies globalOcapSep(H, F)$
    \item $fieldUniqueness(H, F)$
    \end{enumerate}
  \item By 2.d-e) and 19.c), $ocap(typeof(H, o))$
  \item By 2.b) and 2.f)
    \begin{enumerate}[label=(\alph*)]
    \item $boxSep(H, G')$
    \item $boxObjSep(H, G')$
    \item $boxOcap(H, G')$
    \item $fieldUniqueness(H, G')$
    \end{enumerate}
  \item By 1.f), 2.b,d,f), 19.b), and 20., $c = \ocap \implies globalOcapSep(H, G')$
  \item By 21.a-d), 22., and F-ok, $H ~;~ c \vdash G' \ok$
  \item By 1.f), 2.c), 18.a), and F-ok, $H ~;~ a \vdash G \ok$
  \item By 1.f), 2.c), 18., 24., and FS-ok, $H ~;~ a \vdash G \circ GS \ok$
  \item By 2.a-b), 23., 25., and FS-ok, $H ~;~ c \vdash FS' \ok$
  \item 1.a), 17., and 26. conclude this case.
  \end{enumerate}

\item[-] Case E-Box.
  \begin{enumerate}
  \item By the assumptions
    \begin{enumerate}[label=(\alph*)]
    \item $\vdash H \typ \star$
    \item $H \vdash FS$
    \item $H ~;~ a \vdash FS \ok$
    \item $\fsreduce H {FS} {H'} {FS'}$
    \item $FS = F \circ GS$
    \item $F = \pframe L {\texttt{box}[C]~\{ x \Rightarrow t \}} P l$
    \end{enumerate}
  \item By 1.d-f) and E-Box
    \begin{enumerate}[label=(\alph*)]
    \item $FS' = G \circ \epsilon$
    \item $G = \pframe {L[x \mapsto b(o, p)]} t {P \cup \{p\}} \epsilon$
    \item $o \notin dom(H)$
    \item $fields(C) = \seq{f}$
    \item $p~\text{fresh}$
    \item $H' = H[o \mapsto \obj C {\seq{f \mapsto \texttt{null}}}]$
    \end{enumerate}
  \item By 1.b), 1.e-f), T-FS-A, and T-FS-NA
    \begin{enumerate}[label=(\alph*)]
    \item $H \vdash F \typ \sigma$
    \item $l = \epsilon \implies H \vdash GS \land l = w \implies H \vdash^{\sigma}_w GS$
    \end{enumerate}
  \item By 1.f), 3.a), and T-Frame1
    \begin{enumerate}[label=(\alph*)]
    \item $\Gamma ~;~ b \vdash \texttt{box}[C]~\{ x \Rightarrow t \} \typ \sigma$
    \item $l \neq \epsilon \implies \sigma \sub C$
    \item $H \vdash \Gamma ; L$
    \item $\vdash \Gamma ; L; P$
    \end{enumerate}
  \item By 4.a) and T-Box
    \begin{enumerate}[label=(\alph*)]
    \item $ocap(C)$
    \item $\Gamma , x \typ \GuaT Q C , \Perm Q ~;~ b \vdash t \typ \tau$, $Q$ fresh
    \item $\sigma = \bot$
    \end{enumerate}
  \item By 2.c), 2.f), 4.c), and WF-Env, $H' \vdash \Gamma ; L$
  \item By 2.f) and $\sub$-Refl, $typeof(H', o) \sub C$
  \item Define
    \begin{enumerate}[label=(\alph*)]
    \item $\Gamma' := \Gamma , x \typ \GuaT Q C , \Perm Q$
    \item $L' := L[x \mapsto b(o, p)]$
    \end{enumerate}
  \item By 7., 8.a-b), and WF-Var, $H' \vdash \Gamma' ; L' ; x$
  \item By 6., 8.a-b), 9., and WF-Env, $H' \vdash \Gamma' ; L'$
  \item By 2.b), 4.d), 5.b), 8.a-b), 10., with $\gamma' = \gamma[Q\mapsto p]$, and T-Frame1, $H' \vdash G \typ \tau$
  \item By 11., T-EmpFS, and T-FS-NA, $H' \vdash FS'$
  \item By 1.c) and FS-ok
    \begin{enumerate}[label=(\alph*)]
    \item $H ~;~ a \vdash F \ok$
    \item $H ~;~ c \vdash GS \ok$
    \item $a = \begin{cases}\ocap & \text{if } c = \ocap \lor l = \epsilon \\
      \epsilon & \text{otherwise} \\
    \end{cases}$
    \item $boxSeparation(H, F, GS)$
    \item $uniqueOpenBox(H, F, GS)$
    \item $openBoxPropagation(H, F, GS)$
    \end{enumerate}
  \item By 2.f) and 5.a), $ocap(typeof(H', o))$
  \item By 1.f), 2.b,c,f), 13.a), 14., and F-ok, $H' ~;~ a \vdash G \ok$
  \item By 15. and SingFS-ok, $H' ~;~ a \vdash FS' \ok$
  \item By 1.a), 2.d,f), and def. well-typed heap, $\vdash H' \typ \star$
  \item 12., 16., and 17. conclude this case.
  \end{enumerate}

\item[-] Case E-Capture.
  \begin{enumerate}
  \item By the assumptions
    \begin{enumerate}[label=(\alph*)]
    \item $\vdash H \typ \star$
    \item $H \vdash FS$
    \item $H ~;~ a \vdash FS \ok$
    \item $\fsreduce H {FS} {H'} {FS'}$
    \item $FS = F \circ GS$
    \item $F = \pframe L {\texttt{capture}(x.f, y)~\{ z \Rightarrow t \}} P l$
    \end{enumerate}
  \item By 1.d-f) and E-Capture
    \begin{enumerate}[label=(\alph*)]
    \item $FS' = F' \circ \epsilon$
    \item $F' = \pframe {L[z \mapsto L(x)]} t {P \setminus \{p'\}} \epsilon$
    \item $L(x) = b(o, p)$
    \item $L(y) = b(o', p')$
    \item $\{ p, p' \} \subseteq P$
    \item $H(o) = \obj C {FM}$
    \item $H' = H[o \mapsto \obj C {FM[f \mapsto o']}]$
    \end{enumerate}
  \item By 1.b), 1.e-f), T-FS-A, and T-FS-NA
    \begin{enumerate}[label=(\alph*)]
    \item $H \vdash F \typ \sigma$
    \item $l = \epsilon \implies H \vdash GS \land l = w \implies H \vdash^{\sigma}_w GS$
    \end{enumerate}
  \item By 1.f), 3.a), and T-Frame1
    \begin{enumerate}[label=(\alph*)]
    \item $\Gamma ~;~ b \vdash \texttt{capture}(x.f, y)~\{ z \Rightarrow t \} \typ \sigma$
    \item $l \neq \epsilon \implies \sigma \sub C$
    \item $H \vdash \Gamma ; L$
    \item $\vdash \Gamma ; L ; P$
    \end{enumerate}
  \item By 4.a) and T-Capture
    \begin{enumerate}[label=(\alph*)]
    \item $\Gamma ~;~ b \vdash x \typ \GuaT Q C$
    \item $\Gamma ~;~ b \vdash y \typ \GuaT {Q'} {D'}$
    \item $\{ \Perm Q , \Perm {Q'} \} \subseteq \Gamma$
    \item $D' \sub ftype(C, f)$
    \item $\Gamma \setminus \{ \Perm {Q'} \} , z \typ \GuaT Q C ~;~ b \vdash t \typ \tau$
    \item $\sigma = \bot$
    \end{enumerate}
  \item By 5.a-b) and T-Var
    \begin{enumerate}[label=(\alph*)]
    \item $\Gamma(x) = \GuaT Q C$
    \item $\Gamma(y) = \GuaT {Q'} {D'}$
    \end{enumerate}
  \item By 2.c), 2.f-g), 6.a), WF-Var, and WF-Env, $H' \vdash \Gamma ; L$
  \item Define
    \begin{enumerate}[label=(\alph*)]
    \item $\Gamma' := \Gamma \setminus \{ \Perm {Q'} \} , z \typ \GuaT Q C$
    \item $L' := L[z \mapsto L(x)]$
    \end{enumerate}
  \item By 2.c), 2.g), 8.a-b), $\sub$-Refl, and WF-Var, $H' \vdash \Gamma' ; L' ; z$
  \item By 7., 9., and WF-Env, $H' \vdash \Gamma' ; L'$
  \item By 2.b), 4.d), 5.e), 8.a), 10., and T-Frame1, $H' \vdash F' \typ \tau$
  \item By 2.a), 11., T-EmpFS, and T-FS-NA, $H' \vdash FS'$
  \item By 1.c) and FS-ok
    \begin{enumerate}[label=(\alph*)]
    \item $H ~;~ a \vdash F \ok$
    \item $H ~;~ c \vdash GS \ok$
    \item $a = \begin{cases}\ocap & \text{if } c = \ocap \lor l = \epsilon \\
      \epsilon & \text{otherwise} \\
    \end{cases}$
    \item $boxSeparation(H, F, GS)$
    \item $uniqueOpenBox(H, F, GS)$
    \item $openBoxPropagation(H, F, GS)$
    \end{enumerate}
  \item By 13.a) and F-ok
    \begin{enumerate}[label=(\alph*)]
    \item $boxSep(H, F)$
    \item $boxObjSep(H, F)$
    \item $boxOcap(H, F)$
    \item $a = \ocap \implies globalOcapSep(H, F)$
    \item $fieldUniqueness(H, F)$
    \end{enumerate}
  \item By 2.c-g), 8.b), and 14.a), $boxSep(H, F')$
  \item By 2.c-g), 8.b), and 14.b), $boxObjSep(H', F')$
  \item By 2.c-d) and 14.c)
    \begin{enumerate}[label=(\alph*)]
    \item $\forall o_1 \in dom(H).~ \\ reach(H, o, o_1) \implies ocap(typeof(H, o_1))$
    \item $\forall o_2 \in dom(H).~ \\ reach(H, o', o_2) \implies ocap(typeof(H, o_2))$
    \end{enumerate}
  \item By 2.g) and 17.a-b), $\forall o_1 \in dom(H').~ \\ reach(H', o, o_1) \implies ocap(typeof(H', o_1))$
  \item By 2.c,e,g), 8.b), 14.c), and 18., $boxOcap(H', F')$
  \item By 2.c-g), 8.b), 14.b,d), $a = \ocap \implies \forall x_1 \mapsto o_1 \in L', x_2 \mapsto o_2 \in L_0.~ocap(typeof(H', o_1)) \land sep(H', o_1, o_2)$
  \item By 1.f), 2.b-g), and 14.e), $fieldUniqueness(H', F')$
  \item By 2.b), 8.b), 15., 16., 19., 20., 21., and F-ok, $H' ~;~ a \vdash F' \ok$
  \item By 2.a), 22., and SingFS-ok, $H' ~;~ a \vdash FS' \ok$
  \item By 2.d), 4.c), 6.b), and WF-Env, $H \vdash \Gamma ; L ; y$
  \item By 2.d), 6.b), 24., and WF-Var, $typeof(H, o') \sub D'$
  \item By 5.d), 25., and $\sub$-Trans, $typeof(H, o') \sub ftype(C, f)$
  \item By 2.g) and 26., $typeof(H', o') \sub ftype(C, f)$
  \item By 1.a), 2.f-g), 27., and def. well-typed heap, $\vdash H' \typ \star$
  \item 12., 23., 28., and 29. conclude this case.
  \end{enumerate}

\item[-] Case E-Swap.
  \begin{enumerate}
  \item By the assumptions
    \begin{enumerate}[label=(\alph*)]
    \item $\vdash H \typ \star$
    \item $H \vdash FS$
    \item $H ~;~ a \vdash FS \ok$
    \item $\fsreduce H {FS} {H'} {FS'}$
    \item $FS = F \circ GS$
    \item $F = \pframe L {\texttt{swap}(x.f, y)~\{ z \Rightarrow t \}} P l$
    \end{enumerate}
  \item By 1.d-f) and E-Swap
    \begin{enumerate}[label=(\alph*)]
    \item $FS' = F' \circ \epsilon$
    \item $F' = \pframe {L[z \mapsto b(o'', p'')]} t {P \cup \{p''\} \setminus \{p'\}} \epsilon$
    \item $L(x) = b(o, p)$
    \item $L(y) = b(o', p')$
    \item $\{ p, p' \} \subseteq P$
    \item $H(o) = \obj C {FM}$
    \item $H' = H[o \mapsto \obj C {FM[f \mapsto o']}]$
    \item $p''$ fresh
    \end{enumerate}
  \item By 1.b), 1.e-f), T-FS-A, and T-FS-NA
    \begin{enumerate}[label=(\alph*)]
    \item $H \vdash F \typ \sigma$
    \item $l = \epsilon \implies H \vdash GS \land l = w \implies H \vdash^{\sigma}_w GS$
    \end{enumerate}
  \item By 1.f), 3.a), and T-Frame1
    \begin{enumerate}[label=(\alph*)]
    \item $\Gamma ~;~ b \vdash \texttt{swap}(x.f, y)~\{ z \Rightarrow t \} \typ \sigma$
    \item $l \neq \epsilon \implies \sigma \sub C$
    \item $H \vdash \Gamma ; L$
    \item $\vdash \Gamma ; L ; P$
    \end{enumerate}
  \item By 4.a) and T-Swap
    \begin{enumerate}[label=(\alph*)]
    \item $\Gamma ~;~ b \vdash x \typ \GuaT Q C$
    \item $\Gamma ~;~ b \vdash y \typ \GuaT {Q'} {D'}$
    \item $\{ \Perm Q , \Perm {Q'} \} \subseteq \Gamma$
    \item $\texttt{Box}[D] = ftype(C, f)$
    \item $\Gamma \setminus \{ \Perm {Q'} \} , z \typ \GuaT R D, \Perm R ~;~ b \vdash t \typ \tau$
    \item $\sigma = \bot$
    \item $D'  \sub D$
    \item $R$ fresh
    \end{enumerate}
  \item By 5.a-b) and T-Var
    \begin{enumerate}[label=(\alph*)]
    \item $\Gamma(x) = \GuaT Q C$
    \item $\Gamma(y) = \GuaT {Q'} {D'}$
    \end{enumerate}
  \item By 2.c), 2.f-g), 6.a), WF-Var, and WF-Env, $H' \vdash \Gamma ; L$
  \item Define
    \begin{enumerate}[label=(\alph*)]
    \item $\Gamma' := \Gamma \setminus \{ \Perm {Q'} \} , z \typ \GuaT R D , \Perm R$
    \item $L' := L[z \mapsto b(o'', p'')]$
    \end{enumerate}
  \item By 2.c), 2.g), 8.a-b), $\sub$-Refl, and WF-Var, $H' \vdash \Gamma' ; L' ; z$
  \item By 7., 9., and WF-Env, $H' \vdash \Gamma' ; L'$
  \item By 2.b), 4.d), 5.e), 8.a), 10., and T-Frame1, $H' \vdash F' \typ \tau$
  \item By 2.a), 11., T-EmpFS, and T-FS-NA, $H' \vdash FS'$
  \item By 1.c) and FS-ok
    \begin{enumerate}[label=(\alph*)]
    \item $H ~;~ a \vdash F \ok$
    \item $H ~;~ c \vdash GS \ok$
    \item $a = \begin{cases}\ocap & \text{if } c = \ocap \lor l = \epsilon \\
      \epsilon & \text{otherwise} \\
    \end{cases}$
    \item $boxSeparation(H, F, GS)$
    \item $uniqueOpenBox(H, F, GS)$
    \item $openBoxPropagation(H, F, GS)$
    \end{enumerate}
  \item By 13.a) and F-ok
    \begin{enumerate}[label=(\alph*)]
    \item $boxSep(H, F)$
    \item $boxObjSep(H, F)$
    \item $boxOcap(H, F)$
    \item $a = \ocap \implies globalOcapSep(H, F)$
    \item $fieldUniqueness(H, F)$
    \end{enumerate}
  \item (Removed.)
  \item By 2.c), 2.e-f), 2.i), 5.d), 14.e), and def. $fieldUniqueness$, $domedge(H, o, f, o, o'')$
  \item By 1.f), 2.c-e), 14.a), and def. $boxSep$, $sep(H, o, o')$
    \item By 2.b-h), 16., 17., and def. $boxSep$, $boxSep(H', F')$
  \item By 2.c-g), 8.b), and 14.b), $boxObjSep(H', F')$
  \item By 2.c-d) and 14.c)
    \begin{enumerate}[label=(\alph*)]
    \item $\forall o_1 \in dom(H).~ \\ reach(H, o, o_1) \implies ocap(typeof(H, o_1))$
    \item $\forall o_2 \in dom(H).~ \\ reach(H, o', o_2) \implies ocap(typeof(H, o_2))$
    \end{enumerate}
  \item By 2.g) and 17.a-b), $\forall o_1 \in dom(H').~ \\ reach(H', o, o_1) \implies ocap(typeof(H', o_1))$
  \item By 2.c,e,g), 8.b), 14.c), and 18., $boxOcap(H', F')$
  \item By 2.c-g), 8.b), 14.b,d), $a = \ocap \implies \forall x_1 \mapsto o_1 \in L', x_2 \mapsto o_2 \in L_0.~ocap(typeof(H', o_1)) \land sep(H', o_1, o_2)$
  \item By 2.f-g) and 17., $\forall \hat{o} \in dom(H').~reach(H', o', \hat{o}) \implies domedge(H', o, f, o, \hat{o})$
  \item By 1.f), 2.c-e), 14.e), and def. $fieldUniqueness$
    \begin{enumerate}[label=(\alph*)]
    \item $\forall o_1, o_2 \in dom(H).~ \\ reach(H, o, o_1) \land H(o_1) = \obj {C_1} {FM_1} \land ftype(C_1, f_1) = \texttt{Box}[D_1] \land reach(H, FM_1(f_1), o_2) \implies \\ domedge(H, o_1, f_1, o, o_2)$
    \item $\forall o_1, o_2 \in dom(H).~ \\ reach(H, o', o_1) \land H(o_1) = \obj {C_1} {FM_1} \land ftype(C_1, f_1) = \texttt{Box}[D_1] \land reach(H, FM_1(f_1), o_2) \implies \\ domedge(H, o_1, f_1, o', o_2)$
    \end{enumerate}
    \item By 2.b-i), 24., 25.a-b), and def. $fieldUniqueness$, $fieldUniqueness(H', F')$
  \item By 2.b), 8.b), 18., 19., 22., 23., 26., and F-ok, $H' ~;~ a \vdash F' \ok$
  \item By 2.a), 27., and SingFS-ok, $H' ~;~ a \vdash FS' \ok$
  \item By 2.d), 4.c), 6.b), and WF-Env, $H \vdash \Gamma ; L ; y$
  \item By 2.d), 6.b), 29., and WF-Var, $typeof(H, o') \sub D'$
  \item By 5.g), 30., and $\sub$-Trans, $typeof(H, o') \sub D$
  \item By 2.g) and 31., $typeof(H', o') \sub D$
  \item By 1.a), 2.f-g), 32., and def.~\ref{def:well-typed-heap2}, $\vdash H' \typ \star$
  \item 12., 28., and 33. conclude this case.
  \end{enumerate}

\item[-] Case E-Frame. We only consider the case where $\freduce H F
  {H'} {F'}$ by E-Assign; the other cases follow analogously.
  \begin{enumerate}
  \item By the assumptions
    \begin{enumerate}[label=(\alph*)]
    \item $\vdash H \typ \star$
    \item $H \vdash F \circ FS$
    \item $H ~;~ a \vdash F \circ FS \ok$
    \item $\fsreduce H {F \circ FS} {H'} {F' \circ FS}$
    \item $\freduce H F {H'} {F'}$
    \item $F = \pframe L {\texttt{let}~x = y.f = z~\texttt{in}~t} P l$
    \item $F' = \pframe L {\texttt{let}~x = z~\texttt{in}~t} P l$
    \item $L(y) = o$
    \item $H(o) = \obj C {FM}$
    \item $H' = H[o \mapsto \obj C {FM[f \mapsto L(z)]}]$
    \end{enumerate}
  \item By 1.b), T-FS-A, and T-FS-NA
    \begin{enumerate}[label=(\alph*)]
    \item $H \vdash F \typ \sigma$
    \item $l = \epsilon \implies H \vdash FS \land l = w \implies H \vdash^{\sigma}_w FS$
    \end{enumerate}
  \item By 1.c) and FS-ok, $H ~;~ a \vdash F \ok$
  \item By 1.a), 1.e), 2.a), 3., and Lemma~\ref{lem:lemma1}
    \begin{enumerate}[label=(\alph*)]
    \item $\vdash H' \typ \star$
    \item $H' \vdash F' \typ \sigma$
    \item $H' ~;~ a \vdash F' \ok$
    \item $\vdash \Gamma ; L ; P$
    \end{enumerate}
  \item By 3. and F-ok
    \begin{enumerate}[label=(\alph*)]
    \item $boxSep(H, F)$
    \item $boxObjSep(H, F)$
    \item $boxOcap(H, F)$
    \item $a = \texttt{ocap} \implies globalOcapSep(H, F)$
    \item $fieldUniqueness(H, F)$
    \end{enumerate}
  \item By 1.f), 2.a), and T-Frame1
    \begin{enumerate}[label=(\alph*)]
    \item $\Gamma ~;~ b \vdash \texttt{let}~x = y.f = z~\texttt{in}~t \typ \sigma$
    \item $l \neq \epsilon \implies \sigma \sub \hat{C}$
    \item $H \vdash \Gamma ; L$
    \end{enumerate}
  \item By 6.a) and T-Let
    \begin{enumerate}[label=(\alph*)]
    \item $\Gamma ~;~ b \vdash y.f = z \typ \tau$
    \item $\Gamma , x \typ \tau ~;~ b \vdash t \typ \sigma$
    \end{enumerate}
  \item By 7.a) and T-Assign
    \begin{enumerate}[label=(\alph*)]
    \item $\Gamma ~;~ b \vdash y \typ D$
    \item $ftype(D, f) = E$
    \item $\Gamma ~;~ b \vdash z \typ E'$
    \item $E' \sub E$
    \item $\tau = E$
    \end{enumerate}
  \item By 8.c) and T-Var, $\Gamma(z) = E'$
  \item By 6.c), 9., and WF-Env, $H \vdash \Gamma ; L ; z$
  \item By 9., 10., and WF-Var, $L(z) = \texttt{null} \lor L(z) \in dom(H) \land typeof(H, L(z)) \sub E'$
  \item By 1.h), 5.b), 11., and def.~\ref{def:box-obj-sep}
    \begin{enumerate}[label=(\alph*)]
    \item $\forall \hat{x} \mapsto b(\hat{o}, \hat{p}) \in L.~sep(H, \hat{o}, o)$
    \item $\forall \hat{x} \mapsto b(\hat{o}, \hat{p}) \in L.~L(z) = \texttt{null} \lor sep(H, \hat{o}, L(z))$
    \end{enumerate}
  \item Case: $openbox(H, \hat{o}, F, FS)$ for some $\hat{o} \in
    dom(H)$. Then by Lemma~\ref{lem:lemma3}, $a = \texttt{ocap}$.
  \item By 5.d) and 13., $globalOcapSep(H, F)$
  \item By 4.c), 13., and F-ok
    \begin{enumerate}[label=(\alph*)]
    \item $boxSep(H', F')$
    \item $boxObjSep(H', F')$
    \item $boxOcap(H', F')$
    \item $globalOcapSep(H', F')$
    \item $fieldUniqueness(H', F')$
    \end{enumerate}
  \item By 1.c) and FS-ok
    \begin{enumerate}[label=(\alph*)]
    \item $H ~;~ a \vdash F \ok$
    \item $H ~;~ c \vdash FS \ok$
    \item $a = \begin{cases}
      \texttt{ocap} & \text{if } c = \texttt{ocap} \lor l = \epsilon \\
      \epsilon      & \text{otherwise} \\
    \end{cases}$
    \item $boxSep(H, F, FS)$
    \item $uniqueOpenBox(H, F, FS)$
    \item $openBoxPropagation(H, F, FS)$
    \end{enumerate}
  \item We show $H' ~;~ c \vdash FS \ok$ by induction on the size of $FS$. Let $FS = G \circ \epsilon$. By 16.b) and SingFS-ok, $H ~;~ c \vdash G \ok$
  \item By 17. and F-ok
    \begin{enumerate}[label=(\alph*)]
    \item $boxSep(H, G)$
    \item $boxObjSep(H, G)$
    \item $boxOcap(H, G)$
    \item $c = \texttt{ocap} \implies globalOcapSep(H, G)$
    \item $fieldUniqueness(H, G)$
    \end{enumerate}
  \item By 1.j), 13., and 16.e), $boxSep(H', G)$
  \item Assume $l \neq \epsilon$. Then by 13., 16.f), and def.~\ref{def:open-box-propag1}, $openbox(H, \hat{o}, G, \epsilon)$. Contradiction. Therefore, $l = \epsilon$.
    \item By 1.j), 20., $\twoheadrightarrow$, and E-Open, $G = \pframe
      {L_G} {t_G} {P_G} m \implies \forall \hat{x} \mapsto \hat{o} \in
      L_G.~sep(H', o, \hat{o}) \land (L(z) = \texttt{null} \lor
      sep(H', L(z), \hat{o}))$
    \item By 1.j), 18.b), and 21., $boxObjSep(H', G)$
    \item By 1.j), 13., 18.c), 20., $\twoheadrightarrow$, and E-Open, $boxOcap(H', G)$
    \item By 1.j), 13., 18.d), 20., $\twoheadrightarrow$, and E-Open, $c = \texttt{ocap} \implies globalOcapSep(H', G)$
    \item By 1.j), 15.e), 18.e), and def.~\ref{def:field-uniqueness}, \\ $fieldUniqueness(H', G)$
    \item By 19., 22., 23., 24., 25., and F-ok, $H' ~;~ c \vdash G \ok$
    \item By 26. and SingFS-ok, $H' ~;~ c \vdash FS \ok$
    \item Let $FS = G \circ GS$
    \item If $l = \epsilon$ then $H' ~;~ c \vdash G \ok$ as before.
    \item Let $l = v \neq \epsilon$
    \item By 13., 16.f), 30., and def.~\ref{def:open-box-propag1}, $openbox(H, \hat{o}, G, GS)$
    \item By 1.j), 13., and 16.e), $boxSep(H', G)$
    \item By 1.j), 5.b), 18.b), 31., and def.~\ref{def:box-obj-sep}, $boxObjSep(H', G)$
    \item By 1.j), 5.b), 13., 18.c), and 31., $boxOcap(H', G)$
    \item By 1.j), 14., and 18.d), \\ $c = \texttt{ocap} \implies globalOcapSep(H', G)$
    \item By 1.j), 15.e), and 18.e), $fieldUniqueness(H', G)$
    \item By 32., 33., 34., 35., 36., and F-ok, $H' ~;~ c \vdash G \ok$
    \item By 29. and 37., $H' ~;~ c \vdash G \ok$
    \item By 16.b) and FS-ok
      \begin{enumerate}[label=(\alph*)]
      \item $H ~;~ c \vdash G \ok$
      \item $H ~;~ d \vdash GS \ok$
      \item $c = \begin{cases}
        \texttt{ocap} & \text{if } d = \texttt{ocap} \lor label(G) = \epsilon \\
        \epsilon      & \text{otherwise} \\
      \end{cases}$
      \item $boxSep(H, G, GS)$
      \item $uniqueOpenBox(H, G, GS)$
      \item $openBoxPropagation(H, G, GS)$
      \end{enumerate}
    \item By 1.h-g), 13., 16.e), 31., and 33., $boxSep(H', G, GS)$
    \item By 1.h-g), 13., 16.e), 31., and 39.e), \\ $uniqueOpenBox(H', G, GS)$
    \item By 1.h-g), 5.b), 13., 31., and 39.f), \\ $openBoxPropagation(H', G, GS)$
    \item By 38., 39.c), 40., 41., 42., and the IH, $H' ~;~ c \vdash FS \ok$
    \item The case $\lnot \exists \hat{o} \in dom(H).~openbox(H, \hat{o}, F, FS)$ follows analogously to 13.-43.
    \item By 1.f-j), 5.b), and 16.d), $boxSep(H', F', FS)$
    \item By 1.f-j), 5.b), and 16.e), $uniqueOpenBox(H', F', FS)$
    \item By 1.f-j), 5.b), and 16.e-f), \\ $openBoxPropagation(H', F', FS)$
    \item By 4.c), 16.c), 43., 44., 45., 46., 47., and FS-ok, $H' ~;~ a \vdash F' \circ FS \ok$
    \item By 1.i-j), 2.b), WF-Env, and WF-Var, $l = \epsilon \implies H' \vdash FS \land l = w \implies H' \vdash^{\sigma}_w FS$
    \item By 4.b), 49., T-FS-A, and T-FS-NA, $H' \vdash F' \circ FS$
    \item 4.a), 48., and 50. conclude this case.

  \end{enumerate}

\end{itemize}

\end{proof}

\begin{lemma}\label{lem:lemma3}
If $\vdash H \typ \star$ then:

If $(H \vdash F \circ FS \lor H \vdash^{\sigma}_x F \circ FS)$ and $H
~;~ a \vdash F \circ FS \ok$ then $openbox(H, \hat{o}, F, FS) \implies
a = \texttt{ocap}$
\end{lemma}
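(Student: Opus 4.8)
The plan is to prove the statement by induction on the structure of the frame stack $F \circ FS$, exploiting the interplay between the effect side condition of \textsc{FS-ok} and the definition of $openbox$. The guiding observation is that $openbox(H, \hat{o}, F, FS)$ requires $boxRoot(\hat{o}, FS)$, \ie the box reference $b(\hat{o}, p)$ (with its permission available) must reside in the \emph{tail} $FS$, while only a non-box alias into $\hat{o}$ lives in the top frame $F$. At the same time, \textsc{FS-ok} forces the effect of $F^l \circ FS$ to be \texttt{ocap} whenever the top-frame annotation satisfies $l = \epsilon$ or the tail's effect is already \texttt{ocap}. The proof essentially shows that an open box is propagated downward through activation frames, via the $openBoxPropagation$ invariant, until it reaches the $\epsilon$-annotated frame in which it was opened, and it is this $\epsilon$ annotation that pins the effect to \texttt{ocap} through the \textsc{FS-ok} side condition.

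For the base case, where the stack is $F \circ \epsilon$ and typed by \textsc{SingFS-ok}, I would note that $boxRoot(\hat{o}, \epsilon)$ is underivable, so $openbox(H, \hat{o}, F, \epsilon)$ is false and the implication holds vacuously. For the inductive step the stack has the form $F^l \circ FS$ with $FS = G \circ GS$ and is typed by \textsc{FS-ok}, and I would case-split on $l$. If $l = \epsilon$, the \textsc{FS-ok} side condition immediately yields that the combined effect is \texttt{ocap}, with no recursion needed. If $l \neq \epsilon$, I would apply the $openBoxPropagation(H, F^l, FS)$ premise of \textsc{FS-ok} to the assumed $openbox(H, \hat{o}, F, FS)$, obtaining $openbox(H, \hat{o}, G, GS)$; I would then invoke the induction hypothesis on the strictly smaller tail $G \circ GS$ (whose well-formedness $H ~;~ a \vdash FS \ok$ is a literal premise of \textsc{FS-ok}, and whose typing is delivered in the annotated form $H \vdash^{\tau}_x FS$ by \textsc{T-FS-A}/\textsc{T-FS-A2}, matching the disjunctive hypothesis of the lemma). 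Since its antecedent is now established, the induction hypothesis gives that the tail effect equals \texttt{ocap}, and the \textsc{FS-ok} side condition propagates this to the combined effect.

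I expect the only point requiring care to be the confirmation that the recursive invocation is legitimate, rather than any deep argument. The two typing hypotheses of the lemma must survive restriction to $FS$: the well-formedness premise is read directly off \textsc{FS-ok}, and since the recursion is entered only when $l \neq \epsilon$, the typing derivation of $F \circ FS$ must be \textsc{T-FS-A} or \textsc{T-FS-A2}, both of which type $FS$ in exactly the annotated form the lemma permits. A minor subtlety arises at the bottom of the recursion: if $l \neq \epsilon$ while $GS = \epsilon$, then $openBoxPropagation$ would force the unsatisfiable $openbox(H, \hat{o}, G, \epsilon)$, so this configuration simply cannot occur under the assumption $openbox(H, \hat{o}, F, FS)$, and the case is discharged vacuously. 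No genuine obstacle remains once these premise extractions are in place.
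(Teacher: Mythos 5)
Your proposal is correct and follows essentially the same route as the paper's proof: induction on the stack, using the \textsc{FS-ok} side condition to pin the effect when the annotation is $\epsilon$, $openBoxPropagation$ to push $openbox$ into the tail, \textsc{T-FS-A}/\textsc{T-FS-A2} to supply the annotated typing of the tail for the induction hypothesis, and the underivability of $boxRoot(\hat{o}, \epsilon)$ to obtain the contradiction at the bottom. The only (cosmetic) difference is that you place the base case at $FS = \epsilon$ (vacuous) and discharge the contradiction inside the inductive step, whereas the paper starts the induction at $FS = G \circ \epsilon$ and discharges the same contradiction there.
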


\begin{proof}
  By induction on the size of $FS$.
  \begin{enumerate}
  \item Let $FS = G \circ \epsilon$. By the assumptions
    \begin{enumerate}[label=(\alph*)]
      \item $\vdash H \typ \star$
      \item $H \vdash F \circ FS$
      \item $H ~;~ a \vdash F \circ FS \ok$
      \item $openbox(H, \hat{o}, F, FS)$
    \end{enumerate}
  \item By 1.c) and FS-ok
    \begin{enumerate}[label=(\alph*)]
      \item $H ~;~ a \vdash F \ok$
      \item $H ~;~ b \vdash G \circ \epsilon \ok$
      \item $a = \begin{cases}
        \texttt{ocap} & \text{if } b = \texttt{ocap} \lor label(F) = \epsilon \\
        \epsilon      & \text{otherwise} \\
      \end{cases}$
      \item $uniqueOpenBox(H, F, FS)$
      \item $openBoxPropagation(H, F, FS)$
    \end{enumerate}
  \item Assume $label(F) \neq \epsilon$. Then by 1.d) and 2.e), \\ $openbox(H, \hat{o}, G, \epsilon)$. Contradiction. Therefore, $label(F) = \epsilon$ and by 2.c), $a = \texttt{ocap}$.
  \item Let $FS = G \circ GS$. By the assumptions
    \begin{enumerate}[label=(\alph*)]
      \item $\vdash H \typ \star$
      \item $H \vdash F \circ FS$
      \item $H ~;~ a \vdash F \circ FS \ok$
      \item $openbox(H, \hat{o}, F, FS)$
    \end{enumerate}
  \item By 4.c) and FS-ok
    \begin{enumerate}[label=(\alph*)]
      \item $H ~;~ a \vdash F \ok$
      \item $H ~;~ b \vdash G \circ GS \ok$
      \item $a = \begin{cases}
        \texttt{ocap} & \text{if } b = \texttt{ocap} \lor label(F) = \epsilon \\
        \epsilon      & \text{otherwise} \\
      \end{cases}$
      \item $uniqueOpenBox(H, F, FS)$
      \item $openBoxPropagation(H, F, FS)$
    \end{enumerate}
  \item Assume $label(F) = x \neq \epsilon$. Then by 4.d) and 5.e), $openbox(H, \hat{o}, G, GS)$.
  \item By 4.b), 6., and T-FS-A
    \begin{enumerate}[label=(\alph*)]
      \item $H \vdash F^x \typ \sigma$
      \item $H \vdash^{\sigma}_x G \circ GS$
    \end{enumerate}
  \item By 4.a), 7.b), 5.b), 6., and the IH, $b = \texttt{ocap}$
  \item By 5.c) and 8., $a = \texttt{ocap}$

  \end{enumerate}
\end{proof}


\subsection{Proof of Theorem~\ref{thm:core2-progress}}
\label{app:core2-progress-proof}




\begin{lemma}\label{lem:subclass-field}
  If \(C' \sub C \) and \(f \in fields(C)\), then
  \begin{enumerate}
  \item \(f \in fields(C')\)
  \item \(ftype(C, f) = ftype(C', f)\)
  \end{enumerate}
\end{lemma}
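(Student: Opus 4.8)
The plan is to prove parts (1) and (2) together by induction on the derivation of $C' \sub C$. The first observation is that $f \in fields(C)$ forces $C$ to be a class type, and by the naming convention $C'$ is a class type as well; consequently the derivation of $C' \sub C$ can only end in one of the three FJ-style rules for class subtyping, \textsc{$\sub$-Refl}, \textsc{$\sub$-Trans}, and \textsc{$\sub$-Ext}, since \textsc{$\sub$-Box}, \textsc{$\sub$-Null}, and \textsc{$\sub$-Bot} never relate two class types. This reduces the argument to three cases.

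The \textsc{$\sub$-Refl} case is immediate ($C' = C$). For \textsc{$\sub$-Trans}, with $C' \sub C'' \sub C$, I would apply the induction hypothesis twice: first to the subderivation $C'' \sub C$ together with $f \in fields(C)$, obtaining $f \in fields(C'')$ and $ftype(C, f) = ftype(C'', f)$; then to $C' \sub C''$ together with $f \in fields(C'')$, obtaining $f \in fields(C')$ and $ftype(C'', f) = ftype(C', f)$. Chaining the two equalities yields $ftype(C, f) = ftype(C', f)$.

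The one-step \textsc{$\sub$-Ext} case is where the real work lies. Here $C' \sub C$ because $C'$ is declared $\texttt{class}~C'~\texttt{extends}~C~\{\seq{fd'}~\seq{md'}\}$, and I would invoke the standard definition of $fields$ from FJ~\cite{Igarashi01}, under which $fields(C') = fields(C), \seq{fd'}$: a class inherits every field of its superclass and then appends its own declarations. Part (1) follows at once, since $f \in fields(C)$ gives $f \in fields(C')$. For part (2) the decisive ingredient is well-formedness: rule \textsc{WF-Class} requires that each locally declared field $\texttt{var}~g \typ \sigma \in \seq{fd'}$ satisfy $g \notin fields(C)$, so no local declaration can shadow the inherited $f$. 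Hence $f$'s binding in $fields(C')$ is exactly the one inherited from $C$, and therefore $ftype(C', f) = ftype(C, f)$.

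The main obstacle --- insofar as there is one --- is part (2), which rests entirely on the no-overriding restriction of \textsc{WF-Class}; without it a subclass could redeclare $f$ at a different type and break the equality. The remaining reasoning is routine, following from the monotonicity of $fields$ along the subclass relation and from the fact that $ftype(C, f)$ merely reads off the unique binding recorded for $f$ in $fields(C)$. A minor technical point I would flag is that the lemma tacitly assumes $C'$ is a genuine class type (rather than $\texttt{Null}$), which is consistent with how it is applied in the preservation and progress proofs, where $C'$ always arises as the runtime type of an allocated object.
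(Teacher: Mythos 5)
Your proposal is correct and matches the paper's own argument, which is stated in one line as ``Directly by \(\sub\)-Ext and WF-Class'': the essential content is exactly your \textsc{$\sub$-Ext} case, where inherited fields come from $fields(C') = fields(C),\seq{fd'}$ and the no-shadowing condition $f \notin fields(D)$ of \textsc{WF-Class} forces the field types to agree. Your explicit induction handling reflexivity and transitivity simply spells out what the paper leaves implicit.
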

\begin{proof}
  Directly by \(\sub\)-Ext and WF-Class.
\end{proof}


\begin{theoremnonum}[Progress]
  If $\vdash H \typ \star$ then:

  If $H \vdash FS$ and $H ~;~ a \vdash FS \ok$ then either
  $\fsreduce H {FS} {H'} {FS'}$ or $FS = \pframe L x P l \circ \epsilon$ or $FS = F \circ GS$
  where
  \begin{itemize}
  \item $F = \pframe L {\texttt{let}~x = t~\texttt{in}~t'} P l$,
        $t \in \{ y.f, y.f = z, y.m(z), y.\texttt{open}~\{ z \Rightarrow t'' \} \}$,
        and $L(y) = \texttt{null}$; or
  \item $F = \pframe L {\texttt{capture}(x.f, y)~\{ z \Rightarrow t \}} P l$ where $L(x) = \texttt{null} \wedge L(y) = \texttt{null}$; or
  \item $F = \pframe L {\texttt{swap}(x.f, y)~\{ z \Rightarrow t \}} P l$ where $L(x) = \texttt{null} \wedge L(y) = \texttt{null}$.
  \end{itemize}

\end{theoremnonum}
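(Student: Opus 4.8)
The plan is to prove progress by induction on the derivation of $H \vdash FS$, which amounts to a structural case analysis on $FS$ whose top frame $F = \pframe L t P l$ is guaranteed well-typed by the frame-stack typing rules (\textsc{T-FS-*}). The empty stack $FS = \epsilon$ is a degenerate base case outside the intended domain of the statement, whose conclusions all presuppose at least one frame. For a non-empty stack, inverting the frame typing (\textsc{T-Frame1}/\textsc{T-Frame2}) yields a context $\Gamma$ with $\Gamma ~;~ a \vdash t \typ \sigma$, $H \vdash \Gamma ; L$, and $\vdash \Gamma ; L ; P$; I then case-split on the syntactic form of the top term $t$, which by the grammar is a variable, a let binding $\texttt{let}~x = e~\texttt{in}~t'$, or a continuation term (\texttt{box}, \texttt{capture}, or \texttt{swap}).

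First I would dispatch the non-permission cases. If $t = x$ is a variable, then $x \in dom(\Gamma) \subseteq dom(L)$, so $L(x)$ is defined; if the tail is empty we are in the terminal form $\pframe L x P l \circ \epsilon$, and otherwise \textsc{E-Return1} or \textsc{E-Return2} fires depending on whether $l = \epsilon$. If $t = \texttt{let}~x = e~\texttt{in}~t'$, I case-split on $e$: \texttt{null}, a variable, and $\texttt{new}~C$ always step (\textsc{E-Null}, \textsc{E-Var}, \textsc{E-New}), using $dom(\Gamma) \subseteq dom(L)$ to know the referenced variables are bound. For $e \in \{ y.f, y.f = z, y.m(z) \}$, well-typedness forces $y$ to have a class type $C$, so by \textsc{WF-Var} the value $L(y)$ is either \texttt{null}---yielding exactly the listed null-stuck configuration---or an object reference $o$ with $typeof(H, o) \sub C$. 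In that latter case, well-typedness of the heap guarantees the selected field is present, and for invocation the canonical-forms fact that $H(o) = \obj {C'} {FM}$ with $C' \sub C$, together with $mtype(C, m)$ being defined, yields that $mbody(C', m)$ is defined, so \textsc{E-Select}/\textsc{E-Assign}/\textsc{E-Invoke} applies.

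The crux is the interface between \emph{static} and \emph{dynamic} permissions, which governs the remaining cases: $y.\texttt{open}$, invocation with a box argument, \texttt{capture}, and \texttt{swap}. Here the relevant operands have a guarded type $\GuaT Q C$ with $\Perm Q \in \Gamma$ (from \textsc{T-Open}, \textsc{T-Invoke}, \textsc{T-Capture}, \textsc{T-Swap}). By \textsc{WF-Var} such an operand is either \texttt{null} or a box reference $b(o, p)$, and the key step is to invoke \textsc{WF-Perm} (via $\vdash \Gamma ; L ; P$): whenever $L(x) = b(o, p)$ and $\Perm Q \in \Gamma$, the injective witness $\gamma$ gives $p = \gamma(Q) \in P$. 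Thus a box-valued operand always carries an \emph{available} permission, so the dynamic side conditions $p \in P$ of \textsc{E-Open}, \textsc{E-Invoke}, \textsc{E-Capture}, and \textsc{E-Swap} are automatically met; reduction is therefore never stuck for want of a permission, as noted in the remark following the theorem. The \texttt{box} continuation always steps (\textsc{E-Box} allocates a fresh object and a fresh permission); for \texttt{open}, a \texttt{null} receiver is the listed stuck case and a box receiver steps; for \texttt{capture}/\texttt{swap}, when both operands are boxes the rule fires, leaving the null operands as stuck configurations.

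I expect the main obstacle to be this permission bridge together with the precise characterization of the \texttt{capture}/\texttt{swap} stuck cases. The careful point is that \textsc{WF-Perm} only \emph{converts} a box-valued operand into an available permission---it does not by itself forbid a guarded-typed variable from being \texttt{null}---so the argument must match each dynamic side condition against what \textsc{WF-Var} and \textsc{WF-Perm} jointly guarantee, and then confirm that the only residual stuck configurations are exactly the null-dereference forms enumerated in the statement. Establishing that the two-operand cases contribute no stuck configuration beyond those listed is the most delicate bookkeeping, and is where I would spend the most care.
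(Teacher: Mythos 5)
Your proposal is correct and follows essentially the same route as the paper's proof: a case analysis on the shape of the top frame's term, using \textsc{WF-Env}/\textsc{WF-Var} and heap well-typedness (plus the subclass-field lemma) for the class-typed cases, and discharging the dynamic side conditions $p \in P$ via the $\vdash \Gamma ; L ; P$ judgement (\textsc{WF-Perm}'s injective $\gamma$), exactly as the paper does in its \texttt{capture} case, with the remaining permission cases declared analogous. The only cosmetic difference is that the paper obtains non-emptiness of $FS$ directly from $H ~;~ a \vdash FS \ok$ rather than treating the empty stack as outside the statement's scope.
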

\begin{proof}
  
\begin{itemize}
\item [-]
  By \(H ; a \vdash FS \ok\), \(FS = F \circ FS'\).
\item [-] Define \(\pframe L t P l := F\).

\item [-]
By induction on the structure of \(t\),

\item [-]  Case \(t=x\)
  \begin{enumerate}
  \item Assume \(FS' = F' \circ FS''\); otherwise the theorem conclusion applies.

  \item By the assumptions,
    \(H \vdash FS\).

  \item By 2., T-FS-NA and T-FS-A,
    \begin{enumerate}[label=(\alph*)]
    \item \(\Gamma ~;~ a \vdash x \typ \sigma \)
    \item \(H \vdash \Gamma ; L\)
    \end{enumerate}
    
  \item By 3., WF-Env, \(dom(\Gamma) \subseteq dom(L)\).
    
  \item By T-Var, \(\Gamma (x) = \sigma \wedge x \in dom(\Gamma)\).
    
  \item By 4. and 5., \(x \in dom(L)\).
    
  \item Define 
    \begin{enumerate}[label=(\alph*)]
    \item \(\pframe {L'} {t'} {P'} {l'} := F'\)
    \item \(L'' := \begin{cases}
        L'[y \mapsto L(x)] & \text{ if } l=y\\
        L' & \text{ otherwise }
      \end{cases}\), which is well-defined by 6.
    \item \(F'' := \pframe {L''} {t'} {P'} {l'}\)
    \end{enumerate}

  \item By E-Return1, E-Return2, 
    \(\freduce H F H {F'}\)
  \item By 8., \(\fsreduce H {F \circ F' \circ FS''} H {F'' \circ FS''}\)

  \item 9. concludes this case.
    
  \end{enumerate}
  
\item [-] Case \(t=\texttt{let}~x = \texttt{null} ~ \texttt{in}~t'\)
  \begin{enumerate}
  \item By E-Null, 
    \(\freduce H F H {\pframe {L[x\mapsto \texttt{null}]} t P l}\)
  \item By 1., E-Frame
    \(\fsreduce H {F \circ FS'} H {{\pframe {L[x\mapsto \texttt{null}]} t P l} \circ {FS'}}\)
  \item 2. concludes this case.
  \end{enumerate}

\item [-] Case \(t=\texttt{let}~x = \texttt{new}~C ~ \texttt{in }~t'\)
  \begin{enumerate}
  \item Define
    \begin{enumerate}[label=(\alph*)]
    \item \(o\) fresh
    \item \(\seq{f} := fields(C)\)
    \item \(H' := H [o \mapsto \obj C {\seq {f \mapsto \texttt{null}}}]\)
    \item \(L' := L[x \mapsto o]\)
    \item \(F' := \pframe{L'} {t'} P l\)
    \end{enumerate}
    
  \item By E-New, 
    \(\freduce H F {H'} {F'}\)
  \item By 1., E-Frame
    \(\fsreduce H {F \circ FS'} {H'} {{F'} \circ {FS'}}\)
  \item 3. concludes this case.
  \end{enumerate}

\item [-] Case \(t=\texttt{box}[C]\{x \Rightarrow t'\}\)
  \begin{enumerate}
  \item Define
    \begin{enumerate}[label=(\alph*)]
    \item \(o\) fresh
    \item \(\seq{f} := fields(C)\)
    \item \(H' := H [o \mapsto \obj C {\seq {f \mapsto \texttt{null}}}]\)
    \item \(p\) fresh
    \item \(L' := L[x \mapsto b(o, p)]\)
    \item \(P' := P \cup \{p\}\)
    \item \(F' := \pframe{L'} {t'} {P'} \epsilon\)
    \end{enumerate}
    
  \item By E-Box,
    \(\fsreduce H {F \circ FS'} {H'} {{F'} \circ \epsilon}\)
  \item 2. concludes this case.
  \end{enumerate}

\item [-] Case \(t={\texttt{let}~x = y~\texttt{in}~t'}\)
  \begin{enumerate}
  \item By the assumption
    \begin{enumerate} [label=(\alph*)]
    \item \(\vdash H \typ \star \)
    \item \(H \vdash F \circ FS'\)
    \end{enumerate}

  \item By 1.b), T-FS-NA and T-FS-A, 
    \(H \vdash F \typ \sigma\)
    
  \item By 2., T-Frame1,
    \begin{enumerate} [label=(\alph*)]
    \item \(\Gamma ~;~ a \vdash \texttt{let}~ x = y ~ \texttt{in} ~ t' \typ \sigma\)
    \item \(H \proves \Gamma ~;~ L\)
    \end{enumerate}

  \item By 3.b), WF-Env, \(dom(\Gamma) \subseteq dom(L)\)

  \item By 4.a), T-Var, T-Let, \(y \in dom(\Gamma)\)
    
  \item By 4, 5, \(y \in dom(L)\), and \(L(y)\) is defined.

  \item Define 
    \(L' := L[x \mapsto L(y)]\)

  \item 
    By E-Var, 7.,  \(\freduce H F H {\pframe {L'} {t'} P l}\)

  \item By 8., E-Frame, \(\fsreduce H {F \circ FS'} H {F' \circ FS'}\)

  \item 9. concludes this case.
    
  \end{enumerate}

\item [-] Case \(t={\texttt{let}~x = y.f~\texttt{in}~t'}\)

\begin{enumerate}
  \item By the assumption
    \begin{enumerate} [label=(\alph*)]
    \item \(\vdash H \typ \star \)
    \item \(H \vdash F \circ FS'\)
    \end{enumerate}

  \item By 1.b), T-FS-NA and T-FS-A,
    \(H \vdash F \typ \sigma\)
    
  \item By 2., T-Frame1,

    \begin{enumerate} [label=(\alph*)]
    \item \(\Gamma ~;~ a \vdash {\texttt{let}~x = y.f~\texttt{in}~t'} \typ \sigma \)
    \item \(H \vdash \Gamma ; L\)
    \end{enumerate}

  \item By 3.a), T-Let, T-Select, T-Var,
    \begin{enumerate} [label=(\alph*)]
    \item \(\Gamma(y) = C\)
    \item \(ftype(C, f) = D\)
    \end{enumerate}

  \item By 3.b), WF-Env,
    \begin{enumerate} [label=(\alph*)]
    \item \(dom(\Gamma) \subseteq dom(L)\)
    \item \(\forall x \in dom(\Gamma), H \vdash \Gamma ; L ; x\)
    \end{enumerate}

  \item By 4.a)., 5.a), \(y \in dom(L)\)

  \item By 5.b),  WF-Var, 4.a), and assuming \(L(y) = \texttt{null}\) 
    (otherwise the theorem conclusion holds),
    \begin{enumerate} [label=(\alph*)]
    \item \(L(y) = o\)
    \item \(typeof(H, o) \sub C\)
    \end{enumerate}

  \item By 4.b),
    \(f \in fields(C)\)

  \item By 7.b), \(o \in dom(H)\).

  \item Define \(\obj {C'} {FM} := H(o)\) (which is defined by 9.)
    
  \item By 1.a), 10, and Definition~\ref{def:well-typed-heap}, 
    \(dom(FM) = fields(C')\)
    
  \item By 10, \(typeof(H, o) = C'\)

  \item By  7.b), 11., 8., and Lemma~\ref{lem:subclass-field},
    \(f \in dom(FM)\).
  
  \item Define 
    \begin{enumerate} [label=(\alph*)]
    \item \(L' := L[x \mapsto FM(f)]\) (defined by 13.)
    \item \(F' := \pframe {L'} {t'} P l\)
    \end{enumerate}

  \item By 14.a-b), E-Select, \(\freduce H F H {F'}\)

  \item By 15, E-Frame, \(\fsreduce H {F \circ FS'} H {F'\circ FS}\)
  \item 16. concludes the proof.

  \end{enumerate}

\item [-] Case \(t = {\texttt{capture}(x.f, y) \{ z \Rightarrow t'}\}\)
  \begin{enumerate}

  \item By assumption
    \begin{enumerate} [label=(\alph*)]
    \item \(\vdash H \typ \star \)
    \item \(H \vdash F \circ FS'\)
    \item \(H~;~a \vdash FS \ok\)
    \end{enumerate}

  \item 
    By 1.b), T-FS-NA, and T-FS-A,
    \begin{enumerate} [label=(\alph*)]
    \item \(\Gamma ~;~ a \vdash {\texttt{capture}(x.f, y) \{ z \Rightarrow t'}
      \typ \sigma\)
    \item \(H \vdash \Gamma; L\)
    \item \(H \vdash \Gamma ; L; P\)
    \end{enumerate}

  \item By 2.c), \(\forall x \in dom(\Gamma), 
    L(x) = b(o, p) \wedge 
    \Perm Q \in \Gamma \Longrightarrow p=\gamma(Q) \in P
    \)
    
  \item By 2.a), T-capture, 
    \begin{enumerate} [label=(\alph*)]
    \item \(\Gamma ~;~a \vdash x \typ \GuaT Q C\)
    \item \(\Gamma ~;~a \vdash y \typ \GuaT {Q'} D\)
    \item \(\{\Perm Q, \Perm Q'\} \subseteq \Gamma\)
    \item \(D \sub ftype(C, f)\)
    \end{enumerate}

  \item By 2.b), \begin{enumerate} [label=(\alph*)]
    \item \(dom(\Gamma) \subseteq dom(L)\)
    \item \(\forall x \in dom(\Gamma), H \vdash \Gamma ; L ; x\)
    \end{enumerate}


  \item By 5.b), 4.a-b), assuming \(L(x) \neq \texttt{null} \wedge
    L(y) \neq \texttt{null}\) (otherwise the theorem conclusion holds),
    \begin{enumerate} [label=(\alph*)]
    \item \(L(x) = b(o, p)\)
    \item \(typeof(H, o) \leq C\)

    \item \(L(y) = b(o', p')\)
    \item \(typeof(H, o') \leq D\)
    
    \end{enumerate}

  \item By 4.a), 4.b), 4.c), 7.a), 7.c), 3,
    \begin{enumerate} [label=(\alph*)]
    \item \(p \in P\)
    \item \(p' \in P\)
    \end{enumerate}

  \item Define \(\obj {C'} {FM} := H(o)\)
    
  \item Define 
    \begin{enumerate} [label=(\alph*)]
    \item \(L' := L[z \mapsto L(x)]\)
    \item \(H' := H[o \mapsto \obj {C'} {FM[f \mapsto o']}]\)
    \item \(P' := P \setminus \{p'\}\)
    \item \(F' := \pframe {L'} {t'} {P'} {\epsilon}\)
    \end{enumerate}
    
  \item By E-Capture, \(\fsreduce H {F \circ FS'}
    {H'} {F' \circ \epsilon}
    \)
  \item 10. concludes this case

  \end{enumerate}

\item [-] Cases \(t = \texttt{swap}(x.f, y) \{ z \Rightarrow t'\}\),\\
  \(\texttt{let}~x = y.f = z ~\texttt{in}~ t'\),\\
  \(\texttt{let}~x = y.m(z) ~\texttt{in} ~t'\)
  are left out and are proven 
  analogously to the \texttt{capture} and \texttt{select}
  cases.
  
\end{itemize}
\end{proof}


\section{Formulation of key definitions and 
theorems in Coq}\label{sec:coq}

Static and dynamic operational semantics of \CLCONE were mechanized in the Coq theorem proving system.
The Coq mechanization closely follows the definitions in this paper and 
is inspired by \cite{Mackay12}. 
\subsubsection*{Partial functions}
An important difference is the use
of explicit partial functions over finite domains defined as follows in Coq:
\begin{lstlisting}[numbers=left, numberstyle=\scriptsize\color{gray}\ttfamily]
Module Partial (T: InfiniteTypeWithDecidableEquality) 
               (B: Typ). 
  Definition B := BT.t.
  Notation A := T.t.

  Record PartFunc : Type :=
   mkPartFunc {
       func: A -> option B;
       domain: list A;
       fDomainCompat :
         forall valT: A,
           ~ In valT domain <-> func valT = None
     }.
  ...
\end{lstlisting}
Partial functions are defined in a module, which is instanciated with different
domain and range types for 
the heap, environment, typing context and field map:
\begin{lstlisting}[numbers=left, numberstyle=\scriptsize\color{gray}\ttfamily]
Inductive FM_Range_type :=
  | FM_null : FM_Range_type
  | FM_ref : Ref_type -> FM_Range_type.

Module FM_typeM <: Typ .
  Definition t := FM_Range_type.
End FM_typeM.

Module p_FM := Partial FieldNameM FM_typeM.

Definition FM_type := p_FM.PartFunc.

Inductive RTObject :=
 | obj : ClassName_type -> FM_type -> RTObject.

Module RTObject_typeM <: Typ .
  Definition t := RTObject.
End RTObject_typeM.

Module p_heap := Partial RefM RTObject_typeM.

Definition Heap_type := p_heap.PartFunc.

Inductive env_Range_type :=
| envNull : env_Range_type
| envRef : Ref_type -> env_Range_type
| envBox : Ref_type -> env_Range_type.

Module env_Range_typeM <: Typ.
  Definition t := env_Range_type.
End env_Range_typeM.

Module p_env := Partial VarNameM env_Range_typeM.

Definition Env_type := p_env.PartFunc.

Inductive typecheck_type :=
| typt_class : ClassName_type -> typecheck_type
| typt_box : ClassName_type -> typecheck_type
| typt_all : typecheck_type.

Inductive effect :=
| eff_ocap : effect
| eff_epsilon : effect.

Module tyM <: Typ.
  Definition t := typecheck_type.
End tyM.

Module p_gamma := Partial VarNameM tyM.

Notation Gamma_type := p_gamma.PartFunc.
\end{lstlisting}
\subsection*{Mutual recursion}
Several definitions in this paper utilize mutual recursion, which is not well supported in Coq.
\subsubsection*{Mutual recursion in syntax of \CLCONE}
The syntax of \CLCONE has mutual recursion between terms \verb|t = let x = e in t|
and expressions \verb|e = x.open{ y => t }|.
We solved this problem by defining a combined inductive type
\verb|ExprOrTerm| and defining predicates \verb|isExpr|, \verb|isTerm| for
use when an expression or term is expected:
\begin{lstlisting} [numbers=left, numberstyle=\scriptsize\color{gray}\ttfamily]
Inductive ExprOrTerm :=
| Null : ExprOrTerm
| Var : VarName_type -> ExprOrTerm
| FieldSelection : VarName_type -> FieldName_type
  -> ExprOrTerm
| FieldAssignment : VarName_type -> FieldName_type
 -> VarName_type -> ExprOrTerm
| MethodInvocation : VarName_type -> MethodName_type
 -> VarName_type -> ExprOrTerm
| New : ClassName_type -> ExprOrTerm
| Box : ClassName_type -> ExprOrTerm
| Open : VarName_type -> VarName_type -> ExprOrTerm
 -> ExprOrTerm
| TLet : VarName_type -> ExprOrTerm -> ExprOrTerm
 -> ExprOrTerm.

Fixpoint isTerm (e: ExprOrTerm) : Prop :=
  match e with
   | Var _ => True
   | TLet _ e t => (fix isExpr (e: ExprOrTerm) : Prop :=
                      match e with
                        | TLet _ _ _ => False
                        | Open _ _ t' => isTerm t'
                        | _ => True
                      end
                   ) e /\ isTerm t
   | _ => False
  end.

Definition isExpr (e: ExprOrTerm) : Prop :=
  match e with
    | TLet _ _ _ => False
    | Open _ _ t => isTerm t
    | _ => True
  end.
\end{lstlisting}

\subsubsection*{Mutual recursion in  T-FS}
The definitions and rules T-FS-NA, T-FS-NA2, T-FS-A, T-FS-A2
employ mutual recursion betwen 
\(H \vdash FS\) and \(H \vdash_x^\tau FS\).
We defined this rule in Coq by combining the two cases into
a single case \(H \vdash^a FS\), where \(a := \epsilon~|~(x, \tau)\):
\begin{lstlisting} [numbers=left, numberstyle=\scriptsize\color{gray}\ttfamily]
Inductive WF_FS : FS_ann_type -> Heap_type -> 
                 list (ann_frame_type) -> Type :=
  
| T_EmpFS : forall H, WF_FS None H nil
                            
| T_FS_NA : forall H F FS sigma,
              WF_Frame H (ann_frame F ann_epsilon) 
                sigma ->
              WF_FS None H FS ->
              WF_FS None H ((ann_frame F ann_epsilon)
                      :: FS)
                    
| T_FS_NA2 : forall H F FS sigma x tau,
               ( H , x , tau ## (ann_frame F 
                   ann_epsilon) @@ sigma ) ->
               WF_FS None H FS ->
               WF_FS (Some (x, tau)) H 
                 ((ann_frame F ann_epsilon) :: FS)
                     
| T_FS_A : forall H F FS x tau ,
              WF_Frame H (ann_frame F (ann_var x)) 
                 tau ->
              WF_FS (Some (x, tau)) H FS ->
              WF_FS None H ((ann_frame F 
                (ann_var x))  :: FS)
| T_FS_A2 : forall H F FS y sigma x tau,
              ( H, y, sigma ## (ann_frame F 
                  (ann_var x)) @@ tau) ->
              WF_FS (Some (x, tau)) H FS ->
              WF_FS (Some (y, sigma)) H 
                ((ann_frame F (ann_var x)) :: FS).
\end{lstlisting}

\subsection*{Reduction and typing}
Reductions and typing rules are defined as follows:

\begin{lstlisting}
Inductive TypeChecks : Gamma_type -> effect ->
         ExprOrTerm -> typecheck_type -> Type :=
| T_Null : forall gamma eff,
      TypeChecks gamma eff Null typt_all
                       
| T_Var : forall gamma eff x sigma,
   p_gamma.func gamma x = Some sigma ->
    TypeChecks gamma eff (Var x) sigma
| T_Field : forall gamma eff x f C,
   forall witn: fldP C f,
     p_gamma.func gamma x = Some (typt_class C) ->
     TypeChecks gamma eff (FieldSelection x f)
      (typt_class (ftypeP C f witn))
| T_Assign : forall gamma eff x f y C D,
   forall witn: fldP C f,
     p_gamma.func gamma y = Some (typt_class C) ->
     TypeChecks gamma eff (FieldSelection x f) 
              (typt_class D) ->
     subtypeP (typt_class C) (typt_class D) ->
     TypeChecks gamma eff (FieldAssignment x f y) 
            (typt_class C)
| T_New : forall gamma C eff,
    TypeChecks gamma eff (New C) (typt_class C)
| T_Open : forall gamma eff x C y t sigma,
    TypeChecks gamma eff (Var x) (typt_box C) ->
    TypeChecks (p_gamma.updatePartFunc 
                  p_gamma.emptyPartFunc
                  y (typt_class C)
               ) eff t sigma ->
    TypeChecks gamma eff (Open x y t) (typt_box C)
| T_Let : forall gamma eff e sigma x tau t,
  TypeChecks gamma eff e sigma ->
  TypeChecks (p_gamma.updatePartFunc gamma x sigma) eff
             t tau ->
  TypeChecks gamma eff (TLet x e t) tau.

 Inductive Reduction_FS : cfg_type -> cfg_type -> Type:=
 | E_StackFrame : forall H H' L L' t t' FS a,
     Reduction_SF ( # H , L , t ! ) ( # H' , L' , t' ! )  ->
     Reduction_FS (H, ( (ann_frame (sframe L t)  a) :: FS) )
                  (H', (ann_frame (sframe L' t') a) :: FS )
 | E_Return1 : forall H L x y F FS envVal,
    p_env.func L x = Some envVal ->
    Reduction_FS (H, 
       (ann_frame (sframe L (Var x)) (ann_var y)) :: F :: FS)
                 (H, updFrame F y envVal :: FS)
 | E_Return2 : forall H F FS,
    Reduction_FS (H, (ann_frame F ann_epsilon) :: FS)
                 (H, FS)
                 
 | E_Open : forall H L x1 x2 y t1 t2 ann FS o,
  isTerm t1 ->
  isTerm t2 ->
  p_env.func L x2 = Some (envBox o) ->
  In o (p_heap.domain H) ->
  Reduction_FS (H, (ann_frame (sframe L
              ( t_let x1 <- (Open x2 y t1) t_in t2 ))
       ann) :: FS)
   (H, (ann_frame (sframe 
       (p_env.emptyPartFunc +++ y --> (envBox o) )  t1)
             ann_epsilon)
    :: (ann_frame (sframe ( L +++ x1 --> (envBox o) ) t2)
                     ann) :: FS).
\end{lstlisting}

\subsection*{Heap properites and invariants}
\(\vdash H \typ \star \) and
\( H \vdash \Gamma, L\) were defined as follows:
\begin{lstlisting} [numbers=left, numberstyle=\scriptsize\color{gray}\ttfamily]
Definition Heap_obj_ok H C FM :=
  forall f o,
  forall f_witn : fldP C f, 
    p_FM.func FM f = Some (FM_ref o) ->
    { o_witn: In o (p_heap.domain H) &
              subclassP (heap_typeof H o o_witn)
                        (ftypeP C f f_witn)
    }.
Definition Heap_dom_ok H : Prop :=
  forall o C FM,
    p_heap.func H o = Some (obj C FM) ->
    fieldsP C (p_FM.domain FM).
Definition Heap_ok H : Prop :=
  forall o C FM,
    p_heap.func H o = Some (obj C FM) ->
    Heap_obj_ok H C FM.
Definition WF_Var (H: Heap_type) (Gamma: Gamma_type)
       (L: Env_type) (x: VarName_type):=
  (p_env.func L x = Some envNull) +
  {C_o |
   match C_o with
     | (C, o) => {witn |
     p_env.func L x = Some (envRef o) /\
     p_gamma.func Gamma x = Some (typt_class C) /\
     subtypeP (typt_class (heap_typeof H o witn)) 
        (typt_class C)
    }
   end} +
  {C_o |
   match C_o with
     | (C, o) => {witn |
      p_env.func L x = Some (envBox o) /\
      p_gamma.func Gamma x = Some (typt_box C) /\
      subtypeP (typt_class (heap_typeof H o witn)) 
   (typt_class C)
     }
   end}.
Definition WF_Env H Gamma L :=
  gamma_env_subset Gamma L *
  forall x sigma,
    p_gamma.func Gamma x = Some sigma ->
    WF_Var H Gamma L x.
  
\end{lstlisting}

\subsection*{Preservation theorems for \CLCONE}
We defined and proved simplified versions of 
Theorem~\ref{thm:core1-preservation}:
\begin{lstlisting}[numbers=left, numberstyle=\scriptsize\color{gray}\ttfamily]
Theorem single_frame_WF_ENV_preservation :
  forall H H' L L' t t' sigma ann,
    forall (heap_dom_ok: Heap_dom_okP H),
    WF_Frame' H (ann_frame (sframe L t) ann) sigma ->
    Heap_okP H ->
    Reduction_SF' ( # H, L, t !) ( # H', L', t' !) ->
    ((WF_Frame' H' (ann_frame (sframe L' t') ann) sigma) *
     (Heap_okP H') *
     (Heap_dom_okP H')
    ).
\end{lstlisting}

\begin{lstlisting} [numbers=left, numberstyle=\scriptsize\color{gray}\ttfamily]
Theorem single_frame_WF_ENV_preservation :
  forall H H' L L' t t' sigma ann,
    forall (heap_dom_ok: Heap_dom_okP H),
    WF_Frame' H (ann_frame (sframe L t) ann) sigma ->
    Heap_okP H ->
    Reduction_SF' ( # H, L, t !) ( # H', L', t' !) ->
    ((WF_Frame' H' (ann_frame (sframe L' t') ann) sigma) *
     (Heap_okP H') *
     (Heap_dom_okP H')
    ).

Notation "[ H ## FS ]" := (WF_FS P None H FS) (at level 0).

Theorem multiple_frame_WF_ENV_preservation :
  forall H H' FS FS',
    Reduction_FS' (H, FS)
                 (H', FS') ->
    Heap_okP H ->
    [ H ## FS ]  ->
    [ H' ## FS' ] * (Heap_okP H').

\end{lstlisting}


\end{document}